\documentclass[11pt]{article}
\usepackage[left=1in, right=1in, top=1in, bottom=1in]{geometry}
\usepackage{CJK}
\usepackage{latexsym,bm}
\usepackage{xypic}
\usepackage{amsmath}
\usepackage{wasysym}
\usepackage{indentfirst}
\usepackage{amssymb}
\usepackage{dsfont}
\usepackage{amsthm}
\begin{document}
\newcommand{\fr}[2]{\frac{\;#1\;}{\;#2\;}}
\newtheorem{theorem}{Theorem}[section]
\newtheorem{lemma}{Lemma}[section]
\newtheorem{proposition}{Proposition}[section]
\newtheorem{corollary}{Corollary}[section]
\newtheorem{conjecture}{Conjecture}[section]
\newtheorem{remark}{Remark}[section]
\newtheorem{definition}{Definition}[section]
\newtheorem{example}{Example}[section]
\newtheorem{notation}{Notation}[section]
\numberwithin{equation}{section}
\newcommand{\Aut}{\mathrm{Aut}\,}
\newcommand{\CSupp}{\mathrm{CSupp}\,}
\newcommand{\Supp}{\mathrm{Supp}\,}
\newcommand{\rank}{\mathrm{rank}\,}
\newcommand{\col}{\mathrm{col}\,}
\newcommand{\len}{\mathrm{len}\,}
\newcommand{\leftlen}{\mathrm{leftlen}\,}
\newcommand{\rightlen}{\mathrm{rightlen}\,}
\newcommand{\length}{\mathrm{length}\,}
\newcommand{\bin}{\mathrm{bin}\,}
\newcommand{\wt}{\mathrm{wt}\,}
\newcommand{\Wt}{\mathrm{Wt}\,}
\newcommand{\diff}{\mathrm{diff}\,}
\newcommand{\lcm}{\mathrm{lcm}\,}
\newcommand{\GL}{\mathrm{GL}\,}
\newcommand{\SJ}{\mathrm{SJ}\,}
\newcommand{\LG}{\mathrm{LG}\,}
\newcommand{\bij}{\mathrm{bij}\,}
\newcommand{\dom}{\mathrm{dom}\,}
\newcommand{\fun}{\mathrm{fun}\,}
\newcommand{\SUPP}{\mathrm{SUPP}\,}
\newcommand{\supp}{\mathrm{supp}\,}
\newcommand{\End}{\mathrm{End}\,}
\newcommand{\Hom}{\mathrm{Hom}\,}
\newcommand{\ran}{\mathrm{ran}\,}
\newcommand{\row}{\mathrm{row}\,}
\newcommand{\Mat}{\mathrm{Mat}\,}
\newcommand{\rk}{\mathrm{rk}\,}
\newcommand{\rs}{\mathrm{rs}\,}
\newcommand{\piv}{\mathrm{piv}\,}
\newcommand{\perm}{\mathrm{perm}\,}
\newcommand{\rsupp}{\mathrm{rsupp}\,}
\newcommand{\inv}{\mathrm{inv}\,}
\newcommand{\orb}{\mathrm{orb}\,}
\newcommand{\id}{\mathrm{id}\,}
\newcommand{\soc}{\mathrm{soc}\,}
\newcommand{\unit}{\mathrm{unit}\,}
\newcommand{\word}{\mathrm{word}\,}

\renewcommand{\thefootnote}{\fnsymbol{footnote}}

\title{$r$-Minimal Poset Codes}
\author{Jianhua Zheng$^1$\,\,\,\,\,\,Yang Xu$^2$ \,\,\,\,\,\, Haibin Kan$^3$\,\,\,\,\,\,Guangyue Han$^4$}

\maketitle

\renewcommand{\thefootnote}{\fnsymbol{footnote}}

\footnotetext{\hspace*{-6mm} \begin{tabular}{@{}r@{}p{16cm}@{}}
&This work has been supported by the National and Local Joint Laboratory of Cyberspace Security Technology (No. NLCS-202512-003).\\
$^1$ & National and Local Joint Laboratory of Cyberspace Security Technology.\\
$^2$ & Shanghai Key Laboratory of Intelligent Information Processing, School of Computer Science, Fudan University,
Shanghai 200433, China.\\
&Shanghai Engineering Research Center of Blockchain, Shanghai 200433, China. {E-mail:xuyyang@fudan.edu.cn}\\
$^3$ & Shanghai Key Laboratory of Intelligent Information Processing, School of Computer Science, Fudan University,
Shanghai 200433, China.\\
&Shanghai Engineering Research Center of Blockchain, Shanghai 200433, China.\\
&Yiwu Research Institute of Fudan University, Yiwu City, Zhejiang 322000, China. {E-mail:hbkan@fudan.edu.cn} \\
$^4$ & Department of Mathematics, Faculty of Science, The University of Hong Kong, Pokfulam Road, Hong Kong, China. {E-mail:ghan@hku.hk} \\
\end{tabular}}

\vskip 3mm

{\hspace*{-6mm}{\bf Abstract}---In this paper, we propose and study $r$-minimal codes with respect to $\mathbf{P}$-support, where $\mathbf{P}=(\Omega,\preccurlyeq_{\mathbf{P}})$ is a poset defined on the coordinate set of the ambient space $\mathbf{H}$. $r$-Minimal $\mathbf{P}$-codes are natural extensions of Hamming metric minimal codes that have been extensively studied in the literature. We characterize $r$-minimal $\mathbf{P}$-codes in terms of the notion so called cutting $r$-blocking maps, which generalizes the well-known equivalence between minimal Hamming metric codes and cutting blocking sets. We also give a necessary and sufficient condition for $r$-minimality in terms of $(\mathbf{P},\omega)$-weight defined on $\mathbf{H}$, where $\omega:\Omega\longrightarrow\mathbb{R}^{+}$ is an arbitrary weight function. This leads to a generalization of the well-known Ashikhmin-Barg criterion for Hamming metric minimal codes. We then prove two existence results for $r$-minimal $\mathbf{P}$-codes, both for general $\mathbf{P}$ and for the special case that $\mathbf{P}$ is a disjoint union of chains. When $\mathbf{P}$ is hierarchical, we characterize $r$-minimal $\mathbf{P}$-codes in terms of $r$-minimal Hamming metric codes. Finally, we characterize cutting $r$-blocking sets induced by hierarchical posets with two levels, which further enables us to answer a question raised in Hyun, Kim, Wu and Yue \cite{28}.

\section{Introduction}
\setlength{\parindent}{2em}
Throughout the paper, let $\mathbb{F}$ be a finite field with $|\mathbb{F}|=q$, $\Omega$ be a nonempty finite set with $|\Omega|=n$, and let
$$\mathbf{H}\triangleq\mathbb{F}^{\Omega}.$$
Any $\mathbb{F}$-subspace $C\leqslant_{\mathbb{F}}\mathbf{H}$ is referred to as a \textit{code}. For any $\alpha\in\mathbf{H}$, the \textit{Hamming support of $\alpha$}, denoted by $\supp(\alpha)$, is defined as
\begin{equation}\supp(\alpha)=\{i\in\Omega\mid\alpha_i\neq0\}.\end{equation}
With respect to Hamming support, a nonzero codeword $\beta\in C$ is said to be minimal in $C$ if there is no nonzero codeword $\alpha\in C$ with $\supp(\alpha)\subsetneqq\supp(\beta)$; moreover, the code $C$ is said to be minimal if all of its nonzero codewords are minimal. In early 1990s, Massey showed in \cite{36} that the minimal codewords in the dual code completely specify the access structure of the secret sharing scheme based on a code. Minimal codes have since been a topic of great interest in coding theory due to their applications in secret sharing, secure two-party computation and constructing decoding algorithms (see, e.g., \cite{5,13,14,17,37,57}). A well-known sufficient condition for minimality has been given in Ashikhmin and Barg \cite{5}, namely, a nonzero code $C$ is minimal provided that
\begin{equation}\frac{v_{min}}{v_{max}}>1-q^{-1},\end{equation}
where $v_{min}$ and $v_{max}$ denote the minimal and maximal Hamming weights of all the nonzero codewords of $C$, respectively. Many classes of minimal codes have since been constructed with the help of the Ashikhmin-Barg condition (1.2) (see, e.g., \cite{13,17,18,20,37,38,46,57,58}). The first infinite family of binary minimal codes that violates the Ashikhmin-Barg condition has been given in Chang and Hyun \cite{14} by using simplicial complexes. Later in \cite{19,25}, Heng, Ding and Zhou have given a necessary and sufficient condition for minimality, namely, a code $C$ is minimal if and only if
\begin{equation}\sum_{c\in\mathbb{F}}|\supp(\alpha+c\beta)|\neq(q-1)|\supp(\beta)|\end{equation}
for any two linearly independent codewords $\alpha,\beta\in C$. With the help of this result, many classes of binary and ternary minimal codes, satisfying or violating the Ashikhmin-Barg condition, have been explored by using Boolean and vectorial functions (see, e.g., \cite{19,25,28,31,32,53}).

Another characterization of minimal codes arises from the combinatorial notion of cutting blocking sets, which have been introduced in Bonini and Borello \cite{10} for constructing minimal codes violating the Ashikhmin-Barg condition. It has been proven independently in Alfarano, Borello and Neri \cite{1} and Tang, Qiu, Liao, Zhou \cite{47} that a code $C$ is minimal if and only if all the columns of a given generator matrix of $C$ form a cutting blocking set. Cutting blocking sets are very powerful tools for studying minimal codes. For example, using cutting blocking sets, Alfarano, Borello, Neri and Ravagnani have given recursive upper bounds for minimal length of minimal codes and explicit constructions of minimal codes in \cite{2}; Alon, Bishnoi, Das and Neri have given explicit constructions of asymptotically good families of minimal codes in \cite{4} (c.f. \cite{9}); H\'{e}ger and Nagy have established upper bounds for minimal length of minimal codes which are linear in both the dimension and the field size in \cite{23}; later in \cite{8}, Bishnoi, D' haeseleer, Gijswijt and Potukuchi have further generalized the result in \cite{23} to more general settings (c.f. \cite{3}).

In this paper, we propose and study $r$-minimal codes with respect to $\mathbf{P}$-support, where $\mathbf{P}=(\Omega,\preccurlyeq_{\mathbf{P}})$ is a poset defined on the coordinate set $\Omega$, and $r\in\mathbb{N}$ stands for $r$-dimensional $\mathbb{F}$-subspaces of a given code. When $\mathbf{P}$ is an anti-chain and $r=1$, $r$-minimal $\mathbf{P}$-codes are exactly minimal Hamming metric codes (see Section 2.1 for more details). In coding theory, weight and metrics induced by poset structures have been a research topic of great interest since poset metric being introduced in Brualdi, Graves and Lawrence \cite{12} (c.f. \cite{40,41}). Various important coding-theoretic properties such as MacWilliams identity (\cite{21,22,29,34,35,44,48,54}), MacWilliams extension property (\cite{6,22,24,34,35,55}), perfect codes and MDS codes (\cite{26,27,43}), group of isometries (\cite{22,30,35,42,43,55}) and generalized weights (\cite{39,51}) have been explored with respect to various weights and metrics induced by poset structures. To the best of our knowledge, other than having been mentioned in \cite{56} briefly, minimal codes have not been explored in the context of poset support yet.

The main contributions of this paper can be summarized as follows.

In Section 3, we study $r$-minimal $\mathbf{P}$-codes where $\mathbf{P}$ is an arbitrary poset. More precisely, in Section 3.1, we give necessary and sufficient conditions for a code to be $r$-minimal in terms of $(r+1)$-dimensional subcodes (Proposition 3.1, Theorem 3.1), and derive several corollaries including a singleton-type bound (Corollary 3.1). In Section 3.2, we characterize $r$-minimal $\mathbf{P}$-codes in terms of cutting $r$-blocking maps with respect to $\mathbf{P}$ (Theorem 3.2), which, with $\mathbf{P}$ set to be an anti-chain, boils down to the equivalence between $r$-minimal codes with respect to Hamming support and cutting $r$-blocking sets (Corollary 3.5), which generalizes the equivalence between minimal codes and cutting blocking sets when $r=1$. In Section 3.3, we give necessary and sufficient conditions for a code to be $r$-minimal in terms of $(\mathbf{P},\omega)$-weight, where $\omega:\Omega\longrightarrow\mathbb{R}^{+}$ is an arbitrary weight function (Proposition 3.5, Theorem 3.3), which lead to generalizations of (1.2) and (1.3). In Section 3.4, we derive a general existence results for $r$-minimal $\mathbf{P}$-codes, which is then applied to the special case that $\mathbf{P}$ is a disjoint union of chains (Theorem 3.4, Corollary 3.8).

In Section 4, we focus on the special case that $\mathbf{P}$ is hierarchical and characterize $r$-minimal $\mathbf{P}$-codes in terms of $r$-minimal codes with respect to Hamming support (Proposition 4.1, Theorem 4.1).

In Section 5, we study several sets induced by a hierarchical poset with two levels, and give necessary and sufficient conditions for these sets to be cutting $r$-blocking sets (Theorems 5.1--5.4). Via the equivalence between cutting $r$-blocking sets and $r$-minimal codes, our results provide necessary and sufficient conditions for codes constructed from these sets to be $r$-minimal with respect to Hamming support, which further enables us to answer a question raised in Hyun, Kim, Wu and Yue \cite{28}.

\section{Preliminaries}
\setlength{\parindent}{2em}
We begin with a few more notations that will be used throughout the rest of the paper. For any $A\subseteq\mathbf{H}$, following \cite{51}, the \textit{Hamming support of $A$}, denoted by $\chi(A)$, is defined as
\begin{equation}\chi(A)=\{i\in\Omega\mid\exists~\alpha\in A~s.t.~\alpha_i\neq0\}.\end{equation}
For any $I\subseteq\Omega$, define
\begin{equation}\delta(I)=\{\beta\in\mathbf{H}\mid\supp(\beta)\subseteq I\}.\end{equation}
For an $\mathbb{F}$-vector space $X$ and any $A\subseteq X$, let $\langle A\rangle_{\mathbb{F}}$ denote the $\mathbb{F}$-subspace of $X$ generated by $A$. For $k\in\mathbb{Z}^{+}$, let $\mathbb{F}^{k}$ and $\mathbb{F}^{[k]}$ denote the sets of all the row vectors and column vectors over $\mathbb{F}$ of length $k$, respectively, and let $\Mat_{k,\Omega}(\mathbb{F})$ denote the set of all the matrices over $\mathbb{F}$ whose rows are indexed by $\{1,\dots,k\}$ and columns are indexed by $\Omega$; moreover, for any $G\in\Mat_{k,\Omega}(\mathbb{F})$, let $\row(G)\subseteq\mathbf{H}$ and $\col(G)\subseteq\mathbb{F}^{[k]}$ denote the sets of all the rows and columns of $G$, respectively. For a code $C\leqslant_{\mathbb{F}}\mathbf{H}$ with $\dim_{\mathbb{F}}(C)=k\geqslant1$, a \textit{generator matrix of $C$} is a matrix $G\in\Mat_{k,\Omega}(\mathbb{F})$ satisfying that
\begin{equation}C=\langle\row(G)\rangle_{\mathbb{F}}=\{\gamma G\mid \gamma\in\mathbb{F}^{k}\}.\end{equation}
For $k\in\mathbb{Z}^{+}$ and $A\subseteq \mathbb{F}^{k}$, let
\begin{equation}A^{\bot}\triangleq\{\beta\in \mathbb{F}^{[k]}\mid \mbox{$\sum_{i=1}^{k}\alpha_i\beta_i=0$ for all $\alpha\in A$}\}.\end{equation}
Finally, for $k\in\mathbb{Z}^{+}$ and $G\in\Mat_{k,\Omega}(\mathbb{F})$, the \textit{column map of $G$} is the map $f:\Omega\longrightarrow\mathbb{F}^{[k]}$ defined as
\begin{equation}\forall~i\in\Omega:\text{$f(i)$ is equal to the $i$-th column of $G$}.\end{equation}

\subsection{$r$-Minimal poset codes and cutting $r$-blocking maps}
\setlength{\parindent}{2em}
Throughout this subsection, let $\mathbf{P}=(\Omega,\preccurlyeq_{\mathbf{P}})$ be a poset. A subset $B\subseteq \Omega$ is referred to as an \textit{ideal} of $\mathbf{P}$ if for any $v\in B$ and $u\in \Omega$, $u\preccurlyeq_{\mathbf{P}}v$ implies that $u\in B$. The set of all the ideals of $\mathbf{P}$ is denoted by $\mathcal{I}(\mathbf{P})$. For $B\subseteq \Omega$, we let $\max_{\mathbf{P}}(B)$ (\textit{resp.}, $\min_{\mathbf{P}}(B)$) denote the set of all the maximal (\textit{resp.}, minimal) elements of $B$, and let
$$\langle B\rangle_{\mathbf{P}}\triangleq\{u\in \Omega\mid \exists~v\in B~s.t.~u\preccurlyeq_{\mathbf{P}}v\};$$
moreover, $B$ is said to be a \textit{chain} in $\mathbf{P}$ if for any $u,v\in B$, either $u\preccurlyeq_{\mathbf{P}}v$ or $v\preccurlyeq_{\mathbf{P}}u$ holds, and $B$ is said to be an \textit{anti-chain} in $\mathbf{P}$ if for any $u,v\in B$, $u\preccurlyeq_{\mathbf{P}}v$ implies $u=v$. For any $u\in \Omega$, we let $\len_{\mathbf{P}}(u)$ denote the largest cardinality of a chain in $\mathbf{P}$ containing $u$ as its greatest element. The set of all the order automorphisms of $\mathbf{P}$ will be denoted by $\Aut(\mathbf{P})$. The \textit{dual poset} of $\mathbf{P}$ is defined as $\mathbf{\overline{P}}=(\Omega,\preccurlyeq_{\mathbf{\overline{P}}})$, where
$$\text{$u\preccurlyeq_{\mathbf{\overline{P}}} v\Longleftrightarrow v\preccurlyeq_{\mathbf{P}}u$ for all $(u,v)\in \Omega\times \Omega$}.$$
In order to define $r$-minimal codes with respect to $\mathbf{P}$, we consider the \textit{$\mathbf{P}$-support} $\langle\chi(A)\rangle_{\mathbf{P}}$ defined for any $A\subseteq\mathbf{H}$.

\setlength{\parindent}{0em}
\begin{definition}
{\bf{(1)}}\,\,Let $C\leqslant_{\mathbb{F}}\mathbf{H}$. For $D\leqslant_{\mathbb{F}}C$, we say that $D$ is $\mathbf{P}$-minimal in $C$ if for any $Q\leqslant_{\mathbb{F}}C$ such that $\dim_{\mathbb{F}}(Q)=\dim_{\mathbb{F}}(D)$, $\langle\chi(Q)\rangle_{\mathbf{P}}\subseteq\langle\chi(D)\rangle_{\mathbf{P}}$, it holds that $\langle\chi(Q)\rangle_{\mathbf{P}}=\langle\chi(D)\rangle_{\mathbf{P}}$. Moreover, for $r\in\mathbb{N}$, we say that $C$ is $r$-minimal with respect to $\mathbf{P}$ if every $r$-dimensional $\mathbb{F}$-subspace of $C$ is $\mathbf{P}$-minimal in $C$.

{\bf{(2)}}\,\,Let $C\leqslant_{\mathbb{F}}\mathbf{H}$. For $D\leqslant_{\mathbb{F}}C$, we say that $D$ is Hamming minimal in $C$ if for any $Q\leqslant_{\mathbb{F}}C$ such that $\dim_{\mathbb{F}}(Q)=\dim_{\mathbb{F}}(D)$, $\chi(Q)\subseteq\chi(D)$, it holds that $\chi(Q)=\chi(D)$. Moreover, for $r\in\mathbb{N}$, we say that $C$ is $r$-minimal with respect to Hamming support if every $r$-dimensional $\mathbb{F}$-subspace of $C$ is Hamming minimal in $C$.
\end{definition}

\begin{remark}
$r$-Minimal codes with respect to Hamming support have also been defined in \cite{9} recently (see [9, Section 2.2]). When $\mathbf{P}$ is an anti-chain, $\mathbf{P}$-support becomes Hamming support, and hence (2) is in fact a special case of (1); moreover, when $r=1$, (2) recovers the definitions for minimal codewords and minimal codes with respect to Hamming support (see, e.g., [5, Section II.A] and [2, Definition 1.3]).
\end{remark}

\setlength{\parindent}{2em}
Next, we recall the definition of cutting $r$-blocking sets.

\setlength{\parindent}{0em}
\begin{definition}([10, Definitions 3.2 and 3.4], [2, Definitions 1.9 and 1.10]) Let $k\in\mathbb{N}$, and let $X$ be a $k$-dimensional $\mathbb{F}$-vector space. For $S\subseteq X$ and $r\in\mathbb{N}$, $S$ is referred to as a cutting $r$-blocking set of $X$ if $\langle S\cap V\rangle_{\mathbb{F}}=V$ for all $V\leqslant_{\mathbb{F}}X$ with $\dim_{\mathbb{F}}(V)=k-r$.
\end{definition}

\setlength{\parindent}{2em}
Now, as an extension of Definition 2.2, we define cutting $r$-blocking maps with respect to $\mathbf{P}$.

\setlength{\parindent}{0em}
\begin{definition}
Let $X$ be a $k$-dimensional $\mathbb{F}$-vector space. For $f:\Omega\longrightarrow X$ and $r\in\mathbb{N}$, we say that $f$ is a cutting $r$-blocking map of $X$ with respect to $\mathbf{P}$ if for any $V\leqslant_{\mathbb{F}}X$ with $\dim_{\mathbb{F}}(V)=k-r$, it holds that
\begin{equation}\langle\{f(i)\mid i\in\Omega,(\forall~j\in\Omega:i\preccurlyeq_{\mathbf{P}}j\Longrightarrow f(j)\in V)\}\rangle_{\mathbb{F}}=V.\end{equation}
\end{definition}

\begin{remark}
If $\mathbf{P}$ is an anti-chain, then $f$ is a cutting $r$-blocking map of $X$ with respect to $\mathbf{P}$ if and only if $\ran(f)=\{f(i)\mid i\in\Omega\}$ is a cutting $r$-blocking set of $X$. It has been shown in \cite{1,47} that a code $C$ is $1$-minimal with respect to Hamming support if and only if $\col(G)$ is  a cutting $1$-blocking set. In Section 3.2, we will generalize this equivalence to $r$-minimal $\mathbf{P}$-codes and cutting $r$-blocking maps with respect to $\mathbf{P}$.
\end{remark}

\setlength{\parindent}{2em}
We illustrate Definitions 2.2 and 2.3 with the following example.

\begin{example}
Set $q=2$, $\Omega=\{1,2,3,4,5,6,7\}$, and let $\mathbf{P}$ be the poset defined as $1\preccurlyeq_{\mathbf{P}}2$, $3\preccurlyeq_{\mathbf{P}}4$, $5\preccurlyeq_{\mathbf{P}}6$. Then, for the matrix
\begin{displaymath}G=\left(\begin{array}{ccccccc}
                   1&1&1&0&0&1&0\\
                   1&0&1&1&1&1&0\\
                   1&0&1&0&1&0&1\\
                   \end{array}\right),
\end{displaymath}
it is straightforward to verify that the column map $f$ of $G$ is a cutting $1$-blocking map of $\mathbb{F}_{2}^{[3]}$ with respect to $\mathbf{P}$, and $\col(G)$ is a cutting $1$-blocking set of $\mathbb{F}_{2}^{[3]}$.
\end{example}

\setlength{\parindent}{2em}
The following proposition will be used in Section 3.

\begin{proposition}
Let $X$ be a $k$-dimensional $\mathbb{F}$-vector space, $r\in\{0,1,\dots,k-1\}$, and let $f:\Omega\longrightarrow X$ be a cutting $r$-blocking map of $X$ with respect to $\mathbf{P}$. Then, for any $s\in\{0,1,\dots,r\}$, $f$ is a cutting $s$-blocking map of $X$ with respect to $\mathbf{P}$.
\end{proposition}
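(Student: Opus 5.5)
Write $S(V)=\{f(i)\mid i\in\Omega,\ (\forall j\in\Omega:\ i\preccurlyeq_{\mathbf{P}}j\Rightarrow f(j)\in V)\}$ for a subspace $V\leqslant_{\mathbb{F}}X$. Note that $S(V)\subseteq V$ always: taking $j=i$ shows $f(i)\in V$. So the cutting condition (2.9) for $V$ is just the statement $V\subseteq\langle S(V)\rangle_{\mathbb{F}}$.

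The first step is monotonicity: if $V\subseteq W$, then $S(V)\subseteq S(W)$. This is immediate, since any $i$ all of whose $\mathbf{P}$-upper elements $j$ satisfy $f(j)\in V$ also satisfies $f(j)\in W$.

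Now fix $s\in\{0,\dots,r\}$ and let $W\leqslant_{\mathbb{F}}X$ with $\dim_{\mathbb{F}}(W)=k-s$. Since $r\leqslant k-1$, we have $0\leqslant k-r\leqslant k-s$. Every vector of $W$ lies in a subspace of $W$ of dimension exactly $k-r$: if $k-r\geqslant1$, extend the vector (or a basis vector when it is $0$) to a $(k-r)$-dimensional subspace of $W$. Hence $W=\bigcup\{V\leqslant W\mid\dim_{\mathbb{F}}(V)=k-r\}$. For each such $V$, the hypothesis and monotonicity give
$$V=\langle S(V)\rangle_{\mathbb{F}}\subseteq\langle S(W)\rangle_{\mathbb{F}}.$$
Therefore $W\subseteq\langle S(W)\rangle_{\mathbb{F}}\subseteq W$, and (2.9) holds for $W$. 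This shows that $f$ is a cutting $s$-blocking map.

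The only delicate point is the covering step, namely that $W$ is the union of its $(k-r)$-dimensional subspaces. This is exactly where the assumption $r\leqslant k-1$, i.e.\ $k-r\geqslant1$, is used; the cases $s=r$ and $W=0$ are trivial. Otherwise the argument is purely formal.
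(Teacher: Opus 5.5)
Your proof is correct and is essentially the paper's argument: you cover the $(k-s)$-dimensional subspace by its $(k-r)$-dimensional subspaces (using $k-r\geqslant1$), apply the hypothesis to each, and enlarge the spanning set. You also make explicit the monotonicity $S(V)\subseteq S(W)$ and the inclusion $S(W)\subseteq W$, which the paper uses tacitly; the only slip is that the defining condition is the paper's (2.6), not (2.9).
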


\begin{proof}
Let $s\in\{0,1,\dots,r\}$, and let $V\leqslant_{\mathbb{F}}\mathbb{F}^{[k]}$ with $\dim_{\mathbb{F}}(V)=k-s$. Since $1\leqslant k-r\leqslant k-s$, we have $V=\bigcup_{(U\leqslant_{\mathbb{F}}V,\dim_{\mathbb{F}}(U)=k-r)}U$; moreover, for any $U\leqslant_{\mathbb{F}}V$ with $\dim_{\mathbb{F}}(U)=k-r$, since (2.6) holds for $U$, we have $U\subseteq\langle\{f(i)\mid i\in\Omega,(\forall~j\in\Omega:i\preccurlyeq_{\mathbf{P}}j\Longrightarrow f(j)\in V)\}\rangle_{\mathbb{F}}$. Therefore we conclude that (2.6) holds for $V$, as desired.
\end{proof}

We end this subsection by recalling two classes of posets which have been extensively studied for poset codes.

\setlength{\parindent}{0em}
\begin{definition}
{\bf{(1)}}\,\,$\mathbf{P}$ is said to be hierarchical if for any $u,v\in \Omega$ such that $\len_{\mathbf{P}}(u)+1\leqslant\len_{\mathbf{P}}(v)$, it holds that $u\preccurlyeq_{\mathbf{P}}v$.

{\bf{(2)}}\,\,For $s,m\in\mathbb{Z}^{+}$, $\mathbf{P}$ is referred to as an $(s,m)$ Niederreiter-Rosenbloom-Tsfasman (NRT) poset if $\mathbf{P}$ is isomorphic to $(\{1,\dots,s\}\times\{1,\dots,m\},\preccurlyeq_{_{NRT}})$, where $(a,b)\preccurlyeq_{_{NRT}}(c,d)\Longleftrightarrow(a=c,~b\leqslant d)$.
\end{definition}

\begin{remark}
{\bf{(1)}}\,\,All the hierarchical posets over $\Omega$ are in one-to-one correspondence with all the ordered partitions of $\Omega$. More precisely, for $s\in\mathbb{Z}^{+}$ and a tuple of non-empty disjoint sets $(A_1,\dots,A_s)$ such that $\bigcup_{i=1}^{s}=\Omega$, the poset $\mathbf{Q}=(\Omega,\preccurlyeq_{\mathbf{Q}})$ defined as $u\preccurlyeq_{\mathbf{Q}}v$ if and only if $u=v$ or $u\in A_i$, $v\in A_j$ for some $i<j\in\{1,\dots,s\}$ is hierarchical; conversely, any hierarchical poset over $\Omega$ can be obtained in this way. Hierarchical posets have many coding-theoretic characterizations. For example, among others, it has been proven in \cite{6,29,34} that $\mathbf{P}$ is hierarchical if and only if $\mathbf{P}$ admits a MacWilliams identity, if and only if $\mathbf{P}$ satisfies the MacWilliams extension property, if and only if $\mathbf{P}$ induces an association scheme, if and only if the group of $\mathbf{P}$-weight isometries acts transitively on codewords with the same $\mathbf{P}$-weight. We refer the reader to \cite{22,26,28,35,42,44,53,54,55} and references therein for more results established for hierarchical posets.

{\bf{(2)}}\,\,An $(s,m)$ NRT poset can be regarded as the disjoint union of $s$ chains of length $m$; in particular, when $s=1$ (\textit{resp}., $m=1$), an $(s,m)$ NRT poset boils down to a chain (\textit{resp}., an anti-chian). Metric spaces induced by NRT posets were implicitly introduced in \cite{40,41} for studying combinatorial problems for vector spaces over finite fields, and it was later noted in \cite{12} that these spaces are in fact a special class of poset metric spaces. We refer the reader to \cite{6,21,24,30,43,45,48} and references therein for relevant results for NRT poset codes.
\end{remark}

\subsection{Weighted poset metric and constant weight codes}

\setlength{\parindent}{2em}
Weighted poset metric is a general class of metric that has been introduced by Hyun, Kim and Park in \cite{27}, where the authors have classified all the weighted posets that admit the extended Hamming code to be a $2$-perfect code. In Section 3, we will characterize $r$-minimal poset codes in terms of weighted poset metric, and show that $r$-constant weight codes with respect to weighted poset metric are in fact a special class of $r$-minimal poset codes.

Throughout the rest of this subsection, we fix a poset $\mathbf{P}=(\Omega,\preccurlyeq_{\mathbf{P}})$ and a weight function $\omega:\Omega\longrightarrow\mathbb{R}^{+}$, and consider the \textit{$\omega$-weighted poset} $(\mathbf{P},\omega)$ (see \cite{27}). For any $\beta\in\mathbf{H}$, the $(\mathbf{P},\omega)$-weight of $\beta$ is defined as
\begin{equation}\wt_{(\mathbf{P},\omega)}(\beta)\triangleq\sum_{i\in\langle\supp(\beta)\rangle_{\mathbf{P}}}\omega(i).\end{equation}
More generally, for any $A\subseteq\mathbf{H}$, we define the $(\mathbf{P},\omega)$-weight of $A$ as
\begin{equation}\Wt_{(\mathbf{P},\omega)}(A)\triangleq\sum_{i\in\langle\chi(A)\rangle_{\mathbf{P}}}\omega(i).\end{equation}
The metric $d_{(\mathbf{P},\omega)}:\mathbf{H}\times \mathbf{H}\longrightarrow \mathbb{R}$ defined as
\begin{equation}d_{(\mathbf{P},\omega)}(\alpha,\beta)=\wt_{(\mathbf{P},\omega)}(\beta-\alpha)\end{equation}
is referred to as \textit{weighted poset metric}. We note that as the weight function $\omega$ takes values on each coordinate position, weighted poset metric can be useful to model some specific channels in which the error probability depends on a codeword position (i.e., the distribution of errors is nonuniform), and can also be useful to perform bitwise or messagewise unequal error protection (see, e.g., the abstract of \cite{7} and [22, Section I, Paragraph VI]). We refer the reader to \cite{22,27,35,54,55} for more relevant results for weighted poset metric.

In the following two examples, we give two special cases of $(\mathbf{P},\omega)$-weight.

\begin{example}($\mathbf{P}$-weight)
If $\omega(i)=1$ for all $i\in\Omega$, then (2.7) and (2.8) boil down to the $\mathbf{P}$-weight defined as
\begin{equation}\wt_{\mathbf{P}}(\beta)\triangleq\wt_{(\mathbf{P},\omega)}(\beta)=|\langle\supp(\beta)\rangle_{\mathbf{P}}|,\end{equation}
\begin{equation}\Wt_{\mathbf{P}}(A)\triangleq\Wt_{(\mathbf{P},\omega)}(A)=|\langle\chi(A)\rangle_{\mathbf{P}}|,\end{equation}
respectively; moreover, (2.9) recovers the notion of $\mathbf{P}$-metric (see \cite{12,39}). In particular, if $\mathbf{P}$ is an NRT poset, then (2.9) becomes the NRT metric (see \cite{21,45}).
\end{example}

\begin{example}($\omega$-weight)
If $\mathbf{P}$ is an anti-chain, then (2.7) and (2.8) boil down to the $\omega$-weight defined as
\begin{equation}\wt_{\omega}(\beta)\triangleq\wt_{(\mathbf{P},\omega)}(\beta)=\sum_{i\in\supp(\beta)}\omega(i),\end{equation} \begin{equation}\Wt_{\omega}(A)\triangleq\Wt_{(\mathbf{P},\omega)}(A)=\sum_{i\in\chi(A)}\omega(i),\end{equation}
respectively; moreover, (2.9) recovers the notion of weighted Hamming metric (see \cite{7}).
\end{example}

\setlength{\parindent}{2em}
Now we define $r$-constant weight code with respect to $(\mathbf{P},\omega)$.

\setlength{\parindent}{0em}
\begin{definition}
For $C\leqslant_{\mathbb{F}}\mathbf{H}$ and $r\in\mathbb{N}$, we say that $C$ is $r$-constant weight with respect to $(\mathbf{P},\omega)$ if for any $D,Q\leqslant_{\mathbb{F}}C$ with $\dim_{\mathbb{F}}(Q)=\dim_{\mathbb{F}}(D)=r$, it holds that $\Wt_{(\mathbf{P},\omega)}(D)=\Wt_{(\mathbf{P},\omega)}(Q)$. In particular, we say that $C$ is $r$-constant weight with respect to $\omega$ (\textit{resp}., $\mathbf{P}$) if for any $D,Q\leqslant_{\mathbb{F}}C$ with $\dim_{\mathbb{F}}(Q)=\dim_{\mathbb{F}}(D)=r$, it holds that $\Wt_{\omega}(D)=\Wt_{\omega}(Q)$ (\textit{resp}., $\Wt_{\mathbf{P}}(D)=\Wt_{\mathbf{P}}(Q)$).
\end{definition}

\begin{remark}
Constant weight Hamming metric codes are first classified by Bonisoli in \cite{11} where it has been proven that every constant weight Hamming metric code is a sequence of dual Hamming codes. Different proofs of this result have been given in \cite{33} by using the notion of value function, in \cite{49} by using characters, and in \cite{50} by using the MacWilliams extension property (c.f. \cite{52}).
\end{remark}

\setlength{\parindent}{2em}
Now we introduce linear maps that preserve $(\mathbf{P},\omega)$-weight.

\setlength{\parindent}{0em}
\begin{definition}
An $\mathbb{F}$-automorphism $\varphi\in\End_{\mathbb{F}}(\mathbf{H})$ is referred to as a \textit{$(\mathbf{P},\omega)$-weight isometry} of $\mathbf{H}$ if $\wt_{(\mathbf{P},\omega)}(\varphi(\alpha))=\wt_{(\mathbf{P},\omega)}(\alpha)$ for all $\alpha\in \mathbf{H}$, and is referred to as a \textit{$\mathbf{P}$-support isometry} of $\mathbf{H}$ if $\langle\supp(\varphi(\alpha))\rangle_{\mathbf{P}}=\langle\supp(\alpha)\rangle_{\mathbf{P}}$ for all $\alpha\in \mathbf{H}$.
\end{definition}

\setlength{\parindent}{2em}
We collect some basic properties of $(\mathbf{P},\omega)$-weight isometries in the following lemma.

\setlength{\parindent}{0em}
\begin{lemma}([35, Theorem 5], [55, Theorem III.1]) Let $\varphi\in\End_{\mathbb{F}}(\mathbf{H})$ be a $(\mathbf{P},\omega)$-weight isometry. Then, there uniquely exists $\lambda\in\Aut(\mathbf{P})$ such that $\omega(i)=\omega(\lambda(i))$ for all $i\in\Omega$ and $\langle\supp(\varphi(\alpha))\rangle_{\mathbf{P}}=\lambda[\langle\supp(\alpha)\rangle_{\mathbf{P}}]$ for all $\alpha\in\mathbf{H}$. Moreover, for any $A\subseteq\mathbf{H}$, it holds true that $\langle\chi(\varphi[A])\rangle_{\mathbf{P}}=\lambda[\langle\chi(A)\rangle_{\mathbf{P}}]$ and $\Wt_{(\mathbf{P},\omega)}(\varphi[A])=\Wt_{(\mathbf{P},\omega)}(A)$.
\end{lemma}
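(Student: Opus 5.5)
The statement is Lemma 2.1, cited from [35, 55]. The first part, existence and uniqueness of $\lambda\in\Aut(\mathbf{P})$ with $\omega\circ\lambda=\omega$ and $\langle\supp(\varphi(\alpha))\rangle_{\mathbf{P}}=\lambda[\langle\supp(\alpha)\rangle_{\mathbf{P}}]$, is exactly the content of [35, Theorem 5] and [55, Theorem III.1]. I will quote it rather than reprove it. The work left is the \emph{moreover} part: extend the statement from single vectors to arbitrary subsets $A\subseteq\mathbf{H}$, and then deduce weight invariance.

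The key identity is that, for any $A\subseteq\mathbf{H}$,
\[
\chi(A)=\bigcup_{\alpha\in A}\supp(\alpha),
\]
so that
\[
\langle\chi(A)\rangle_{\mathbf{P}}=\bigcup_{\alpha\in A}\langle\supp(\alpha)\rangle_{\mathbf{P}}.
\]
This holds because the operator $B\mapsto\langle B\rangle_{\mathbf{P}}$ commutes with unions, which is immediate from its definition as a union of principal down-sets. Applying the identity to $\varphi[A]=\{\varphi(\alpha)\mid\alpha\in A\}$ and using the single-vector statement gives
\[
\langle\chi(\varphi[A])\rangle_{\mathbf{P}}=\bigcup_{\alpha\in A}\langle\supp(\varphi(\alpha))\rangle_{\mathbf{P}}=\bigcup_{\alpha\in A}\lambda[\langle\supp(\alpha)\rangle_{\mathbf{P}}]=\lambda\Bigl[\bigcup_{\alpha\in A}\langle\supp(\alpha)\rangle_{\mathbf{P}}\Bigr]=\lambda[\langle\chi(A)\rangle_{\mathbf{P}}],
\]
where the third equality holds because images under a map commute with unions. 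The case $A=\emptyset$ is trivial: both sides are empty.

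For the weight, I reindex the sum defining $\Wt_{(\mathbf{P},\omega)}(\varphi[A])$ over $\lambda[\langle\chi(A)\rangle_{\mathbf{P}}]$ via the bijection $\lambda$. Then $\omega\circ\lambda=\omega$ gives
\[
\Wt_{(\mathbf{P},\omega)}(\varphi[A])=\sum_{i\in\langle\chi(A)\rangle_{\mathbf{P}}}\omega(\lambda(i))=\Wt_{(\mathbf{P},\omega)}(A).
\]

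I expect no real obstacle, since the only substantive part is cited. If a self-contained argument for uniqueness were wanted, I would argue as follows. Suppose $\lambda$ and $\lambda'$ both satisfy the condition. Apply it to $\alpha=e_u$ for each $u\in\Omega$: then $\lambda[\langle u\rangle_{\mathbf{P}}]=\lambda'[\langle u\rangle_{\mathbf{P}}]$. Since $\lambda$ and $\lambda'$ are order automorphisms, $\lambda[\langle u\rangle_{\mathbf{P}}]=\langle\lambda(u)\rangle_{\mathbf{P}}$, and similarly for $\lambda'$. A principal ideal determines its generator as its unique maximum, so $\lambda(u)=\lambda'(u)$ for every $u$.
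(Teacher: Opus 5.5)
Your proposal is correct and matches the paper, which gives no proof of Lemma 2.1 and simply cites [35, Theorem 5] and [55, Theorem III.1]. You quote the same results for the existence and uniqueness of $\lambda$. Your derivation of $\langle\chi(\varphi[A])\rangle_{\mathbf{P}}=\lambda[\langle\chi(A)\rangle_{\mathbf{P}}]$ from $\chi(A)=\bigcup_{\alpha\in A}\supp(\alpha)$, together with reindexing $\Wt_{(\mathbf{P},\omega)}$ through the bijection $\lambda$ with $\omega\circ\lambda=\omega$, soundly fills in the routine set-level extension the paper leaves implicit, and your optional uniqueness argument via $\lambda[\langle\{u\}\rangle_{\mathbf{P}}]=\langle\{\lambda(u)\}\rangle_{\mathbf{P}}$ is also valid.
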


\section{Characterizations of $r$-Minimal poset codes}
\setlength{\parindent}{2em}
Throughout this section, we fix a poset $\mathbf{P}=(\Omega,\preccurlyeq_{\mathbf{P}})$.

\subsection{Some basic properties}
We begin with the following characterization of $\mathbf{P}$-minimal subcodes.

\setlength{\parindent}{0em}
\begin{proposition}
For $C\leqslant_{\mathbb{F}}\mathbf{H}$ and $D\leqslant_{\mathbb{F}}C$, the following five statements are equivalent to each other:

{\bf{(1)}}\,\,$D$ is $\mathbf{P}$-minimal in $C$;

{\bf{(2)}}\,\,$C\cap\delta(\langle\chi(D)\rangle_{\mathbf{P}})=D$;

{\bf{(3)}}\,\,$\dim_{\mathbb{F}}(C)-\dim_{\mathbb{F}}(D)=\dim_{\mathbb{F}}(\delta(\langle\chi(D)\rangle_{\mathbf{P}})+C)-\Wt_{\mathbf{P}}(D)$;

{\bf{(4)}}\,\,For any $A\subseteq C$ with $\chi(A)\subseteq\langle\chi(D)\rangle_{\mathbf{P}}$, it holds that $A\subseteq D$;

{\bf{(5)}}\,\,For any $B\leqslant_{\mathbb{F}}C$ such that $\dim_{\mathbb{F}}(B)=\dim_{\mathbb{F}}(D)$, $\langle\chi(B)\rangle_{\mathbf{P}}\subseteq\langle\chi(D)\rangle_{\mathbf{P}}$, it holds that $B=D$.
\end{proposition}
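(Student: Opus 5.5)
The plan is to set $I\triangleq\langle\chi(D)\rangle_{\mathbf{P}}$ and $E\triangleq C\cap\delta(I)$, and to use two elementary facts. First, $D\subseteq E$, since every $\beta\in D$ has $\supp(\beta)\subseteq\chi(D)\subseteq I$. Second, for any $A\subseteq\mathbf{H}$ we have $\chi(A)\subseteq I$ if and only if $A\subseteq\delta(I)$. Because $I$ is an ideal, the second fact also gives $\langle\chi(A)\rangle_{\mathbf{P}}\subseteq I$ if and only if $\chi(A)\subseteq I$. With these in hand, the plan is to prove the cycle (2)$\Rightarrow$(4)$\Rightarrow$(5)$\Rightarrow$(1)$\Rightarrow$(2), and then (2)$\Leftrightarrow$(3) separately by a dimension count.

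\textbf{(2)$\Rightarrow$(4).} If $A\subseteq C$ and $\chi(A)\subseteq I$, then $A\subseteq C\cap\delta(I)=D$.

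\textbf{(4)$\Rightarrow$(5).} If $B\leqslant_{\mathbb{F}} C$ and $\langle\chi(B)\rangle_{\mathbf{P}}\subseteq I$, then $\chi(B)\subseteq I$, so (4) gives $B\subseteq D$. Equal dimensions then force $B=D$.

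\textbf{(5)$\Rightarrow$(1).} Take $Q$ as in the definition of $\mathbf{P}$-minimality. Then (5) gives $Q=D$, and hence the $\mathbf{P}$-supports agree.

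\textbf{(1)$\Rightarrow$(2).} This is the step that needs an actual construction. Suppose $E\neq D$. Choose $\gamma\in E\setminus D$; note that $\dim D\geqslant 1$ may be assumed, since if $D=0$ then $I=\emptyset$ and $E=0=D$. The aim is an equal-dimensional subspace $Q\subseteq E$ with $\langle\chi(Q)\rangle_{\mathbf{P}}\subsetneqq I$, which contradicts (1).
\begin{itemize}
\item Pick a maximal element $m\in\max_{\mathbf{P}}(\chi(D))$. Then $m\in\chi(D)$, so some $\beta\in D$ has $\beta_m\neq0$.
\item The subspace $D_m\triangleq\{\alpha\in D\mid\alpha_m=0\}$ is a hyperplane of $D$.
\item Set $Q\triangleq D_m+\langle\gamma-c\beta\rangle_{\mathbb{F}}$, where $c$ is chosen so that $(\gamma-c\beta)_m=0$. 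Then $\gamma-c\beta\notin D$, so $\dim Q=\dim D$, and $Q\subseteq E$ with $m\notin\chi(Q)$.
\item We have $\langle\chi(Q)\rangle_{\mathbf{P}}\subseteq I$. If $m$ belonged to $\langle\chi(Q)\rangle_{\mathbf{P}}$, there would be $m\preccurlyeq_{\mathbf{P}}v$ with $v\in\chi(Q)\subseteq I$, so $v\preccurlyeq_{\mathbf{P}}u$ for some $u\in\chi(D)$. Maximality of $m$ in $\chi(D)$ forces $u=m$, hence $v=m$, contradicting $m\notin\chi(Q)$.
\end{itemize}
So $\langle\chi(Q)\rangle_{\mathbf{P}}\subsetneqq I$, contradicting (1).

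\textbf{(2)$\Leftrightarrow$(3).} Note that $\dim_{\mathbb{F}}\delta(I)=|I|=\Wt_{\mathbf{P}}(D)$. By the dimension formula,
$$\dim_{\mathbb{F}}(\delta(I)+C)=|I|+\dim_{\mathbb{F}} C-\dim_{\mathbb{F}} E,$$
so (3) is equivalent to $\dim_{\mathbb{F}} E=\dim_{\mathbb{F}} D$. Together with $D\subseteq E$, this is equivalent to (2).

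The main obstacle is (1)$\Rightarrow$(2): producing a same-dimensional subspace with strictly smaller $\mathbf{P}$-support requires choosing a maximal coordinate $m$ of $\chi(D)$ and eliminating it, and then checking that $m$ does not re-enter the down-closure. That check is where the maximality of $m$ and the ideal property of $I$ are used.
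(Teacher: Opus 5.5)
Your proof is correct and follows essentially the same route as the paper: the same implications (1)$\Rightarrow$(2)$\Rightarrow$(4)$\Rightarrow$(5)$\Rightarrow$(1), and the same dimension count for (2)$\Leftrightarrow$(3). For (1)$\Rightarrow$(2), your $Q=D_m+\langle\gamma-c\beta\rangle_{\mathbb{F}}$ is exactly the hyperplane $\{\alpha\in W\mid\alpha_m=0\}$ of $W=D+\langle\gamma\rangle_{\mathbb{F}}$, which is the paper's construction; the paper picks the coordinate in $\max_{\mathbf{P}}(\chi(W))$, and this set coincides with $\max_{\mathbf{P}}(\chi(D))$ here, so the only difference is where the maximality check is done.
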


\begin{proof}
$(1)\Longrightarrow(2)$\,\,Suppose by way of contradiction that $C\cap\delta(\langle\chi(D)\rangle_{\mathbf{P}})\neq D$. Then, from $D\subseteq C\cap\delta(\langle\chi(D)\rangle_{\mathbf{P}})$, we can choose $W\leqslant_{\mathbb{F}}C\cap\delta(\langle\chi(D)\rangle_{\mathbf{P}})$ such that $D\subseteq W$ and $\dim_{\mathbb{F}}(W)=\dim_{\mathbb{F}}(D)+1$. Choose $i\in\max_{\mathbf{P}}(\chi(W))$. Then, one can check that $B\triangleq\{\alpha\in W\mid\alpha_i=0\}$ satisfies that  $B\leqslant_{\mathbb{F}}W$, $\dim_{\mathbb{F}}(B)=\dim_{\mathbb{F}}(W)-1$ and $\langle\chi(B)\rangle_{\mathbf{P}}\subseteq\langle\chi(W)\rangle_{\mathbf{P}}-\{i\}\subsetneqq\langle\chi(W)\rangle_{\mathbf{P}}$. From $W\subseteq \delta(\langle\chi(D)\rangle_{\mathbf{P}})$, we deduce that $\langle\chi(W)\rangle_{\mathbf{P}}\subseteq\langle\chi(D)\rangle_{\mathbf{P}}$, which implies that $\langle\chi(B)\rangle_{\mathbf{P}}\subsetneqq\langle\chi(D)\rangle_{\mathbf{P}}$. Noticing that $\dim_{\mathbb{F}}(B)=\dim_{\mathbb{F}}(D)$, it follows from Definition 2.1 that $D$ is not $\mathbf{P}$-minimal in $C$, a contradiction, as desired.

$(2)\Longleftrightarrow(3)$\,\,We infer that (3) holds true if and only if $\dim_{\mathbb{F}}(D)=\dim_{\mathbb{F}}(C\cap\delta(\langle\chi(D)\rangle_{\mathbf{P}}))$, which, together with $D\subseteq C\cap\delta(\langle\chi(D)\rangle_{\mathbf{P}})$, immediately implies $(2)\Longleftrightarrow(3)$, as desired.

$(2)\Longrightarrow(4)$\,\,Let $A\subseteq C$ with $\chi(A)\subseteq\langle\chi(D)\rangle_{\mathbf{P}}$. Then, we have $A\subseteq\delta(\langle\chi(D)\rangle_{\mathbf{P}})$, which, together with (2), implies that $A\subseteq C\cap\delta(\langle\chi(D)\rangle_{\mathbf{P}})=D$, as desired.

$(4)\Longrightarrow(5)$ and $(5)\Longrightarrow(1)$\,\,These two parts are trivial.
\end{proof}

\setlength{\parindent}{2em}
The following theorem is the main result of this subsection.

\setlength{\parindent}{0em}
\begin{theorem}
For $C\leqslant_{\mathbb{F}}\mathbf{H}$ and $r\in\mathbb{N}$, the following four statements are equivalent to each other:

{\bf{(1)}}\,\,$C$ is $r$-minimal with respect to $\mathbf{P}$;

{\bf{(2)}}\,\,For any $D,B\leqslant_{\mathbb{F}}C$ such that $\dim_{\mathbb{F}}(D)=\dim_{\mathbb{F}}(B)=r$, $\langle\chi(B)\rangle_{\mathbf{P}}\subseteq\langle\chi(D)\rangle_{\mathbf{P}}$, it holds that $D=B$;

{\bf{(3)}}\,\,Every $(r+1)$-dimensional $\mathbb{F}$-subspace of $C$ is $r$-minimal with respect to $\mathbf{P}$;

{\bf{(4)}}\,\,For any $W\leqslant_{\mathbb{F}}C$ with $\dim_{\mathbb{F}}(W)=r+1$ and any $D\leqslant_{\mathbb{F}}W$ with $\dim_{\mathbb{F}}(D)=r$, it holds that $\langle\chi(D)\rangle_{\mathbf{P}}\subsetneqq\langle\chi(W)\rangle_{\mathbf{P}}$.
\end{theorem}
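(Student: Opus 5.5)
The plan is to prove the cycle $(1)\Rightarrow(2)\Rightarrow(4)\Rightarrow(3)\Rightarrow(1)$, relying mainly on Proposition 3.1. First, a degenerate case: if $\dim_{\mathbb{F}}(C)<r$, all four statements are vacuous. If $\dim_{\mathbb{F}}(C)=r$, statements (3) and (4) are vacuous, and so are (1) and (2), since $C$ is then the only $r$-dimensional subspace of $C$. So we may assume $\dim_{\mathbb{F}}(C)\geqslant r+1$.

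For $(1)\Rightarrow(2)$: take $D,B\leqslant_{\mathbb{F}}C$ of dimension $r$ with $\langle\chi(B)\rangle_{\mathbf{P}}\subseteq\langle\chi(D)\rangle_{\mathbf{P}}$. By (1), $D$ is $\mathbf{P}$-minimal in $C$, so the implication $(1)\Rightarrow(5)$ of Proposition 3.1 gives $B=D$.

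For $(2)\Rightarrow(4)$: let $D\leqslant_{\mathbb{F}}W\leqslant_{\mathbb{F}}C$ with $\dim_{\mathbb{F}}(W)=r+1$ and $\dim_{\mathbb{F}}(D)=r$. The inclusion $\langle\chi(D)\rangle_{\mathbf{P}}\subseteq\langle\chi(W)\rangle_{\mathbf{P}}$ is automatic, so it suffices to exclude equality. Suppose equality holds. Pick $i\in\max_{\mathbf{P}}(\chi(W))$, and set $B=\{\alpha\in W\mid\alpha_i=0\}$, exactly as in the proof of Proposition 3.1. Then $\dim_{\mathbb{F}}(B)=r$ and $\langle\chi(B)\rangle_{\mathbf{P}}\subseteq\langle\chi(W)\rangle_{\mathbf{P}}-\{i\}\subsetneqq\langle\chi(W)\rangle_{\mathbf{P}}=\langle\chi(D)\rangle_{\mathbf{P}}$. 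Applying (2) yields $B=D$, which contradicts the strict inclusion.

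For $(4)\Rightarrow(3)$: fix $W\leqslant_{\mathbb{F}}C$ with $\dim_{\mathbb{F}}(W)=r+1$. Its only subspaces of dimension $\geqslant r$ inside $W$ are $W$ itself and the $r$-dimensional $D$. Take two $r$-dimensional subspaces $D\neq B$ of $W$. Then $D+B=W$, so $\chi(W)=\chi(D)\cup\chi(B)$, and hence $\langle\chi(W)\rangle_{\mathbf{P}}=\langle\chi(D)\rangle_{\mathbf{P}}\cup\langle\chi(B)\rangle_{\mathbf{P}}$. If we had $\langle\chi(B)\rangle_{\mathbf{P}}\subseteq\langle\chi(D)\rangle_{\mathbf{P}}$, this would give $\langle\chi(W)\rangle_{\mathbf{P}}=\langle\chi(D)\rangle_{\mathbf{P}}$, contradicting (4). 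So statement (5) of Proposition 3.1 holds for each $D$ relative to $W$, which means every $r$-dimensional subspace of $W$ is $\mathbf{P}$-minimal in $W$, i.e.\ $W$ is $r$-minimal.

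For $(3)\Rightarrow(1)$, which I expect to be the main step: let $D\leqslant_{\mathbb{F}}C$ with $\dim_{\mathbb{F}}(D)=r$, and suppose $D$ is not $\mathbf{P}$-minimal in $C$. By Proposition 3.1 (2), $C\cap\delta(\langle\chi(D)\rangle_{\mathbf{P}})\supsetneqq D$, so we may choose $\alpha$ in this set with $\alpha\notin D$. Put $W=D+\langle\alpha\rangle_{\mathbb{F}}$, which has dimension $r+1$. Since $\chi(\alpha)\subseteq\langle\chi(D)\rangle_{\mathbf{P}}$, we get $\langle\chi(W)\rangle_{\mathbf{P}}=\langle\chi(D)\rangle_{\mathbf{P}}$. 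Therefore $W\cap\delta(\langle\chi(D)\rangle_{\mathbf{P}})=W\neq D$, and Proposition 3.1 applied with $C$ replaced by $W$ shows that $D$ is not $\mathbf{P}$-minimal in $W$. This contradicts (3).
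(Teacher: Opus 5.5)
Your proof is correct, but it is organized as a cycle, while the paper uses (1) as a hub. The paper reads off $(1)\Leftrightarrow(2)$ from $(1)\Leftrightarrow(5)$ of Proposition 3.1, and $(1)\Rightarrow(3)$ is immediate from the definition. For $(3)\Rightarrow(4)$ it applies Proposition 3.1 inside $W$ to get $\chi(W)\nsubseteq\langle\chi(D)\rangle_{\mathbf{P}}$. For $(4)\Rightarrow(1)$ it extends $D$ to an $(r+1)$-dimensional $W\leqslant_{\mathbb{F}} C\cap\delta(\langle\chi(D)\rangle_{\mathbf{P}})$.

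Your $(3)\Rightarrow(1)$ is exactly this extension argument, using $W=D+\langle\alpha\rangle_{\mathbb{F}}$. You only close it differently, by invoking (3) through Proposition 3.1 in $W$ rather than contradicting (4) directly. So the essential local-to-global idea is the same.

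The differences are in your other two links:
\begin{itemize}
\item For $(2)\Rightarrow(4)$ you re-run the maximal-coordinate hyperplane construction $B=\{\alpha\in W\mid\alpha_i=0\}$ from the proof of Proposition 3.1. The paper gets this content for free by citing $(5)\Rightarrow(1)$ and then passing through (1) and (3).
\item For $(4)\Rightarrow(3)$ you give an argument the paper never needs: two distinct hyperplanes $D\neq B$ of $W$ span $W$, so $\langle\chi(W)\rangle_{\mathbf{P}}=\langle\chi(D)\rangle_{\mathbf{P}}\cup\langle\chi(B)\rangle_{\mathbf{P}}$.
\end{itemize}

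The paper's route is shorter, since every link is a one-line citation of Proposition 3.1. Yours has a small bonus: together with the paper's $(3)\Rightarrow(4)$, it shows that a single $(r+1)$-dimensional $W$ is $r$-minimal if and only if every hyperplane of $W$ has strictly smaller $\mathbf{P}$-support. That is, (3) and (4) are equivalent subspace by subspace, not just globally.

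Two minor points. Your degenerate-case discussion is unnecessary, since none of your four implications uses $\dim_{\mathbb{F}}(C)\geqslant r+1$. Also, when $\dim_{\mathbb{F}}(C)=r$, statements (1) and (2) are trivially true rather than vacuous.
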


\begin{proof}
$(1)\Longleftrightarrow(2)$\,\,This follows from ``$(1)\Longleftrightarrow(5)$'' of Proposition 3.1.

$(1)\Longrightarrow(3)$\,\,This immediately follows from Definition 2.1.

$(3)\Longrightarrow(4)$\,\,Since $W$ is $r$-minimal with respect to $\mathbf{P}$, $D$ is $\mathbf{P}$-minimal in $W$. Therefore by Proposition 3.1, we have $\chi(W)\nsubseteq\langle\chi(D)\rangle_{\mathbf{P}}$, which implies that $\langle\chi(D)\rangle_{\mathbf{P}}\subsetneqq\langle\chi(W)\rangle_{\mathbf{P}}$, as desired.

$(4)\Longrightarrow(1)$\,\,By Proposition 3.1, it suffices to show that $C\cap\delta(\langle\chi(D)\rangle_{\mathbf{P}})=D$ for any $D\leqslant_{\mathbb{F}}C$ with $\dim_{\mathbb{F}}(D)=r$. Indeed, if $C\cap\delta(\langle\chi(D)\rangle_{\mathbf{P}})\neq D$, then we can choose $W\leqslant_{\mathbb{F}}C\cap\delta(\langle\chi(D)\rangle_{\mathbf{P}})$ such that $D\subseteq W$ and $\dim_{\mathbb{F}}(W)=r+1$; moreover, by (4), we have $\langle\chi(D)\rangle_{\mathbf{P}}\subsetneqq\langle\chi(W)\rangle_{\mathbf{P}}$, which implies that $\chi(W)\nsubseteq\langle\chi(D)\rangle_{\mathbf{P}}$, and hence $W\nsubseteq\delta(\langle\chi(D)\rangle_{\mathbf{P}})$, a contradiction, as desired.
\end{proof}

\setlength{\parindent}{2em}
Now we derive some corollaries of Proposition 3.1 and Theorem 3.1, and we begin with the following Singleton type bound.

\setlength{\parindent}{0em}
\begin{corollary}
{\bf{(1)}}\,\,Let $C\leqslant_{\mathbb{F}}\mathbf{H}$, and let $D\leqslant_{\mathbb{F}}C$ such that $D$ is $\mathbf{P}$-minimal in $C$. Then, for any $I\subseteq\Omega$ with $\langle\chi(D)\rangle_{\mathbf{P}}\cup\chi(C)\subseteq I$, it holds that
$$\dim_{\mathbb{F}}(C)-\dim_{\mathbb{F}}(D)\leqslant|I|-\Wt_{\mathbf{P}}(D),$$
where the equality holds if and only if $\delta(\langle\chi(D)\rangle_{\mathbf{P}})+C=\delta(I)$. In particular, it holds that
$$\dim_{\mathbb{F}}(C)-\dim_{\mathbb{F}}(D)\leqslant\Wt_{\mathbf{P}}(C)-\Wt_{\mathbf{P}}(D),$$
where the equality holds if and only if $\delta(\langle\chi(D)\rangle_{\mathbf{P}})+C=\delta(\langle\chi(C)\rangle_{\mathbf{P}})$.

{\bf{(2)}}\,\,Let $C\leqslant_{\mathbb{F}}\mathbf{H}$ with $\dim_{\mathbb{F}}(C)=k$, and let $r\in\{0,1,\dots,k\}$ such that $C$ is $r$-minimal with respect to $\mathbf{P}$. Then, for any $D\leqslant_{\mathbb{F}}C$ with $\dim_{\mathbb{F}}(D)=r$, it holds that $$\Wt_{\mathbf{P}}(D)\leqslant\Wt_{\mathbf{P}}(C)-k+r.$$
\end{corollary}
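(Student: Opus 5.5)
The plan is to read everything off statement (3) of Proposition 3.1. For a $\mathbf{P}$-minimal $D$ that statement gives
$$\dim_{\mathbb{F}}(C)-\dim_{\mathbb{F}}(D)=\dim_{\mathbb{F}}(\delta(\langle\chi(D)\rangle_{\mathbf{P}})+C)-\Wt_{\mathbf{P}}(D).$$
Part (1) then reduces to comparing the subspace $\delta(\langle\chi(D)\rangle_{\mathbf{P}})+C$ with $\delta(I)$.

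\textbf{Part (1), general $I$.} Take $I\subseteq\Omega$ with $\langle\chi(D)\rangle_{\mathbf{P}}\cup\chi(C)\subseteq I$. Then $\delta(\langle\chi(D)\rangle_{\mathbf{P}})\subseteq\delta(I)$, and also $C\subseteq\delta(\chi(C))\subseteq\delta(I)$. Hence $\delta(\langle\chi(D)\rangle_{\mathbf{P}})+C\leqslant_{\mathbb{F}}\delta(I)$. Since $\dim_{\mathbb{F}}(\delta(I))=|I|$, the displayed identity yields the inequality. Equality holds exactly when the subspace dimensions agree, and for a subspace inside a finite-dimensional space this means the two subspaces are equal. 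So equality holds if and only if $\delta(\langle\chi(D)\rangle_{\mathbf{P}})+C=\delta(I)$.

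\textbf{Part (1), the special case.} Set $I=\langle\chi(C)\rangle_{\mathbf{P}}$. This set contains $\chi(C)$. It also contains $\langle\chi(D)\rangle_{\mathbf{P}}$, because $\chi(D)\subseteq\chi(C)$ and taking $\langle\cdot\rangle_{\mathbf{P}}$ preserves inclusion. Moreover $|I|=\Wt_{\mathbf{P}}(C)$ by (2.11), so the general case gives the second inequality and its equality condition.

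\textbf{Part (2).} Let $D\leqslant_{\mathbb{F}}C$ with $\dim_{\mathbb{F}}(D)=r$. By $r$-minimality of $C$ (Definition 2.1), $D$ is $\mathbf{P}$-minimal in $C$. Part (1) then gives $k-r\leqslant\Wt_{\mathbf{P}}(C)-\Wt_{\mathbf{P}}(D)$, which rearranges to $\Wt_{\mathbf{P}}(D)\leqslant\Wt_{\mathbf{P}}(C)-k+r$.

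There is no real obstacle. The only points to state explicitly are $C\subseteq\delta(I)$ and that equal dimensions force equal subspaces.
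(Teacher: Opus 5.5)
Your proposal is correct and follows essentially the paper's proof. It reads part (1) off statement (3) of Proposition 3.1 via $\delta(\langle\chi(D)\rangle_{\mathbf{P}})+C\leqslant_{\mathbb{F}}\delta(I)$ and $\dim_{\mathbb{F}}(\delta(I))=|I|$, takes $I=\langle\chi(C)\rangle_{\mathbf{P}}$ for the special case, and gets part (2) immediately from part (1).
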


\begin{proof}
We only prove (1) from which (2) immediately follows. By Proposition 3.1, we have
$$\dim_{\mathbb{F}}(C)-\dim_{\mathbb{F}}(D)=\dim_{\mathbb{F}}(\delta(\langle\chi(D)\rangle_{\mathbf{P}})+C)-\Wt_{\mathbf{P}}(D).$$
Hence for any $I\subseteq\Omega$ with $\langle\chi(D)\rangle_{\mathbf{P}}\cup\chi(C)\subseteq I$, the desired result follows from $\delta(\langle\chi(D)\rangle_{\mathbf{P}})+C\leqslant_{\mathbb{F}}\delta(I)$ and $\dim_{\mathbb{F}}(\delta(I))=|I|$; in particular, setting $I=\langle\chi(C)\rangle_{\mathbf{P}}$ immediately implies the rest.
\end{proof}

\setlength{\parindent}{2em}
Next, we consider $r$-minimality with respect to different posets.

\begin{corollary}
Let $C\leqslant_{\mathbb{F}}\mathbf{H}$, and let $\mathbf{Q}=(\Omega,\preccurlyeq_{\mathbf{Q}})$ be a poset such that $i\preccurlyeq_{\mathbf{Q}}j$ implies $i\preccurlyeq_{\mathbf{P}}j$ for all $i,j\in\chi(C)$. Then, for $D\leqslant_{\mathbb{F}}C$, if $D$ is $\mathbf{P}$-minimal in $C$, then $D$ is $\mathbf{Q}$-minimal in $C$. Moreover, for $r\in\mathbb{N}$, if $C$ is $r$-minimal with respect to $\mathbf{P}$, then $C$ is $r$-minimal with respect to $\mathbf{Q}$.
\end{corollary}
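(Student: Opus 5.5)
The plan is to use characterization (2) of Proposition 3.1: $D$ is $\mathbf{P}$-minimal in $C$ if and only if $C\cap\delta(\langle\chi(D)\rangle_{\mathbf{P}})=D$. The second assertion then follows from the first by applying it to every $r$-dimensional subspace $D$ of $C$, together with Definition 2.1. So the whole task is to show that $D$ being $\mathbf{P}$-minimal in $C$ implies $C\cap\delta(\langle\chi(D)\rangle_{\mathbf{Q}})=D$.

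The key observation is the inclusion
$$C\cap\delta(\langle\chi(D)\rangle_{\mathbf{Q}})\subseteq C\cap\delta(\langle\chi(D)\rangle_{\mathbf{P}}).$$
To prove it, take $\alpha\in C$ with $\supp(\alpha)\subseteq\langle\chi(D)\rangle_{\mathbf{Q}}$, and let $i\in\supp(\alpha)$. Then there is $j\in\chi(D)$ with $i\preccurlyeq_{\mathbf{Q}}j$. Since $\alpha\in C$ and $D\subseteq C$, both $i$ and $j$ lie in $\chi(C)$. The hypothesis on $\mathbf{Q}$ therefore gives $i\preccurlyeq_{\mathbf{P}}j$, so $i\in\langle\chi(D)\rangle_{\mathbf{P}}$. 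This proves the inclusion.

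Now suppose $D$ is $\mathbf{P}$-minimal in $C$. By Proposition 3.1 (2), the right-hand side of the inclusion equals $D$. The left-hand side always contains $D$, because $\chi(D)\subseteq\langle\chi(D)\rangle_{\mathbf{Q}}$. Hence $C\cap\delta(\langle\chi(D)\rangle_{\mathbf{Q}})=D$, and applying Proposition 3.1 with $\mathbf{Q}$ in place of $\mathbf{P}$ shows that $D$ is $\mathbf{Q}$-minimal in $C$.

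The only delicate point is that the comparability hypothesis is assumed only on $\chi(C)$, not on all of $\Omega$. That is why we work with supports of codewords in $C$: it guarantees both comparands lie in $\chi(C)$, instead of requiring $\langle\chi(D)\rangle_{\mathbf{Q}}\subseteq\langle\chi(D)\rangle_{\mathbf{P}}$ globally, which could fail outside $\chi(C)$.
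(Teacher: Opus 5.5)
Your proof is correct and follows the paper's argument essentially step for step. You show that any $\alpha\in C$ with $\supp(\alpha)\subseteq\langle\chi(D)\rangle_{\mathbf{Q}}$ satisfies $\supp(\alpha)\subseteq\langle\chi(D)\rangle_{\mathbf{P}}$, using $i,j\in\chi(C)$; you then conclude $\alpha\in D$ by Proposition 3.1 and apply Proposition 3.1 to $\mathbf{Q}$, exactly as the paper does.
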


\begin{proof}
It suffices to prove the first assertion. Let $\alpha\in C$ with $\supp(\alpha)\subseteq\langle\chi(D)\rangle_{\mathbf{Q}}$. For $i\in\supp(\alpha)$, there exists $j\in\chi(D)$ with $i\preccurlyeq_{\mathbf{Q}}j$, which, together with $i,j\in\chi(C)$, implies that $i\preccurlyeq_{\mathbf{P}}j$, and hence $i\in\langle\chi(D)\rangle_{\mathbf{P}}$. It follows that $\supp(\alpha)\subseteq\langle\chi(D)\rangle_{\mathbf{P}}$, which, together with Proposition 3.1, implies that $\alpha\in D$. Now an application of Proposition 3.1 to $\mathbf{Q}$ immediately yields the desired result.
\end{proof}

\setlength{\parindent}{2em}
As an immediate consequence of Corollary 3.2, we derive the following relationship between $r$-minimality with respect to $\mathbf{P}$ and with respect to Hamming support.

\begin{corollary}
Let $C\leqslant_{\mathbb{F}}\mathbf{H}$ and $r\in\mathbb{N}$. If $C$ is $r$-minimal with respect to $\mathbf{P}$, then $C$ is $r$-minimal with respect to Hamming support; conversely, if $\chi(C)$ is an anti-chain of $\mathbf{P}$ and $C$ is $r$-minimal with respect to Hamming support, then $C$ is $r$-minimal with respect to $\mathbf{P}$.
\end{corollary}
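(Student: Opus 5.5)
Both directions will come from Corollary 3.2, applied with the anti-chain on $\Omega$ playing one of the two roles. Write $\mathbf{A}=(\Omega,\preccurlyeq_{\mathbf{A}})$ for the anti-chain, where $i\preccurlyeq_{\mathbf{A}}j$ if and only if $i=j$. For any $A\subseteq\mathbf{H}$ we have $\langle\chi(A)\rangle_{\mathbf{A}}=\chi(A)$. Hence $r$-minimality with respect to $\mathbf{A}$ is exactly $r$-minimality with respect to Hamming support, as already noted in Remark 2.1.

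For the first assertion, apply Corollary 3.2 with $\mathbf{Q}=\mathbf{A}$. We must check its hypothesis: for all $i,j\in\chi(C)$, $i\preccurlyeq_{\mathbf{A}}j$ implies $i\preccurlyeq_{\mathbf{P}}j$. If $i\preccurlyeq_{\mathbf{A}}j$ then $i=j$, and $i\preccurlyeq_{\mathbf{P}}i$ holds by reflexivity. Corollary 3.2 then says that $r$-minimality with respect to $\mathbf{P}$ implies $r$-minimality with respect to $\mathbf{A}$, which is Hamming $r$-minimality.

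For the converse, swap the roles and apply Corollary 3.2 with the anti-chain $\mathbf{A}$ in the role of ``$\mathbf{P}$'' and the given poset $\mathbf{P}$ in the role of ``$\mathbf{Q}$''. The hypothesis now requires that $i\preccurlyeq_{\mathbf{P}}j$ implies $i\preccurlyeq_{\mathbf{A}}j$, i.e.\ $i=j$, for all $i,j\in\chi(C)$. This is precisely the assumption that $\chi(C)$ is an anti-chain of $\mathbf{P}$. Corollary 3.2 then transfers $r$-minimality from $\mathbf{A}$ to $\mathbf{P}$, giving the result.

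There is no genuine obstacle; the only point needing care is that Corollary 3.2 compares the orders only on $\chi(C)$. This is what allows the converse to assume merely that $\chi(C)$, rather than all of $\Omega$, is an anti-chain.
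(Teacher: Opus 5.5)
Your proposal is correct and matches the paper's approach: the paper gives no separate proof and presents this result as an immediate consequence of Corollary 3.2. Applying that corollary once with the anti-chain as $\mathbf{Q}$, and once with the roles swapped so that the hypothesis becomes exactly ``$\chi(C)$ is an anti-chain of $\mathbf{P}$'', is precisely what is intended.
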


\setlength{\parindent}{2em}
Now we show that $r$-minimality is invariant under $\mathbf{P}$-weight isometries.

\setlength{\parindent}{0em}
\begin{corollary}
Let $C\leqslant_{\mathbb{F}}\mathbf{H}$, and let $\varphi\in\End_{\mathbb{F}}(\mathbf{H})$ be a $\mathbf{P}$-weight isometry of $\mathbf{H}$. Then, for $D\leqslant_{\mathbb{F}}C$, if $D$ is $\mathbf{P}$-minimal in $C$, then $\varphi[D]$ is $\mathbf{P}$-minimal in $\varphi[C]$. Moreover, for $r\in\mathbb{N}$, if $C$ is $r$-minimal with respect to $\mathbf{P}$, then $\varphi[C]$ is $r$-minimal with respect to $\mathbf{P}$.
\end{corollary}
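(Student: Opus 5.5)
The plan is to use Lemma 2.1, specialized to the constant weight function $\omega\equiv1$. Here a $\mathbf{P}$-weight isometry is a $(\mathbf{P},\omega)$-weight isometry in this special case. Lemma 2.1 then provides $\lambda\in\Aut(\mathbf{P})$ such that $\langle\chi(\varphi[A])\rangle_{\mathbf{P}}=\lambda[\langle\chi(A)\rangle_{\mathbf{P}}]$ for every $A\subseteq\mathbf{H}$. Since $\varphi$ is an $\mathbb{F}$-automorphism of $\mathbf{H}$, it induces a dimension-preserving bijection $Q\mapsto\varphi[Q]$ between the subspaces of $C$ and those of $\varphi[C]$, with inverse $Q'\mapsto\varphi^{-1}[Q']$.

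To prove the first assertion, I will verify Definition 2.1(1) directly, or equivalently condition (5) of Proposition 3.1. Let $Q'\leqslant_{\mathbb{F}}\varphi[C]$ with $\dim_{\mathbb{F}}(Q')=\dim_{\mathbb{F}}(\varphi[D])$ and $\langle\chi(Q')\rangle_{\mathbf{P}}\subseteq\langle\chi(\varphi[D])\rangle_{\mathbf{P}}$. Write $Q'=\varphi[Q]$ with $Q=\varphi^{-1}[Q']\leqslant_{\mathbb{F}}C$; then $\dim_{\mathbb{F}}(Q)=\dim_{\mathbb{F}}(D)$. Lemma 2.1 rewrites the hypothesis as $\lambda[\langle\chi(Q)\rangle_{\mathbf{P}}]\subseteq\lambda[\langle\chi(D)\rangle_{\mathbf{P}}]$. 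Because $\lambda$ is a bijection of $\Omega$, this gives $\langle\chi(Q)\rangle_{\mathbf{P}}\subseteq\langle\chi(D)\rangle_{\mathbf{P}}$. The $\mathbf{P}$-minimality of $D$ in $C$ then yields equality, and applying $\lambda$ again gives $\langle\chi(Q')\rangle_{\mathbf{P}}=\langle\chi(\varphi[D])\rangle_{\mathbf{P}}$, as required.

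The second assertion follows at once. Every $r$-dimensional subspace of $\varphi[C]$ has the form $\varphi[D]$ for some $r$-dimensional $D\leqslant_{\mathbb{F}}C$, and such a $D$ is $\mathbf{P}$-minimal in $C$ by assumption, so the first assertion applies.

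The only point that needs care is that Lemma 2.1 genuinely applies. A $\mathbf{P}$-weight isometry is exactly a $(\mathbf{P},\omega)$-weight isometry with $\omega\equiv1$, so the lemma's hypotheses hold. The lemma yields an automorphism $\lambda$ relating $\mathbf{P}$-supports of arbitrary subsets, not merely of single vectors, and the argument needs nothing beyond this.
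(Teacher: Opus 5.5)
Your proof is correct and follows essentially the paper's route: both take the automorphism $\lambda\in\Aut(\mathbf{P})$ from Lemma 2.1 (with $\omega\equiv1$), use it to transport $\mathbf{P}$-supports, and then invoke the $\mathbf{P}$-minimality of $D$. The only cosmetic difference is that the paper argues elementwise via Proposition 3.1, showing that every $\theta\in\varphi[C]$ with $\supp(\theta)\subseteq\langle\chi(\varphi[D])\rangle_{\mathbf{P}}$ lies in $\varphi[D]$, whereas you check Definition 2.1 directly on equidimensional subspaces $Q'=\varphi[Q]$.
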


\begin{proof}
It suffices to prove the first assertion. By Lemma 2.1, we can choose $\lambda\in\Aut(\mathbf{P})$ such that $\langle\chi(\varphi[A])\rangle_{\mathbf{P}}=\lambda[\langle\chi(A)\rangle_{\mathbf{P}}]$ for all $A\subseteq\mathbf{H}$. For $\theta\in\varphi[C]$ with $\supp(\theta)\subseteq\langle\chi(\varphi[D])\rangle_{\mathbf{P}}$, from $\langle\supp(\theta)\rangle_{\mathbf{P}}=\lambda[\langle\supp(\varphi^{-1}(\theta))\rangle_{\mathbf{P}}]\subseteq\lambda[\langle\chi(D)\rangle_{\mathbf{P}}]$, we deduce that $\langle\supp(\varphi^{-1}(\theta))\rangle_{\mathbf{P}}\subseteq\langle\chi(D)\rangle_{\mathbf{P}}$, which, together with $\varphi^{-1}(\theta)\in C$ and Proposition 3.1, implies that $\varphi^{-1}(\theta)\in D$, which further implies that $\theta\in\varphi[D]$. Now the desired result again follows from Proposition 3.1.
\end{proof}

\setlength{\parindent}{2em}
In the following proposition, we prove the aforementioned fact that $r$-constant weight codes are necessarily $r$-minimal.

\begin{proposition}
Fix $\omega:\Omega\longrightarrow\mathbb{R}^{+}$, and let $C\leqslant_{\mathbb{F}}\mathbf{H}$. For $D\leqslant_{\mathbb{F}}C$, if $\Wt_{(\mathbf{P},\omega)}(D)\leqslant\Wt_{(\mathbf{P},\omega)}(B)$ for all $B\leqslant_{\mathbb{F}}C$ with $\dim_{\mathbb{F}}(B)=\dim_{\mathbb{F}}(D)$, then $D$ is $\mathbf{P}$-minimal in $C$. Moreover, for $r\in\mathbb{N}$, if $C$ is $r$-constant weight with respect to $(\mathbf{P},\omega)$, then $C$ is $r$-minimal with respect to $\mathbf{P}$.
\end{proposition}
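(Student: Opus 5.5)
We argue by contradiction, using Definition 2.1 directly; the second assertion will then follow from the first. Suppose $D$ is not $\mathbf{P}$-minimal in $C$. By Definition 2.1(1), there is some $Q\leqslant_{\mathbb{F}}C$ such that
\[
\dim_{\mathbb{F}}(Q)=\dim_{\mathbb{F}}(D),\qquad \langle\chi(Q)\rangle_{\mathbf{P}}\subsetneqq\langle\chi(D)\rangle_{\mathbf{P}}.
\]
By (2.8), $\Wt_{(\mathbf{P},\omega)}(D)-\Wt_{(\mathbf{P},\omega)}(Q)$ is the sum of $\omega(i)$ over the nonempty set $\langle\chi(D)\rangle_{\mathbf{P}}-\langle\chi(Q)\rangle_{\mathbf{P}}$. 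Since $\omega$ takes values in $\mathbb{R}^{+}$, every summand is positive, so this difference is strictly positive. Hence $\Wt_{(\mathbf{P},\omega)}(Q)<\Wt_{(\mathbf{P},\omega)}(D)$, which contradicts the hypothesis applied with $B=Q$. Therefore $D$ is $\mathbf{P}$-minimal in $C$.

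For the second assertion, suppose $C$ is $r$-constant weight with respect to $(\mathbf{P},\omega)$, and let $D\leqslant_{\mathbb{F}}C$ with $\dim_{\mathbb{F}}(D)=r$. By Definition 2.5, $\Wt_{(\mathbf{P},\omega)}(B)=\Wt_{(\mathbf{P},\omega)}(D)$ for every $r$-dimensional $B\leqslant_{\mathbb{F}}C$. So the minimality hypothesis of the first part holds trivially, and $D$ is $\mathbf{P}$-minimal in $C$. As $D$ was an arbitrary $r$-dimensional subspace, $C$ is $r$-minimal with respect to $\mathbf{P}$. If $r>\dim_{\mathbb{F}}(C)$ there are no such $D$, and the statement is vacuous.

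There is no real obstacle in this argument. The only point that needs care is strict positivity of $\omega$: it is what turns a strict inclusion of $\mathbf{P}$-supports into a strict inequality of weights.
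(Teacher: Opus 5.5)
Your proof is correct and is essentially the paper's argument. The paper phrases it directly rather than by contradiction: it takes $B$ of the same dimension with $\langle\chi(B)\rangle_{\mathbf{P}}\subseteq\langle\chi(D)\rangle_{\mathbf{P}}$ and combines $\Wt_{(\mathbf{P},\omega)}(D)\leqslant\Wt_{(\mathbf{P},\omega)}(B)$ with positivity of $\omega$ to force equality of the $\mathbf{P}$-supports, then treats the second assertion as immediate, exactly as you derive it.
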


\begin{proof}
It suffices to prove the first assertion. Let $B\leqslant_{\mathbb{F}}C$ such that $\dim_{\mathbb{F}}(B)=\dim_{\mathbb{F}}(D)$, $\langle\chi(B)\rangle_{\mathbf{P}}\subseteq\langle\chi(D)\rangle_{\mathbf{P}}$. It follows from $\Wt_{(\mathbf{P},\omega)}(D)\leqslant\Wt_{(\mathbf{P},\omega)}(B)$ and (2.8) that
$$\sum_{i\in\langle\chi(D)\rangle_{\mathbf{P}}}\omega(i)\leqslant\sum_{i\in\langle\chi(B)\rangle_{\mathbf{P}}}\omega(i),$$
which, together with $\langle\chi(B)\rangle_{\mathbf{P}}\subseteq\langle\chi(D)\rangle_{\mathbf{P}}$ and $\omega(i)\in\mathbb{R}^{+}$ for all $i\in\Omega$, immediately implies that $\langle\chi(B)\rangle_{\mathbf{P}}=\langle\chi(D)\rangle_{\mathbf{P}}$, as desired.
\end{proof}

\setlength{\parindent}{2em}
The following proposition is an analogue of Corollary 3.4 for $r$-constant weight codes.

\setlength{\parindent}{0em}
\begin{proposition}
Fix $\omega:\Omega\longrightarrow\mathbb{R}^{+}$. Let $C\leqslant_{\mathbb{F}}\mathbf{H}$ and $r\in\mathbb{N}$ such that $C$ is $r$-constant weight with respect to $(\mathbf{P},\omega)$. Then, for any $(\mathbf{P},\omega)$-weight isometry $\varphi\in\End_{\mathbb{F}}(\mathbf{H})$, $\varphi[C]$ is $r$-constant weight with respect to $(\mathbf{P},\omega)$.
\end{proposition}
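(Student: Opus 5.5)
The proof goes directly through Lemma 2.1, which says that a $(\mathbf{P},\omega)$-weight isometry preserves $\Wt_{(\mathbf{P},\omega)}$ of arbitrary subsets. The key point is that $\varphi$ is an $\mathbb{F}$-automorphism of $\mathbf{H}$. Hence it induces a dimension-preserving bijection between the subspaces of $C$ and the subspaces of $\varphi[C]$. So we can pull the $r$-dimensional subspaces of $\varphi[C]$ back to $C$, where the constant weight hypothesis applies.

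Concretely, let $D',Q'\leqslant_{\mathbb{F}}\varphi[C]$ with $\dim_{\mathbb{F}}(D')=\dim_{\mathbb{F}}(Q')=r$. Set $D\triangleq\varphi^{-1}[D']$ and $Q\triangleq\varphi^{-1}[Q']$. Since $\varphi$ is injective and linear, $D$ and $Q$ are $\mathbb{F}$-subspaces of $C$ of dimension $r$, and $\varphi[D]=D'$, $\varphi[Q]=Q'$. Because $C$ is $r$-constant weight with respect to $(\mathbf{P},\omega)$, Definition 2.4 gives $\Wt_{(\mathbf{P},\omega)}(D)=\Wt_{(\mathbf{P},\omega)}(Q)$.

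Next, the last assertion of Lemma 2.1, applied with $A=D$ and then with $A=Q$, gives
$$\Wt_{(\mathbf{P},\omega)}(D')=\Wt_{(\mathbf{P},\omega)}(\varphi[D])=\Wt_{(\mathbf{P},\omega)}(D)=\Wt_{(\mathbf{P},\omega)}(Q)=\Wt_{(\mathbf{P},\omega)}(\varphi[Q])=\Wt_{(\mathbf{P},\omega)}(Q').$$
This is exactly the condition in Definition 2.4 for $\varphi[C]$, which proves the proposition.

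There is no real obstacle. All the substance sits in Lemma 2.1, which upgrades weight preservation from single vectors to arbitrary sets via the order automorphism $\lambda$ satisfying $\omega\circ\lambda=\omega$. The only thing to check is that $\varphi^{-1}$ carries $r$-dimensional subspaces of $\varphi[C]$ to $r$-dimensional subspaces of $C$, and this holds because $\varphi$ is an automorphism.
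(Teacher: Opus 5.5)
Your proof is correct and follows essentially the same route as the paper. You pull each $r$-dimensional subspace of $\varphi[C]$ back through the automorphism $\varphi$ to an $r$-dimensional subspace of $C$, then use the set-level weight invariance $\Wt_{(\mathbf{P},\omega)}(\varphi[A])=\Wt_{(\mathbf{P},\omega)}(A)$ from Lemma 2.1 to transfer the constant weight.
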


\begin{proof}
Let $\sigma\in\mathbb{R}$ such that $\Wt_{(\mathbf{P},\omega)}(D)=\sigma$ for all $D\leqslant_{\mathbb{F}}C$ with $\dim_{\mathbb{F}}(D)=r$. Then, for any $U\leqslant_{\mathbb{F}}\varphi[C]$ with $\dim_{\mathbb{F}}(U)=r$, it follows from $\varphi^{-1}[U]\leqslant_{\mathbb{F}}C$, $\dim_{\mathbb{F}}(\varphi^{-1}[U])=r$ and Lemma 2.1 that $\Wt_{(\mathbf{P},\omega)}(U)=\Wt_{(\mathbf{P},\omega)}(\varphi^{-1}[U])=\sigma$, as desired.
\end{proof}

\subsection{Characterizing $r$-minimal poset codes in terms of cutting $r$-blocking maps}
\setlength{\parindent}{0em}
\begin{lemma}
Let $k\in\mathbb{Z}^{+}$, $G\in\Mat_{k,\Omega}(\mathbb{F})$, $C=\langle\row(G)\rangle_{\mathbb{F}}$, $f:\Omega\longrightarrow\mathbb{F}^{[k]}$ be the column map of $G$. Moreover, let $B,L\subseteq\mathbb{F}^{k}$, and let $D=\{\gamma G\mid \gamma\in B\}$, $M=\{\gamma G\mid \gamma\in L\}$. Then, for $i\in\Omega$, it holds that $$i\not\in\langle\chi(D)\rangle_{\mathbf{P}}\Longleftrightarrow (\forall~j\in\Omega:i\preccurlyeq_{\mathbf{P}}j\Longrightarrow f(j)\in B^{\bot}).$$
Moreover, it holds that
$$\chi(M)\subseteq\langle\chi(D)\rangle_{\mathbf{P}}\Longleftrightarrow\langle\{f(i)\mid i\in\Omega,(\forall~j\in\Omega:i\preccurlyeq_{\mathbf{P}}j\Longrightarrow f(j)\in B^{\bot})\}\rangle_{\mathbb{F}}\subseteq L^{\bot}.$$
\end{lemma}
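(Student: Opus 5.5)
The first equivalence follows by unwinding the definitions coordinatewise. The key observation is that for $\gamma\in\mathbb{F}^{k}$ and $j\in\Omega$, the $j$-th coordinate of $\gamma G$ equals $\gamma f(j)=\sum_{t=1}^{k}\gamma_t f(j)_t$.

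\emph{Step 1: describe $\chi(D)$.} We have $j\in\chi(D)$ if and only if some $\gamma\in B$ satisfies $\gamma f(j)\neq0$. Equivalently, $j\notin\chi(D)$ if and only if $f(j)\in B^{\bot}$, by the definition (2.5) of $B^{\bot}$.

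\emph{Step 2: pass to the $\mathbf{P}$-support.} By the definition of $\langle\cdot\rangle_{\mathbf{P}}$, $i\notin\langle\chi(D)\rangle_{\mathbf{P}}$ means that no $j\in\chi(D)$ satisfies $i\preccurlyeq_{\mathbf{P}}j$. That is, every $j$ with $i\preccurlyeq_{\mathbf{P}}j$ lies outside $\chi(D)$. By Step 1, this says $f(j)\in B^{\bot}$ for all such $j$, which is the first claim.

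\emph{Step 3: the second equivalence.} Set
$$S=\{i\in\Omega\mid \forall~j\in\Omega:i\preccurlyeq_{\mathbf{P}}j\Longrightarrow f(j)\in B^{\bot}\}.$$
By the first claim, $S=\Omega-\langle\chi(D)\rangle_{\mathbf{P}}$. Hence
$$\chi(M)\subseteq\langle\chi(D)\rangle_{\mathbf{P}}\Longleftrightarrow \chi(M)\cap S=\emptyset\Longleftrightarrow \forall~i\in S:\ i\notin\chi(M).$$
Applying Step 1 with $M$ and $L$ in place of $D$ and $B$, the condition $i\notin\chi(M)$ is equivalent to $f(i)\in L^{\bot}$.

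\emph{Step 4: pass to the span.} Since $L^{\bot}$ is an $\mathbb{F}$-subspace of $\mathbb{F}^{[k]}$, the condition that $f(i)\in L^{\bot}$ for all $i\in S$ is equivalent to $\langle f[S]\rangle_{\mathbb{F}}\subseteq L^{\bot}$. This gives the second claim.

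I expect no step to be a real obstacle; the argument is entirely definitional. The only points needing care are that $B$ and $L$ are arbitrary subsets, not subspaces, which is harmless because $B^{\bot}$ and $L^{\bot}$ are subspaces regardless, and that the identity $(\gamma G)_j=\gamma f(j)$ is used consistently.
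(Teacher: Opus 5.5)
Your proof is correct and follows the paper's argument essentially step for step. The identity $(\gamma G)_j=\gamma f(j)$ gives $j\notin\chi(D)\Leftrightarrow f(j)\in B^{\bot}$, unwinding $\langle\cdot\rangle_{\mathbf{P}}$ gives the first equivalence, and complementing and applying the same identity to $L$ gives the second; your Step 4, which uses that $L^{\bot}$ is a subspace to pass from the pointwise condition to the span, makes explicit a point the paper leaves implicit.
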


\begin{proof}
First of all, for any $j\in\Omega$, we infer that $j\not\in\chi(D)$ if and only if $(\gamma G)_{j}=\gamma\cdot f(j)=0$ for all $\gamma\in B$, if and only if $f(j)\in B^{\bot}$. Hence for $i\in\Omega$, $i\not\in\langle\chi(D)\rangle_{\mathbf{P}}$ if and only if $i\not\preccurlyeq_{\mathbf{P}}j$ for all $j\in\chi(D)$, if and only if $i\not\preccurlyeq_{\mathbf{P}}j$ for all $j\in\Omega$ with $f(j)\not\in B^{\bot}$, which establishes the first assertion. From the first assertion, we deduce that $\chi(M)\subseteq\langle\chi(D)\rangle_{\mathbf{P}}$ if and only if $\Omega-\langle\chi(D)\rangle_{\mathbf{P}}\subseteq\Omega-\chi(M)$, if and only if for any $i\in \Omega$ such that $(\forall~j\in \Omega:i\preccurlyeq_{\mathbf{P}}j\Longrightarrow f(j)\in B^{\bot})$, it holds that $f(i)\in L^{\bot}$, which further establishes the second assertion.
\end{proof}

\setlength{\parindent}{2em}
Now we are ready to prove the main result of this subsection.

\setlength{\parindent}{0em}
\begin{theorem}
Let $C\leqslant_{\mathbb{F}}\mathbf{H}$ with $\dim_{\mathbb{F}}(C)=k\geqslant1$, $G\in\Mat_{k,\Omega}(\mathbb{F})$ be a generator matrix of $C$, and let $f:\Omega\longrightarrow\mathbb{F}^{[k]}$ be the column map of $G$. Then, for $B\leqslant_{\mathbb{F}}\mathbb{F}^{k}$ and $D\triangleq\{\gamma G\mid \gamma\in B\}$, $D$ is $\mathbf{P}$-minimal in $C$ if and only if
\begin{equation}\langle\{f(i)\mid i\in\Omega,(\forall~j\in\Omega:i\preccurlyeq_{\mathbf{P}}j\Longrightarrow f(j)\in B^{\bot})\}\rangle_{\mathbb{F}}=B^{\bot}.\end{equation}
Moreover, for $r\in\{0,1,\dots,k\}$, $C$ is $r$-minimal with respect to $\mathbf{P}$ if and only if $f$ is a cutting $r$-blocking map of $\mathbb{F}^{[k]}$ with respect to $\mathbf{P}$.
\end{theorem}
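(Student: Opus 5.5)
The plan is to combine Proposition 3.1, specifically the equivalence $(1)\Longleftrightarrow(4)$, with the second assertion of Lemma 3.1. The key point is that $G$ has full row rank $k$, so the map $\gamma\mapsto\gamma G$ is an isomorphism $\mathbb{F}^{k}\to C$. Consequently, subsets $A\subseteq C$ correspond bijectively to subsets $L\subseteq\mathbb{F}^{k}$ via $A=\{\gamma G\mid\gamma\in L\}$, and subspaces correspond to subspaces of the same dimension. In particular, $A\subseteq D$ if and only if $L\subseteq B$.

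Write $S_B$ for the set $\{f(i)\mid i\in\Omega,(\forall~j\in\Omega:i\preccurlyeq_{\mathbf{P}}j\Longrightarrow f(j)\in B^{\bot})\}$. Note that $S_B\subseteq B^{\bot}$ always, by taking $j=i$. Thus (3.1) is equivalent to $B^{\bot}\subseteq\langle S_B\rangle_{\mathbb{F}}$.

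By Proposition 3.1, $D$ is $\mathbf{P}$-minimal in $C$ if and only if every $M=\{\gamma G\mid\gamma\in L\}$ with $\chi(M)\subseteq\langle\chi(D)\rangle_{\mathbf{P}}$ satisfies $L\subseteq B$. By Lemma 3.1, the hypothesis $\chi(M)\subseteq\langle\chi(D)\rangle_{\mathbf{P}}$ is equivalent to $\langle S_B\rangle_{\mathbb{F}}\subseteq L^{\bot}$, which in turn is equivalent to $L\subseteq\langle S_B\rangle_{\mathbb{F}}^{\bot}$ (the row-vector annihilator). So $\mathbf{P}$-minimality says: every $L\subseteq \langle S_B\rangle^{\bot}$ satisfies $L\subseteq B$. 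Equivalently, $\langle S_B\rangle^{\bot}\subseteq B$, using $L=\langle S_B\rangle^{\bot}$ for one direction.

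Taking annihilators in $\mathbb{F}^{[k]}$, and using the standard double-annihilator identity for the nondegenerate pairing $\mathbb{F}^{k}\times\mathbb{F}^{[k]}\to\mathbb{F}$ (so $(W^{\bot})^{\bot}=W$ for subspaces), this becomes $B^{\bot}\subseteq\langle S_B\rangle_{\mathbb{F}}$. Combined with the reverse inclusion noted above, this gives (3.1). The only care needed is to keep the row/column annihilators and the direction of the inclusions straight; that is the main place an error could creep in.

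For the second assertion, $C$ is $r$-minimal exactly when every $r$-dimensional $D\leqslant C$ is $\mathbf{P}$-minimal. Such $D$ correspond bijectively to $r$-dimensional $B\leqslant\mathbb{F}^{k}$, and $B\mapsto B^{\bot}$ is a bijection onto the $(k-r)$-dimensional subspaces $V$ of $\mathbb{F}^{[k]}$. Condition (3.1) for $B$ is then literally (2.6) for $V=B^{\bot}$. Hence $C$ is $r$-minimal if and only if (2.6) holds for all such $V$, that is, if and only if $f$ is a cutting $r$-blocking map with respect to $\mathbf{P}$.
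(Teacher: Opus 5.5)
Your proof is correct, but it organizes the first assertion differently from the paper.

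The paper proves the two directions separately. For sufficiency, it takes a same-dimensional $M=\{\gamma G\mid\gamma\in L\}$ with $\langle\chi(M)\rangle_{\mathbf{P}}\subseteq\langle\chi(D)\rangle_{\mathbf{P}}$. Lemma 3.1 and (3.1) give $B^{\bot}\subseteq L^{\bot}$, and comparing dimensions forces $L=B$. For necessity, it argues by contradiction. It assumes the left side $A$ of (3.1) is a proper subspace of $B^{\bot}$, and notes $B^{\bot}\neq\mathbb{F}^{[k]}$ because the columns of $G$ span. It then builds $W=U^{\bot}$, of the same dimension as $B^{\bot}$, containing $A$ together with a line outside $B^{\bot}$. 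This makes $Q=\{\gamma G\mid\gamma\in U\}$ a competing subcode with $Q\neq D$, violating Proposition 3.1(5).

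You instead use the subset form (4) of Proposition 3.1, which has no dimension constraint. Minimality then becomes the single inclusion $\langle S_B\rangle_{\mathbb{F}}^{\bot}\subseteq B$, and double annihilation turns this into $B^{\bot}\subseteq\langle S_B\rangle_{\mathbb{F}}$. Both directions come out of one chain of equivalences. Full row rank of $G$ enters only as injectivity of $\gamma\mapsto\gamma G$, and no separate handling of the case $B^{\bot}=\mathbb{F}^{[k]}$ is needed.

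Both arguments ultimately rest on the nontrivial implication $(1)\Longrightarrow(2)$ of Proposition 3.1. Yours is shorter and symmetric. The paper's works with equal-dimensional subcodes, as in Definition 2.1, and exhibits an explicit witness $Q$ of non-minimality when (3.1) fails. Your treatment of the ``moreover'' part, via the bijection $B\mapsto B^{\bot}$, is the same as the paper's.
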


\begin{proof}
First, suppose that (3.1) holds. Let $M\leqslant_{\mathbb{F}}C$ such that $\dim_{\mathbb{F}}(M)=\dim_{\mathbb{F}}(D)$ and $\langle\chi(M)\rangle_{\mathbf{P}}\subseteq\langle\chi(D)\rangle_{\mathbf{P}}$, and let $L\leqslant_{\mathbb{F}}\mathbb{F}^{k}$ such that $M=\{\gamma G\mid \gamma\in L\}$. By Lemma 3.1 and (3.1), we have $B^{\bot}\subseteq L^{\bot}$, which, together with $\dim_{\mathbb{F}}(L)=\dim_{\mathbb{F}}(B)$, implies that $L=B$, which further implies that $M=D$, as desired.

\hspace*{4mm}\,\,Second, suppose that $D$ is $\mathbf{P}$-minimal in $C$. Let $A$ denote the left hand side of (3.1), and suppose by way of contradiction that $A\subsetneqq B^{\bot}$. Noticing that $\langle\col(G)\rangle_{\mathbb{F}}=\mathbb{F}^{[k]}$, we have $B^{\bot}\neq\mathbb{F}^{[k]}$. Hence we can choose $I\leqslant_{\mathbb{F}}\mathbb{F}^{[k]}$ such that $\dim_{\mathbb{F}}(I)=1$ and $I\nsubseteq B^{\bot}$. Since $\dim_{\mathbb{F}}(A+I)\leqslant\dim_{\mathbb{F}}(B^{\bot})$, we can choose $W\leqslant_{\mathbb{F}}\mathbb{F}^{[k]}$ such that $\dim_{\mathbb{F}}(W)=\dim_{\mathbb{F}}(B^{\bot})$ and $A+I\subseteq W$. Let $U\leqslant_{\mathbb{F}}\mathbb{F}^{k}$ such that $W=U^{\bot}$, and let $Q=\{\gamma G\mid\gamma\in U\}\leqslant_{\mathbb{F}}C$. Then, we have $\dim_{\mathbb{F}}(Q)=\dim_{\mathbb{F}}(U)=\dim_{\mathbb{F}}(B)=\dim_{\mathbb{F}}(D)$. By $I\nsubseteq B^{\bot}$ and $I\subseteq W=U^{\bot}$, we have $B\neq U$, which implies that $D\neq Q$. Moreover, from $A\subseteq W=U^{\bot}$ and Lemma 3.1, we deduce that $\langle\chi(Q)\rangle_{\mathbf{P}}\subseteq\langle\chi(D)\rangle_{\mathbf{P}}$. It then follows from Proposition 3.1 that $D$ is not $\mathbf{P}$-minimal in $C$, a contradiction, as desired.

\hspace*{4mm}\,\,Finally, for $r\in\{0,1,\dots,k\}$, the ``moreover'' part follows from the first part and the fact that $\{L^{\bot}\mid L\leqslant_{\mathbb{F}}\mathbb{F}^{k},\dim_{\mathbb{F}}(L)=r\}=\{V\leqslant_{\mathbb{F}}\mathbb{F}^{[k]}\mid\dim_{\mathbb{F}}(V)=k-r\}$, as desired.
\end{proof}

\setlength{\parindent}{2em}
An application of Theorem 3.2 to the anti-chain poset immediately yields the following equivalence between $r$-minimal Hamming metric codes and cutting $r$-blocking sets. Recently, the same result has also been proven in Bishnoi and Tomon \cite{9}, and by Chen, Pan and Xie in \cite{59}.

\begin{corollary}(c.f. [9, Theorem 6] and [59, Theorem 4.2])
Let $C\leqslant_{\mathbb{F}}\mathbf{H}$ with $\dim_{\mathbb{F}}(C)=k\geqslant1$, $G\in\Mat_{k,\Omega}(\mathbb{F})$ be a generator matrix of $C$, and let $S=\col(G)$. Then, for $B\leqslant_{\mathbb{F}}\mathbb{F}^{k}$ and $D\triangleq\{\gamma G\mid \gamma\in B\}$, $D$ is Hamming minimal in $C$ if and only if $\langle S\cap{B^{\bot}}\rangle_{\mathbb{F}}=B^{\bot}$. Moreover, for $r\in\{0,1,\dots,k\}$, $C$ is $r$-minimal with respect to Hamming support if and only if $S$ is a cutting $r$-blocking set of $\mathbb{F}^{[k]}$.
\end{corollary}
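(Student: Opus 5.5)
This is Theorem 3.2 specialized to the case where $\mathbf{P}$ is the anti-chain on $\Omega$. The proof consists of simplifying condition (3.1) and the definition of a cutting $r$-blocking map under that specialization.

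\emph{Step 1: the notions coincide.} Let $\mathbf{P}$ be the anti-chain. Then $\langle A\rangle_{\mathbf{P}}=A$ for every $A\subseteq\Omega$, so $\langle\chi(Q)\rangle_{\mathbf{P}}=\chi(Q)$ for every $Q\leqslant_{\mathbb{F}}C$. Comparing (1) and (2) of Definition 2.1, $D$ is $\mathbf{P}$-minimal in $C$ if and only if $D$ is Hamming minimal in $C$. Likewise, $C$ is $r$-minimal with respect to $\mathbf{P}$ if and only if $C$ is $r$-minimal with respect to Hamming support.

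\emph{Step 2: simplify the condition.} In the anti-chain, $i\preccurlyeq_{\mathbf{P}}j$ holds only when $j=i$. Hence the condition ``$\forall\,j\in\Omega:\ i\preccurlyeq_{\mathbf{P}}j\Rightarrow f(j)\in B^{\bot}$'' reduces to $f(i)\in B^{\bot}$. The spanning set on the left of (3.1) is therefore $\{f(i)\mid i\in\Omega,\ f(i)\in B^{\bot}\}=\ran(f)\cap B^{\bot}$. Since $\ran(f)=\col(G)=S$, condition (3.1) becomes $\langle S\cap B^{\bot}\rangle_{\mathbb{F}}=B^{\bot}$. The first assertion then follows from the first part of Theorem 3.2 combined with Step 1.

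\emph{Step 3: the ``moreover'' part.} By the same reduction applied to (2.6) (this is Remark 2.2), $f$ is a cutting $r$-blocking map of $\mathbb{F}^{[k]}$ with respect to the anti-chain if and only if $\langle S\cap V\rangle_{\mathbb{F}}=V$ for every $V\leqslant_{\mathbb{F}}\mathbb{F}^{[k]}$ of dimension $k-r$. That is precisely the statement that $S$ is a cutting $r$-blocking set (Definition 2.2). The second part of Theorem 3.2 together with Step 1 then gives the claim.

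No step is a real obstacle. The only point needing care is that the reduction in Step 2 uses the anti-chain order on all of $\Omega$, which is exactly what gives $\langle\cdot\rangle_{\mathbf{P}}=\mathrm{id}$.
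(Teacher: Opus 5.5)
Your proposal is correct and follows the paper's route. The paper obtains Corollary 3.5 exactly by applying Theorem 3.2 to the anti-chain poset: there $\langle\cdot\rangle_{\mathbf{P}}$ is the identity, condition (3.1) becomes $\langle S\cap B^{\bot}\rangle_{\mathbb{F}}=B^{\bot}$, and cutting $r$-blocking maps become cutting $r$-blocking sets (Remark 2.2); you simply write out these reductions, which the paper leaves implicit.
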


\begin{remark}
When $r=1$, Corollary 3.5 recovers [1, Theorem 3.4] and [47, Theorem 3.2] which have been established for minimal Hamming metric codes and cutting blocking sets.
\end{remark}

\setlength{\parindent}{2em}
A combination of Theorem 3.2 and Proposition 2.1 immediately yields the following result.

\begin{corollary}
Let $C\leqslant_{\mathbb{F}}\mathbf{H}$, and let $r\in\{0,1,\dots,\dim_{\mathbb{F}}(C)-1\}$ such that $C$ is $r$-minimal with respect to $\mathbf{P}$. Then, for any $s\in\{0,1,\dots,r\}$, $C$ is $s$-minimal with respect to $\mathbf{P}$.
\end{corollary}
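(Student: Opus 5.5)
The plan is to pass to the column map and combine Theorem 3.2 with Proposition 2.1, as the sentence preceding the statement suggests.

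Let $k=\dim_{\mathbb{F}}(C)$. Since $r\leqslant k-1$, we have $k\geqslant1$, so $C$ has a generator matrix $G\in\Mat_{k,\Omega}(\mathbb{F})$. Let $f:\Omega\longrightarrow\mathbb{F}^{[k]}$ be its column map.

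First, because $C$ is $r$-minimal with respect to $\mathbf{P}$ and $r\in\{0,1,\dots,k\}$, the ``moreover'' part of Theorem 3.2 shows that $f$ is a cutting $r$-blocking map of $\mathbb{F}^{[k]}$ with respect to $\mathbf{P}$.

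Second, since $r\in\{0,1,\dots,k-1\}$, Proposition 2.1 applies with $X=\mathbb{F}^{[k]}$. It gives that $f$ is a cutting $s$-blocking map of $\mathbb{F}^{[k]}$ with respect to $\mathbf{P}$ for every $s\in\{0,1,\dots,r\}$.

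Third, each such $s$ lies in $\{0,1,\dots,k\}$, so the converse direction of Theorem 3.2 yields that $C$ is $s$-minimal with respect to $\mathbf{P}$.

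No step is a real obstacle; the only care needed is checking the index ranges. Theorem 3.2 requires $k\geqslant1$ and $r,s\in\{0,\dots,k\}$, and Proposition 2.1 requires $r\leqslant k-1$; both are guaranteed by the hypothesis $r\leqslant\dim_{\mathbb{F}}(C)-1$.

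A direct route through Theorem 3.1 is also available: given an $s$-dimensional $D\subseteq W$ with $\dim_{\mathbb{F}}(W)=s+1$, extend both by a common complement to dimensions $r$ and $r+1$. However, the strict inclusion of $\mathbf{P}$-supports does not obviously survive that extension. This is why I would use the blocking-map route, where Proposition 2.1 has already handled the monotonicity in the dual picture.
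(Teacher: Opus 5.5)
Your proposal is correct and is exactly the paper's argument: translate $r$-minimality into the cutting $r$-blocking property of the column map via Theorem 3.2, descend to $s\leqslant r$ with Proposition 2.1, and translate back with Theorem 3.2. Your check that $r\leqslant\dim_{\mathbb{F}}(C)-1$ guarantees $k\geqslant1$ and the index ranges needed by both results is the only bookkeeping required.
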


\begin{corollary}
Let $r\in\{0,1,\dots,n-1\}$, $C\leqslant_{\mathbb{F}}\mathbf{H}$ with $\dim_{\mathbb{F}}(C)=r+1$, $G\in\Mat_{r+1,\Omega}(\mathbb{F})$ be a generator matrix of $C$, and let $f:\Omega\longrightarrow\mathbb{F}^{[r+1]}$ be the column map of $G$. Then, $C$ is $r$-minimal with respect to $\mathbf{P}$ if and only if for any $I\leqslant_{\mathbb{F}}\mathbb{F}^{[r+1]}$ with $\dim_{\mathbb{F}}(I)=1$, there exists $j\in\max_{\mathbf{P}}(\chi(C))$ with $f(j)\in I$.
\end{corollary}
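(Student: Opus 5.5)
Since $\dim_{\mathbb{F}}(C)=r+1$, the only $(r+1)$-dimensional subspace of $C$ is $C$ itself. So I would use ``$(1)\Longleftrightarrow(4)$'' of Theorem 3.1: $C$ is $r$-minimal with respect to $\mathbf{P}$ if and only if $\langle\chi(D)\rangle_{\mathbf{P}}\subsetneqq\langle\chi(C)\rangle_{\mathbf{P}}$ for every $D\leqslant_{\mathbb{F}}C$ with $\dim_{\mathbb{F}}(D)=r$. Because $D\subseteq C$, the inclusion $\langle\chi(D)\rangle_{\mathbf{P}}\subseteq\langle\chi(C)\rangle_{\mathbf{P}}$ always holds. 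The ideal $\langle\chi(C)\rangle_{\mathbf{P}}$ is generated by $\max_{\mathbf{P}}(\chi(C))$, and these are also its maximal elements. Hence strictness fails exactly when every $j\in\max_{\mathbf{P}}(\chi(C))$ lies in $\langle\chi(D)\rangle_{\mathbf{P}}$.

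For such $j$, I claim $j\in\langle\chi(D)\rangle_{\mathbf{P}}$ if and only if $j\in\chi(D)$. One direction is trivial. For the other, if $j\preccurlyeq_{\mathbf{P}}l$ with $l\in\chi(D)\subseteq\chi(C)$, then maximality of $j$ in $\chi(C)$ forces $l=j$. Thus $C$ is $r$-minimal if and only if, for every $r$-dimensional $D\leqslant_{\mathbb{F}}C$, some $j\in\max_{\mathbf{P}}(\chi(C))$ satisfies $j\notin\chi(D)$.

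Next I translate this into the column map using Lemma 3.1, or directly from its proof. Write $D=\{\gamma G\mid\gamma\in B\}$ with $B\leqslant_{\mathbb{F}}\mathbb{F}^{r+1}$ and $\dim_{\mathbb{F}}(B)=r$; since $G$ has full row rank, $D\mapsto B$ is a bijection. Then $j\notin\chi(D)$ if and only if $f(j)\in B^{\bot}$. As $B$ ranges over the $r$-dimensional subspaces of $\mathbb{F}^{r+1}$, the space $B^{\bot}$ ranges exactly over the $1$-dimensional subspaces $I\leqslant_{\mathbb{F}}\mathbb{F}^{[r+1]}$, as noted at the end of the proof of Theorem 3.2. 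Combining, $C$ is $r$-minimal if and only if every such $I$ contains $f(j)$ for some $j\in\max_{\mathbf{P}}(\chi(C))$, which is the statement.

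The only delicate point is the reduction from $\langle\chi(D)\rangle_{\mathbf{P}}$ to $\chi(D)$ at the maximal elements of $\chi(C)$, which rests on $\chi(D)\subseteq\chi(C)$. The rest is bookkeeping with the duality $B\leftrightarrow B^{\bot}$.
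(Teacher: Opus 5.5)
Your proof is correct, but it takes a somewhat different route from the paper's. The paper specializes Theorem 3.2. Since $\dim_{\mathbb{F}}(C)=r+1$, the spaces $B^{\bot}$ to be tested in (3.1) are exactly the lines $I\leqslant_{\mathbb{F}}\mathbb{F}^{[r+1]}$, and the span in (3.1) equals such an $I$ precisely when the set in (3.1) contains a nonzero vector. The paper then states, without further argument, that this happens if and only if $f(j)\in I$ for some $j\in\max_{\mathbf{P}}(\chi(C))$.

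You instead work on the code side. You use ``$(1)\Longleftrightarrow(4)$'' of Theorem 3.1 with $W=C$, reduce the strict inclusion $\langle\chi(D)\rangle_{\mathbf{P}}\subsetneqq\langle\chi(C)\rangle_{\mathbf{P}}$ to the maximal elements of $\chi(C)$, and pass to columns only at the end via $j\notin\chi(D)\Longleftrightarrow f(j)\in B^{\bot}$ from Lemma 3.1.

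Your maximality argument is exactly the content of the paper's unproved ``fact''. An index $i$ with $f(i)\neq0$ that qualifies in (3.1) lies below some $j\in\max_{\mathbf{P}}(\chi(C))$, which then has $f(j)\in I$. Conversely, such a $j$ qualifies because every $l\neq j$ with $j\preccurlyeq_{\mathbf{P}}l$ lies outside $\chi(C)$, so $f(l)=0$. Your write-up therefore makes explicit the order-theoretic step the paper's one-line proof leaves to the reader, and it does not need Theorem 3.2 at all. The paper's route, in turn, exhibits the corollary directly as the line case of the cutting-blocking-map criterion, which is the viewpoint in which it is applied later in Propositions 3.3 and 3.6.
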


\begin{proof}
This follows from Theorem 3.2 and the fact that for $I\leqslant_{\mathbb{F}}\mathbb{F}^{[r+1]}$ with $\dim_{\mathbb{F}}(I)=1$, (3.1) holds true for $B^{\bot}=I$ if and only if there exists $j\in\max_{\mathbf{P}}(\chi(C))$ with $f(j)\in I$.
\end{proof}

\setlength{\parindent}{2em}
We end this subsection with the following proposition in which we characterize $(k-1)$-minimal $k$-dimensional codes in terms of constant weight codes with respect to weighted poset metric.

\begin{proposition}
Let $C\leqslant_{\mathbb{F}}\mathbf{H}$ with $\dim_{\mathbb{F}}(C)=k\geqslant1$. Then, $C$ is $(k-1)$-minimal with respect to $\mathbf{P}$ if and only if there exists $\omega:\Omega\longrightarrow\mathbb{Z}^{+}$ such that $C$ is $(k-1)$-constant weight with respect to $(\mathbf{P},\omega)$.
\end{proposition}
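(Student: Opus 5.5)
The ``if'' direction is immediate from Proposition 3.2: if $C$ is $(k-1)$-constant weight with respect to $(\mathbf{P},\omega)$, then every $(k-1)$-dimensional subspace has the same, hence minimal, $(\mathbf{P},\omega)$-weight. So every such subspace is $\mathbf{P}$-minimal in $C$. The work is in the ``only if'' direction, where I will construct $\omega$ explicitly. The case $k=1$ is trivial, since the only $0$-dimensional subspace is $\{0\}$; so assume $k\geqslant2$.

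Fix a generator matrix $G$ with column map $f:\Omega\longrightarrow\mathbb{F}^{[k]}$. Every $(k-1)$-dimensional $D\leqslant_{\mathbb{F}}C$ has the form $D=\{\gamma G\mid\gamma\in B\}$ with $B^{\bot}=I$ a one-dimensional subspace of $\mathbb{F}^{[k]}$. By the first assertion of Lemma 3.1, $\Omega-\langle\chi(D)\rangle_{\mathbf{P}}$ consists of those $i$ such that $f(j)\in I$ for all $j\succcurlyeq_{\mathbf{P}}i$; denote this set by $T_I$.

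Write $M=\max_{\mathbf{P}}(\chi(C))$. By Corollary 3.7 (valid since $\dim C=(k-1)+1$), $(k-1)$-minimality is equivalent to the following: for every line $I$ there is $j\in M$ with $f(j)\in I$. Such $j$ lies in $T_I$, because $j$ is maximal in $\chi(C)$ and every element strictly above $j$ has zero column, which lies in $I$. Conversely, $T_I\cap M=\{j\in M\mid f(j)\in I\}$, and these sets partition $M$ as $I$ ranges over the $(q^k-1)/(q-1)$ lines (columns in $M$ are nonzero).

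Then
\[
\Wt_{(\mathbf{P},\omega)}(D)=\sum_{i\in\Omega}\omega(i)-\sum_{i\in T_I}\omega(i),
\]
so it suffices to choose positive integer weights making $\sum_{i\in T_I}\omega(i)$ independent of $I$.

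Set $\omega(i)=1$ for $i\notin M$. Let $N$ be a large integer, say $N=n+1$. For each line $I$, put $m_I=|T_I\cap M|\geqslant1$ and $c_I=|T_I-M|$; note that the sets $T_I-M$ may overlap for different $I$ (for instance, zero columns lie in every $T_I$), but $c_I\leqslant n$ regardless. Then assign the weights on $T_I\cap M$ so that they sum to $N-c_I$, which is positive by the choice of $N$. This is possible with positive integers because $m_I\leqslant|M|\leqslant n$ and $N-c_I\geqslant m_I$ when $N\geqslant 2n$; so I will instead take $N=2n$. Concretely, give weight $N-c_I-m_I+1$ to one element of $T_I\cap M$ and weight $1$ to the others.

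Because the sets $T_I\cap M$ are pairwise disjoint, these assignments do not conflict. Then $\sum_{i\in T_I}\omega(i)=c_I+(N-c_I)=N$ for every $I$, and every $(k-1)$-subspace has weight $\sum_\Omega\omega-N$.

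The main point to verify carefully is the identification $T_I\cap M=\{j\in M: f(j)\in I\}$ and the disjointness across distinct lines. I will also check that elements of $M$ can never lie in $T_I-M$-type overlap. Both follow since columns indexed by $M$ are nonzero and thus lie in exactly one line.
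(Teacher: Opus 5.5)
Your proof is correct and follows essentially the paper's route: Lemma 3.1 identifies the complement of $\langle\chi(D)\rangle_{\mathbf{P}}$, Corollary 3.7 ensures each line receives at least one element of $\max_{\mathbf{P}}(\chi(C))$, and disjointness across lines lets you equalize the block sums. The only difference is bookkeeping. The paper notes that the full sets $\rho(I)=T_I\cap\langle\chi(C)\rangle_{\mathbf{P}}$ are already pairwise disjoint, so the only common part of the $T_I$ is $\Omega-\langle\chi(C)\rangle_{\mathbf{P}}$. You instead fix $\omega=1$ off $M$ and compensate on the disjoint pieces $T_I\cap M$, which works just as well; your harmless aside that zero columns lie in every $T_I$ holds only for $i\notin\langle\chi(C)\rangle_{\mathbf{P}}$.
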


\begin{proof}
Noticing that the ``if'' part follows from Proposition 3.2, we only prove the ``only if'' part. Let $G\in\Mat_{k,\Omega}(\mathbb{F})$ be a generator matrix of $C$, and let $f:\Omega\longrightarrow\mathbb{F}^{[k]}$ be the column map of $G$. Moreover, set $\rho(I)\triangleq\{i\in\langle\chi(C)\rangle_{\mathbf{P}}\mid\forall~j\in\Omega:\,i\preccurlyeq_{\mathbf{P}}j\Longrightarrow f(j)\in I\}$ for all $I\in\Delta$, where $\Delta\triangleq\{L\leqslant_{\mathbb{F}}\mathbb{F}^{[k]}\mid\dim_{\mathbb{F}}(L)=1\}$. By Corollary 3.7, we have $\rho(I)\neq\emptyset$ for all $I\in\Delta$; moreover, it is straightforward to verify that $\rho(I)\cap\rho(J)=\emptyset$ for all $I\neq J\in\Delta$. Hence we can choose $\sigma\in\mathbb{Z}^{+}$ and $\omega:\Omega\longrightarrow\mathbb{Z}^{+}$ such that $\sum_{i\in\rho(I)}\omega(i)=\sigma$ for all $I\in\Delta$. Now let $D\leqslant_{\mathbb{F}}C$ with $\dim_{\mathbb{F}}(D)=k-1$, and let $B\leqslant_{\mathbb{F}}\mathbb{F}^{k}$ such that $D=\{\gamma G\mid\gamma\in B\}$. It then follows from Lemma 3.1 and $B^{\bot}\in\Delta$ that $\langle\chi(C)\rangle_{\mathbf{P}}-\langle\chi(D)\rangle_{\mathbf{P}}=\rho(B^{\bot})$, which, together with (2.8), implies that $\Wt_{(\mathbf{P},\omega)}(D)=\Wt_{(\mathbf{P},\omega)}(C)-\sigma$, as desired.
\end{proof}

\subsection{Characterizing $r$-minimal poset codes in terms of $(\mathbf{P},\omega)$-weight}

\setlength{\parindent}{2em}
Throughout this subsection, we fix $\omega:\Omega\longrightarrow\mathbb{R}^{+}$ and consider the weighted poset $(\mathbf{P},\omega)$.

\setlength{\parindent}{0em}
\begin{lemma}
Let $D\leqslant_{\mathbb{F}}\mathbf{H}$ with $\dim_{\mathbb{F}}(D)=r$, $\alpha\in\mathbf{H}$, and let $J=\langle\chi(D)\rangle_{\mathbf{P}}-\left(\bigcap_{\beta\in D}\langle\supp(\alpha-\beta)\rangle_{\mathbf{P}}\right)$. Then, it holds that
\begin{eqnarray*}
\begin{split}
\sum_{\beta\in D}\wt_{(\mathbf{P},\omega)}(\alpha-\beta)&=q^{r}\Wt_{(\mathbf{P},\omega)}(D)+q^{r}\left(\sum_{i\in\langle\supp(\alpha)\rangle_{\mathbf{P}}-\langle\chi(D)\rangle_{\mathbf{P}}}\omega(i)\right)-\left(\sum_{i\in J}|\{\beta\in D\mid \beta\mid_{\langle\{i\}\rangle_{\overline{\mathbf{P}}}}=0\}|\omega(i)\right)\\
&\geqslant q^{r}\Wt_{(\mathbf{P},\omega)}(D)-\left(\sum_{i\in \langle\chi(D)\rangle_{\mathbf{P}}}|\{\beta\in D\mid \beta\mid_{\langle\{i\}\rangle_{\overline{\mathbf{P}}}}=0\}|\omega(i)\right)\\
&=\sum_{\beta\in D-\{0\}}\wt_{(\mathbf{P},\omega)}(\beta),
\end{split}
\end{eqnarray*}
where for any $\beta\in \mathbf{H}$ and $I\subseteq \Omega$, $\beta\mid_{I}\triangleq(\beta_i\mid i\in I)\in\mathbb{F}^{I}$. Moreover, the second ``$\geqslant$'' becomes equlity if and only if $\supp(\alpha)\subseteq\langle\chi(D)\rangle_{\mathbf{P}}$ and $\bigcap_{\beta\in D}\langle\supp(\alpha-\beta)\rangle_{\mathbf{P}}=\emptyset$.
\end{lemma}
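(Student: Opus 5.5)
The plan is to prove the identity by double counting coordinate by coordinate, and then to read off the inequality and its equality case from the signs of the individual terms. For each $i\in\Omega$ I will count the number $N_i(\alpha)$ of $\beta\in D$ with $i\in\langle\supp(\alpha-\beta)\rangle_{\mathbf{P}}$. Then
$$\sum_{\beta\in D}\wt_{(\mathbf{P},\omega)}(\alpha-\beta)=\sum_{i\in\Omega}N_i(\alpha)\,\omega(i).$$
The key observation is that $i\in\langle\supp(\gamma)\rangle_{\mathbf{P}}$ if and only if $\gamma|_{\langle\{i\}\rangle_{\overline{\mathbf{P}}}}\neq0$, since $\langle\{i\}\rangle_{\overline{\mathbf{P}}}=\{j\mid i\preccurlyeq_{\mathbf{P}}j\}$. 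Hence $N_i(\alpha)=q^r-|\{\beta\in D\mid \beta|_{\langle\{i\}\rangle_{\overline{\mathbf{P}}}}=\alpha|_{\langle\{i\}\rangle_{\overline{\mathbf{P}}}}\}|$.

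I will split $\Omega$ into three parts.

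\textbf{Case $i\notin\langle\chi(D)\rangle_{\mathbf{P}}$.} Every $\beta\in D$ vanishes on $\langle\{i\}\rangle_{\overline{\mathbf{P}}}$. So $N_i=q^r$ if $i\in\langle\supp(\alpha)\rangle_{\mathbf{P}}$ and $N_i=0$ otherwise. This produces the second term.

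\textbf{Case $i\in\langle\chi(D)\rangle_{\mathbf{P}}$.} The restriction map $\beta\mapsto\beta|_{\langle\{i\}\rangle_{\overline{\mathbf{P}}}}$ is linear on $D$, with kernel $K_i=\{\beta\in D\mid\beta|_{\langle\{i\}\rangle_{\overline{\mathbf{P}}}}=0\}$. The set of $\beta$ agreeing with $\alpha$ there is therefore either empty or a coset of $K_i$. It is nonempty exactly when some $\beta\in D$ has $i\notin\langle\supp(\alpha-\beta)\rangle_{\mathbf{P}}$, i.e.\ exactly when $i\in J$. So $N_i=q^r-|K_i|$ for $i\in J$ and $N_i=q^r$ otherwise.

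Summing over all $i$ gives the first equality. Taking $\alpha=0$ in the same count gives $J=\langle\chi(D)\rangle_{\mathbf{P}}$, no second term, and the sum over $\beta\in D$ equal to the sum over $D-\{0\}$. This yields the final equality.

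For the inequality, I compare the general expression with the $\alpha=0$ one. The middle term is nonnegative because $\omega>0$. The subtracted sum over $J$ is at most the sum over all of $\langle\chi(D)\rangle_{\mathbf{P}}$, since $J\subseteq\langle\chi(D)\rangle_{\mathbf{P}}$ and $|K_i|\geqslant1$. For the equality case, strict positivity of $\omega$ and $|K_i|\geqslant1$ force two things. First, $\langle\supp(\alpha)\rangle_{\mathbf{P}}\subseteq\langle\chi(D)\rangle_{\mathbf{P}}$, which is equivalent to $\supp(\alpha)\subseteq\langle\chi(D)\rangle_{\mathbf{P}}$ because the latter is an ideal. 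Second, $J=\langle\chi(D)\rangle_{\mathbf{P}}$, i.e.\ the intersection $\bigcap_{\beta}\langle\supp(\alpha-\beta)\rangle_{\mathbf{P}}$ misses $\langle\chi(D)\rangle_{\mathbf{P}}$.

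The main obstacle is turning this second condition into the stated one, that the whole intersection is empty. Under the first condition, every $\langle\supp(\alpha-\beta)\rangle_{\mathbf{P}}$ lies inside $\langle\chi(D)\rangle_{\mathbf{P}}$. So the intersection is contained in $\langle\chi(D)\rangle_{\mathbf{P}}$, and missing it means being empty. Conversely, emptiness of the intersection together with the first condition gives equality. The remaining care needed is the coset argument, which needs linearity of $D$.
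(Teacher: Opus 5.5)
Your proposal is correct and follows essentially the same route as the paper: double counting over coordinates $i$, noting that the $\beta\in D$ agreeing with $\alpha$ on $\langle\{i\}\rangle_{\overline{\mathbf{P}}}$ form either the empty set or a coset of the kernel $K_i$, then specializing to $\alpha=0$ and using $\omega>0$ and $|K_i|\geqslant1$ for the inequality and its equality case. The differences are only organizational: you split on whether $i\in\langle\chi(D)\rangle_{\mathbf{P}}$, whereas the paper first shows $\langle\supp(\alpha)\rangle_{\mathbf{P}}-\bigcap_{\beta\in D}\langle\supp(\alpha-\beta)\rangle_{\mathbf{P}}\subseteq\langle\chi(D)\rangle_{\mathbf{P}}$, and you spell out the passage from $J=\langle\chi(D)\rangle_{\mathbf{P}}$ to emptiness of the intersection, which the paper leaves implicit.
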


\begin{proof}
For $I\triangleq\bigcap_{\beta\in D}\langle\supp(\alpha-\beta)\rangle_{\mathbf{P}}$, one can check that $\langle\supp(\alpha)\rangle_{\mathbf{P}}-I\subseteq\langle\chi(D)\rangle_{\mathbf{P}}$, which further implies that $(\langle\supp(\alpha)\rangle_{\mathbf{P}}\cup\langle\chi(D)\rangle_{\mathbf{P}})-I=\langle\chi(D)\rangle_{\mathbf{P}}-I=J$. For $i\in \Omega$ and $\beta\in D$, we infer that $i\not\in\langle\supp(\alpha-\beta)\rangle_{\mathbf{P}}\Longleftrightarrow\beta\mid_{\langle\{i\}\rangle_{\overline{\mathbf{P}}}}=\alpha\mid_{\langle\{i\}\rangle_{\overline{\mathbf{P}}}}$. Hence for any $i\in\Omega-I$, there exists $\beta\in D$ such that $\beta\mid_{\langle\{i\}\rangle_{\overline{\mathbf{P}}}}=\alpha\mid_{\langle\{i\}\rangle_{\overline{\mathbf{P}}}}$, which further implies that
$$|\{\beta\in D\mid i\not\in\langle\supp(\alpha-\beta)\rangle_{\mathbf{P}}\}=|\{\beta\in D\mid \beta\mid_{\langle\{i\}\rangle_{\overline{\mathbf{P}}}}=\alpha\mid_{\langle\{i\}\rangle_{\overline{\mathbf{P}}}}\}|=|\{\beta\in D\mid \beta\mid_{\langle\{i\}\rangle_{\overline{\mathbf{P}}}}=0\}|.$$
Now let $v$ denote the left hand side of the equation of the lemma. Then, we have
\begin{eqnarray*}
\begin{split}
v&=\sum_{\beta\in D}\sum_{i\in\langle\supp(\alpha-\beta)\rangle_{\mathbf{P}}}\omega(i)\\
&=\sum_{i\in\langle\supp(\alpha)\rangle_{\mathbf{P}}\cup\langle\chi(D)\rangle_{\mathbf{P}}}|\{\beta\in D\mid i\in\langle\supp(\alpha-\beta)\rangle_{\mathbf{P}}\}|\omega(i)\\
&=\left(\sum_{i\in\langle\supp(\alpha)\rangle_{\mathbf{P}}\cup\langle\chi(D)\rangle_{\mathbf{P}}}q^{r}\omega(i)\right)-\left(\sum_{i\in\langle\supp(\alpha)\rangle_{\mathbf{P}}\cup\langle\chi(D)\rangle_{\mathbf{P}}}|\{\beta\in D\mid i\not\in\langle\supp(\alpha-\beta)\rangle_{\mathbf{P}}\}|\omega(i)\right)\\
&=q^{r}\Wt_{(\mathbf{P},\omega)}(D)+q^{r}\left(\sum_{i\in\langle\supp(\alpha)\rangle_{\mathbf{P}}-\langle\chi(D)\rangle_{\mathbf{P}}}\omega(i)\right)-\left(\sum_{i\in J}|\{\beta\in D\mid \beta\mid_{\langle\{i\}\rangle_{\overline{\mathbf{P}}}}=0\}|\omega(i)\right),
\end{split}
\end{eqnarray*}
which establishes the first equality, as desired. Moreover, the third equality follows from the first equality with $\alpha$ set to be $0$, and the second inequality follows from the facts $\omega(i)>0$ for all $i\in\Omega$ and $J\subseteq\langle\chi(D)\rangle_{\mathbf{P}}$. Since $\omega(i)>0$ and $|\{\beta\in D\mid \beta\mid_{\langle\{i\}\rangle_{\overline{\mathbf{P}}}}=0\}|\geqslant1$ for all $i\in\Omega$, the second ``$\geqslant$'' becomes equality if and only if $\langle\supp(\alpha)\rangle_{\mathbf{P}}-\langle\chi(D)\rangle_{\mathbf{P}}=\emptyset$ and $J=\langle\chi(D)\rangle_{\mathbf{P}}$, if and only if $\supp(\alpha)\subseteq\langle\chi(D)\rangle_{\mathbf{P}}$ and $I=\emptyset$, as desired.
\end{proof}

\setlength{\parindent}{2em}
Now we derive necessary and sufficient conditions for a subcode to be $\mathbf{P}$-minimal.

\setlength{\parindent}{0em}
\begin{proposition}
Let $C\leqslant_{\mathbb{F}}\mathbf{H}$, and let $D\leqslant_{\mathbb{F}}C$ with $\dim_{\mathbb{F}}(D)=r$. Then, $D$ is $\mathbf{P}$-minimal in $C$ if and only if for any $\alpha\in C-D$, the following two conditions hold:
\begin{equation}\supp(\alpha)\subseteq\langle\chi(D)\rangle_{\mathbf{P}}\Longrightarrow\bigcap_{\beta\in D}\langle\supp(\alpha-\beta)\rangle_{\mathbf{P}}=\emptyset,\end{equation}
\begin{equation}\sum_{\beta\in D}\wt_{(\mathbf{P},\omega)}(\alpha-\beta)\neq q^{r}\Wt_{(\mathbf{P},\omega)}(D)-\left(\sum_{i\in \langle\chi(D)\rangle_{\mathbf{P}}}|\{\beta\in D\mid \beta\mid_{\langle\{i\}\rangle_{\overline{\mathbf{P}}}}=0\}|\omega(i)\right).\end{equation}
In particular, if (3.2) holds for all $\alpha\in C-D$ and
\begin{equation}\frac{\wt_{(\mathbf{P},\omega)}(\alpha)}{\wt_{(\mathbf{P},\omega)}(\theta)}>1-q^{-r},\end{equation}
for all $\alpha\in C-D$ and $\theta\in D-\{0\}$, then $D$ is $\mathbf{P}$-minimal in $C$.
\end{proposition}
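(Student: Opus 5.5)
We will reduce the statement to Proposition 3.1 and Lemma 3.2. By Proposition 3.1 (2) and (4), $D$ is $\mathbf{P}$-minimal in $C$ if and only if no $\alpha\in C-D$ satisfies $\supp(\alpha)\subseteq\langle\chi(D)\rangle_{\mathbf{P}}$. The plan is to show that, for a fixed $\alpha\in C-D$, the condition ``$\supp(\alpha)\not\subseteq\langle\chi(D)\rangle_{\mathbf{P}}$'' is equivalent to the conjunction of (3.2) and (3.3). The main tool is the equality case of Lemma 3.2. Note that the right hand side of (3.3) is exactly the middle expression in Lemma 3.2. So (3.3) fails for $\alpha$ precisely when the inequality in Lemma 3.2 is an equality, that is, precisely when both $\supp(\alpha)\subseteq\langle\chi(D)\rangle_{\mathbf{P}}$ and $\bigcap_{\beta\in D}\langle\supp(\alpha-\beta)\rangle_{\mathbf{P}}=\emptyset$ hold.

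For the ``only if'' direction, assume $D$ is $\mathbf{P}$-minimal and let $\alpha\in C-D$. Proposition 3.1 gives $\supp(\alpha)\not\subseteq\langle\chi(D)\rangle_{\mathbf{P}}$. Then (3.2) holds vacuously. Also, the equality condition of Lemma 3.2 fails, so the inequality there is strict and (3.3) holds.

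For the ``if'' direction, assume (3.2) and (3.3) hold for every $\alpha\in C-D$, and suppose for contradiction that some $\alpha\in C-D$ has $\supp(\alpha)\subseteq\langle\chi(D)\rangle_{\mathbf{P}}$. By (3.2), $\bigcap_{\beta\in D}\langle\supp(\alpha-\beta)\rangle_{\mathbf{P}}=\emptyset$. Lemma 3.2 then forces equality in its inequality, which contradicts (3.3). Hence no such $\alpha$ exists, and Proposition 3.1 gives minimality.

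For the ``in particular'' part, assume (3.2) and (3.4). It suffices to verify (3.3) for each $\alpha\in C-D$.
\begin{itemize}
\item By the last equality in Lemma 3.2, the right side of (3.3) equals $\sum_{\theta\in D-\{0\}}\wt_{(\mathbf{P},\omega)}(\theta)$.
\item For every $\beta\in D$ we have $\alpha-\beta\in C-D$. Applying (3.4) to $\alpha-\beta$ gives $\wt_{(\mathbf{P},\omega)}(\alpha-\beta)>(1-q^{-r})M$, where $M=\max_{\theta\in D-\{0\}}\wt_{(\mathbf{P},\omega)}(\theta)$.
\item Summing over the $q^{r}$ elements $\beta\in D$ yields
$$\sum_{\beta\in D}\wt_{(\mathbf{P},\omega)}(\alpha-\beta)>(q^{r}-1)M\geqslant\sum_{\theta\in D-\{0\}}\wt_{(\mathbf{P},\omega)}(\theta).$$
\end{itemize}
Thus (3.3) holds, and the first part gives minimality.

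The case $r=0$ needs separate handling, since then $D-\{0\}$ is empty. If $D=\{0\}$, then $\langle\chi(D)\rangle_{\mathbf{P}}=\emptyset$, so the hypothesis of (3.2) forces $\supp(\alpha)=\emptyset$, i.e.\ $\alpha=0\in D$, which is impossible for $\alpha\in C-D$. Hence (3.2) is vacuous. Likewise, (3.3) says $\wt_{(\mathbf{P},\omega)}(\alpha)\neq0$, which holds for $\alpha\neq0$. So the argument is consistent in this case. Condition (3.4) is vacuous when $r=0$, and the claim then follows from minimality of $\{0\}$, which is obvious.

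The main obstacle is the careful matching of the equality condition in Lemma 3.2, which I am taking as given, with the conditions of Proposition 3.1. That step is routine once Lemma 3.2 is available.
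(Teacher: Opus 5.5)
Your proof is correct and follows essentially the same route as the paper. Like the paper, you reduce via Proposition 3.1 to the condition $\supp(\alpha)\nsubseteq\langle\chi(D)\rangle_{\mathbf{P}}$ for all $\alpha\in C-D$, match (3.2)--(3.3) against the equality case of Lemma 3.2, and for the ``in particular'' part bound each $\wt_{(\mathbf{P},\omega)}(\alpha-\beta)$ from below using (3.4) and sum over $D$. Your separate treatment of $r=0$ (where $D-\{0\}=\emptyset$ and (3.4) is vacuous) is a small bit of extra care that the paper leaves implicit.
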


\begin{proof}
By Proposition 3.1, $D$ is $\mathbf{P}$-minimal in $C$ if and only if $\supp(\alpha)\nsubseteq\langle\chi(D)\rangle_{\mathbf{P}}$ for all $\alpha\in C-D$. This, together with Lemma 3.2, immediately establishes the first assertion. Now we prove the ``in particular'' part. Let $a\triangleq\max\{\wt_{(\mathbf{P},\omega)}(\beta)\mid\beta\in D\}$. Then, for $\alpha\in C-D$ and $b\triangleq\min\{\wt_{(\mathbf{P},\omega)}(\alpha-\beta)\mid\beta\in D\}$, it follows from (3.4) that $b>(1-q^{-r})a$, which implies that
$$\sum_{\beta\in D}\wt_{(\mathbf{P},\omega)}(\alpha-\beta)\geqslant q^{r}b>(q^{r}-1)a\geqslant\sum_{\beta\in D-\{0\}}\wt_{(\mathbf{P},\omega)}(\beta).$$
It then follows from Lemma 3.2 that (3.3) holds true for all $\alpha\in C-D$, as desired.
\end{proof}

\setlength{\parindent}{2em}
Now we state the main result of this subsection, in which we give necessary and sufficient conditions for a code to be $r$-minimal in terms of $(\mathbf{P},\omega)$-weight of its codewords and subcodes. The following theorem immediately follows from Proposition 3.5.

\setlength{\parindent}{0em}
\begin{theorem}
Let $\{0\}\neq C\leqslant_{\mathbb{F}}\mathbf{H}$ and $r\in\mathbb{N}$. Then, $C$ is $r$-minimal with respect to $\mathbf{P}$ if and only if for any $D\leqslant_{\mathbb{F}}C$ with $\dim_{\mathbb{F}}(D)=r$ and any $\alpha\in C-D$, the following two conditions hold:
\begin{equation}\supp(\alpha)\subseteq\langle\chi(D)\rangle_{\mathbf{P}}\Longrightarrow\bigcap_{\beta\in D}\langle\supp(\alpha-\beta)\rangle_{\mathbf{P}}=\emptyset,\end{equation}
\begin{equation}\sum_{\beta\in D}\wt_{(\mathbf{P},\omega)}(\alpha-\beta)\neq q^{r}\Wt_{(\mathbf{P},\omega)}(D)-\left(\sum_{i\in \langle\chi(D)\rangle_{\mathbf{P}}}|\{\beta\in D\mid \beta\mid_{\langle\{i\}\rangle_{\overline{\mathbf{P}}}}=0\}|\omega(i)\right).\end{equation}
In particular, if (3.5) always holds and
\begin{equation}\frac{\min\{\wt_{(\mathbf{P},\omega)}(\alpha)\mid\alpha\in C-\{0\}\}}{\max\{\wt_{(\mathbf{P},\omega)}(\alpha)\mid\alpha\in C\}}>1-q^{-r},\end{equation}
then $C$ is $r$-minimal with respect to $\mathbf{P}$.
\end{theorem}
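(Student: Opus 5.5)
The first assertion will follow from Proposition 3.5 applied to each $r$-dimensional subcode. By Definition 2.1, $C$ is $r$-minimal with respect to $\mathbf{P}$ exactly when every $D\leqslant_{\mathbb{F}}C$ with $\dim_{\mathbb{F}}(D)=r$ is $\mathbf{P}$-minimal in $C$. For each such $D$, Proposition 3.5 says this holds if and only if (3.2) and (3.3) hold for every $\alpha\in C-D$, and these are literally (3.5) and (3.6). Quantifying over all $D$ gives the equivalence. If $r>\dim_{\mathbb{F}}(C)$, then there is no such $D$ and both sides hold vacuously; if $r=\dim_{\mathbb{F}}(C)$, then $C-D=\emptyset$ and again both sides hold trivially. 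So no separate case analysis is needed.

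For the ``in particular'' part, I would reduce to the ``in particular'' clause of Proposition 3.5. Fix $D\leqslant_{\mathbb{F}}C$ with $\dim_{\mathbb{F}}(D)=r$. Condition (3.2) holds for all $\alpha\in C-D$ by the hypothesis that (3.5) always holds. It remains to verify (3.4) for $\alpha\in C-D$ and $\theta\in D-\{0\}$. Here $\alpha\neq0$ because $0\in D$, so
$$\wt_{(\mathbf{P},\omega)}(\alpha)\geqslant\min\{\wt_{(\mathbf{P},\omega)}(\gamma)\mid\gamma\in C-\{0\}\}.$$
Also $\theta\in C$, so $0<\wt_{(\mathbf{P},\omega)}(\theta)\leqslant\max\{\wt_{(\mathbf{P},\omega)}(\gamma)\mid\gamma\in C\}$, where positivity uses $\theta\neq0$ and $\omega>0$. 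Hence the ratio in (3.4) is at least the ratio in (3.7), which exceeds $1-q^{-r}$. Proposition 3.5 then shows that $D$ is $\mathbf{P}$-minimal, and since $D$ was arbitrary, $C$ is $r$-minimal.

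The only point needing a little care, and the place I would double-check, is the degenerate case $r=0$. There $D=\{0\}$, so $D-\{0\}=\emptyset$ and (3.4) is vacuous, consistent with $1-q^{0}=0$. In that case the conclusion rests solely on (3.5), via Proposition 3.5. Since $C\neq\{0\}$, both the max and the min in (3.7) are positive and well defined, so (3.7) makes sense. Otherwise the argument is a direct transcription of Proposition 3.5.
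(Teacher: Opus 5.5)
Your proposal is correct and matches the paper, which derives Theorem 3.3 directly from Proposition 3.5 applied to every $r$-dimensional subcode $D\leqslant_{\mathbb{F}}C$. Your reduction of the ``in particular'' clause, showing that (3.7) forces (3.4) because $C-D\subseteq C-\{0\}$ and $D-\{0\}\subseteq C$, is exactly the step that makes this immediate.
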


\begin{remark}
If $\mathbf{P}$ is the anti-chain poset, then for $D\leqslant_{\mathbb{F}}\mathbf{H}$ and $\alpha\in\mathbf{H}$, we have $\chi(D)\cap\bigcap_{\beta\in D}\supp(\alpha-\beta)=\emptyset$ and $\supp(\alpha)\subseteq\chi(D)\Longrightarrow\bigcap_{\beta\in D}\supp(\alpha-\beta)=\emptyset$; moreover, $|\{\beta\in D\mid \beta_i=0\}|=|D|/q$ for all $i\in \chi(D)$. Hence when $r=1$ and $\omega(i)=1$ for all $i\in\Omega$, (3.6) boils down to (1.3), and hence Theorem 3.3 generalizes [25, Theorem 11]; moreover, (3.7) boils down to the Ashikhmin-Barg condition (1.2).
\end{remark}

\subsection{Existence results}
\setlength{\parindent}{2em}
In this subsection, we give existence results for $r$-minimal codes with respect to $\mathbf{P}$, and we begin by computing the number of $(r+1)$-dimensional $r$-minimal codes, as detailed in the following proposition.

\begin{proposition}
Let $r\in\{0,1,\dots,n-1\}$, and let $\theta\triangleq(q^{r+1}-1)/(q-1)$. Then, for $I\in\mathcal{I}(\mathbf{P})$ and $J=\max_{\mathbf{P}}(I)$, the number of all the $(r+1)$-dimensional $r$-minimal codes with respect to $\mathbf{P}$ with $\mathbf{P}$-support $I$ is equal to
\begin{equation}\sum_{t=0}^{\theta}(-1)^{t}\binom{\theta}{t}(q^{r+1}-1-t(q-1))^{|J|} q^{(r+1)|I-J|}\left(\prod_{i=0}^{r}(q^{r+1}-q^{i})\right)^{-1}.\end{equation}
Moreover, the number of all the $(r+1)$-dimensional $r$-minimal codes with respect to $\mathbf{P}$ is equal to
\begin{equation}\sum_{t=0}^{\theta}(-1)^{t}\binom{\theta}{t}\left(\sum_{I\in\mathcal{I}(\mathbf{P})}(q^{r+1}-1-t(q-1))^{|\max_{\mathbf{P}}(I)|} q^{(r+1)|I-\max_{\mathbf{P}}(I)|}\right)\left(\prod_{i=0}^{r}(q^{r+1}-q^{i})\right)^{-1}.\end{equation}
\end{proposition}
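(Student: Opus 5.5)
We count generator matrices and then divide by the number of bases. An $(r+1)$-dimensional code $C$ has exactly $\prod_{i=0}^{r}(q^{r+1}-q^{i})$ generator matrices $G\in\Mat_{r+1,\Omega}(\mathbb{F})$, one for each ordered basis. So it suffices to count matrices $G$ of rank $r+1$ whose row space $C$ is $r$-minimal with $\langle\chi(C)\rangle_{\mathbf{P}}=I$, and divide by this product.

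By Corollary 3.7, with $f$ the column map of $G$, $C$ is $r$-minimal if and only if every $1$-dimensional subspace $L\leqslant_{\mathbb{F}}\mathbb{F}^{[r+1]}$ contains $f(j)$ for some $j\in\max_{\mathbf{P}}(\chi(C))$. For such a $j$ we have $f(j)\in L$, and $f(j)\neq0$ since $j\in\chi(C)$. In particular every one of the $\theta$ lines contains a nonzero column. These columns then span $\mathbb{F}^{[r+1]}$, so $G$ automatically has rank $r+1$ and we do not need to impose the rank condition separately.

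Next we translate the condition $\langle\chi(C)\rangle_{\mathbf{P}}=I$ with $J=\max_{\mathbf{P}}(I)$. Since $I$ is an ideal, $\max_{\mathbf{P}}(\langle\chi(C)\rangle_{\mathbf{P}})=\max_{\mathbf{P}}(\chi(C))$. Hence $\langle\chi(C)\rangle_{\mathbf{P}}=I$ holds if and only if $f(i)=0$ for all $i\notin I$, $f(j)\neq0$ for all $j\in J$, and the columns indexed by $I-J$ are arbitrary. For the forward direction: every $j\in J$ lies in $\langle\chi(C)\rangle_{\mathbf{P}}$, so $j\preccurlyeq_{\mathbf{P}}u$ for some $u\in\chi(C)\subseteq I$, and maximality of $j$ in $I$ forces $u=j$. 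For the converse: $\chi(C)\subseteq I$ and $J\subseteq\chi(C)$ give $I=\langle J\rangle_{\mathbf{P}}\subseteq\langle\chi(C)\rangle_{\mathbf{P}}\subseteq I$. In this situation $\max_{\mathbf{P}}(\chi(C))=J$.

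We are therefore counting maps $f$ that vanish outside $I$, are free on $I-J$ (contributing the factor $q^{(r+1)|I-J|}$), and send $J$ into $\mathbb{F}^{[r+1]}-\{0\}$ in such a way that every one of the $\theta$ lines is hit. This is a surjection-type count and we handle it by inclusion–exclusion over the set $T$ of lines that are missed. Fix a set $T$ of $t$ lines. The number of maps $J\to\mathbb{F}^{[r+1]}-\{0\}$ avoiding all lines in $T$ is $(q^{r+1}-1-t(q-1))^{|J|}$, because distinct lines have disjoint sets of nonzero vectors, each of size $q-1$. Summing $(-1)^{t}\binom{\theta}{t}$ times this over $t$, multiplying by $q^{(r+1)|I-J|}$ and dividing by the number of bases gives (3.8).

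The second formula (3.9) follows by summing (3.8) over all $I\in\mathcal{I}(\mathbf{P})$, since every code has exactly one $\mathbf{P}$-support, which is an ideal, and then interchanging the two finite sums.

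The only delicate points are the identification $\max_{\mathbf{P}}(\chi(C))=J$ and the observation that the rank condition is automatic. Everything else is routine counting.
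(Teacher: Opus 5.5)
Your proof is correct and follows essentially the same route as the paper: use Corollary 3.7 to reduce to counting column maps $f$ that vanish off $I$, are nonzero on $J=\max_{\mathbf{P}}(I)$ and meet every line on $J$, apply inclusion--exclusion over the missed lines, and divide by $\prod_{i=0}^{r}(q^{r+1}-q^{i})$. The only difference is bookkeeping. The paper first fixes the Hamming support $K$ with $J\subseteq K\subseteq I$, getting the factor $(q^{r+1}-1)^{|K-J|}$ times $(q-1)^{|J|}$ times the number of surjections $J\to\Delta$, and then sums over $K$ with the binomial theorem to obtain $q^{(r+1)|I-J|}$; you instead let the columns on $I-J$ be arbitrary from the start and run inclusion--exclusion directly on nonzero vectors, which is slightly shorter.
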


\begin{proof}
First, we note that for any $K\subseteq\Omega$, it holds that $\langle K\rangle_{\mathbf{P}}=I\Longleftrightarrow J\subseteq K\subseteq I$; moreover, if $J\subseteq K\subseteq I$, then we have $\max_{\mathbf{P}}(K)=J$. Now let $K\subseteq\Omega$ with $J\subseteq K\subseteq I$, and let $a$ denote the number of all the $(r+1)$-dimensional $r$-minimal codes with respect to $\mathbf{P}$ whose Hamming support is equal to $K$. We first compute $a$. Let $\Delta\triangleq\{B\leqslant_{\mathbb{F}}\mathbb{F}^{[r+1]}\mid\dim_{\mathbb{F}}(B)=1\}$, and define $\rho:\mathbb{F}^{[r+1]}-\{0\}\longrightarrow\Delta$ as $\rho(\theta)=\langle\{\theta\}\rangle_{\mathbb{F}}$. We note that every $(r+1)$-dimensional $\mathbb{F}$-subspace of $\mathbf{H}$ has $\prod_{i=0}^{r}(q^{r+1}-q^{i})$ generator matrices in $\Mat_{r+1,\Omega}(\mathbb{F})$. Moreover, consider an arbitrary $f:\Omega\longrightarrow\mathbb{F}^{[r+1]}$. Let $G\in\Mat_{r+1,\Omega}(\mathbb{F})$ such that $f$ is equal to the column map of $G$, and let $C=\langle\row(G)\rangle_{\mathbb{F}}$. By Lemma 3.1 and Corollary 3.7, $C$ is an $(r+1)$-dimensional $r$-minimal code with respect to $\mathbf{P}$ such that $\chi(C)=K$ if and only if both $\{i\in\Omega\mid f(i)\neq0\}=K$ and $\ran(\rho\circ f\mid_{J})=\Delta$ hold true. The above discussion implies that
\begin{eqnarray*}
\begin{split}
\left(\prod_{i=0}^{r}(q^{r+1}-q^{i})\right)a&=|\{f:\Omega\longrightarrow\mathbb{F}^{[r+1]}\mid \{i\in\Omega\mid f(i)\neq0\}=K,\ran(\rho\circ f\mid_{J})=\Delta\}|\\
&=(q^{r+1}-1)^{|K-J|}|\{\eta:J\longrightarrow\mathbb{F}^{[r+1]}-\{0\}\mid \ran(\rho\circ \eta)=\Delta\}|\\
&=(q^{r+1}-1)^{|K-J|}(q-1)^{|J|}|\{T:J\longrightarrow\Delta\mid \ran(T)=\Delta\}|\\
&=(q^{r+1}-1)^{|K-J|}(q-1)^{|J|}\left(\sum_{t=0}^{\theta}(-1)^{t}\binom{\theta}{t}\left(\theta-t\right)^{|J|}\right)\\
&=\sum_{t=0}^{\theta}(-1)^{t}\binom{\theta}{t}(q^{r+1}-1-t(q-1))^{|J|}(q^{r+1}-1)^{|K-J|}.
\end{split}
\end{eqnarray*}
It then follows that
\begin{eqnarray*}
\begin{split}
&|\{C\leqslant_{\mathbb{F}}\mathbf{H}\mid\text{$\dim_{\mathbb{F}}(C)=r+1$, $C$ is $r$-minimal with respect to $\mathbf{P}$, $\langle\chi(C)\rangle_{\mathbf{P}}=I$}\}|\\
&=\sum_{(K,J\subseteq K\subseteq I)}\sum_{t=0}^{\theta}(-1)^{t}\binom{\theta}{t}(q^{r+1}-1-t(q-1))^{|J|}(q^{r+1}-1)^{|K-J|}\left(\prod_{i=0}^{r}(q^{r+1}-q^{i})\right)^{-1}\\
&=\sum_{t=0}^{\theta}(-1)^{t}\binom{\theta}{t}(q^{r+1}-1-t(q-1))^{|J|}\left(\sum_{(K,J\subseteq K\subseteq I)}(q^{r+1}-1)^{|K-J|}\right)\left(\prod_{i=0}^{r}(q^{r+1}-q^{i})\right)^{-1}\\
&=\sum_{t=0}^{\theta}(-1)^{t}\binom{\theta}{t}(q^{r+1}-1-t(q-1))^{|J|}q^{(r+1)|I-J|}\left(\prod_{i=0}^{r}(q^{r+1}-q^{i})\right)^{-1},
\end{split}
\end{eqnarray*}
establishing (3.8). Moreover, (3.9) immediately follows from (3.8), as desired.
\end{proof}

\setlength{\parindent}{2em}
Now we state and prove the main result of this subsection.

\setlength{\parindent}{0em}
\begin{theorem}
For $(k,r)\in\mathbb{N}^{2}$ with $r+1\leqslant k\leqslant n$ and $\theta\triangleq(q^{r+1}-1)/(q-1)$, if
\begin{equation}
\begin{aligned}
&\left(\prod_{i=0}^{r}(q^{n}-q^{i})\right)-\left(\prod_{i=k-r}^{k}\frac{q^{i+n-k}-1}{q^{i}-1}\right)\left(\prod_{i=0}^{r}(q^{r+1}-q^{i})\right)\\
&<\sum_{t=0}^{\theta}(-1)^{t}\binom{\theta}{t}\left(\sum_{I\in\mathcal{I}(\mathbf{P})}(q^{r+1}-1-t(q-1))^{|\max_{\mathbf{P}}(I)|} q^{(r+1)|I-\max_{\mathbf{P}}(I)|}\right),
\end{aligned}
\end{equation}
then there exists $C\leqslant_{\mathbb{F}}\mathbf{H}$ such that $\dim_{\mathbb{F}}(C)=k$ and $C$ is $r$-minimal with respect to $\mathbf{P}$.
\end{theorem}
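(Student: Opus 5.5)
The plan is a double-counting (union bound) argument over $(r+1)$-dimensional subspaces. The first input is Theorem 3.1, ``$(1)\Longleftrightarrow(3)$'': a code $C$ is $r$-minimal with respect to $\mathbf{P}$ if and only if every $(r+1)$-dimensional $\mathbb{F}$-subspace of $C$ is $r$-minimal with respect to $\mathbf{P}$. Call an $(r+1)$-dimensional subspace $W\leqslant_{\mathbb{F}}\mathbf{H}$ \emph{bad} if it is not $r$-minimal with respect to $\mathbf{P}$. Then a $k$-dimensional $C$ fails to be $r$-minimal exactly when it contains at least one bad $W$. It therefore suffices to show that the number of $k$-dimensional subspaces containing some bad $W$ is strictly less than the total number $\left[{n\atop k}\right]_q$ of $k$-dimensional subspaces of $\mathbf{H}$.

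The next step counts the bad subspaces. The number of all $(r+1)$-dimensional subspaces of $\mathbf{H}$ is $\prod_{i=0}^{r}(q^{n}-q^{i})$ divided by $\prod_{i=0}^{r}(q^{r+1}-q^{i})$. By (3.9) of Proposition 3.6, the number of good ones is the right-hand side $R$ of (3.10) divided by the same factor $\prod_{i=0}^{r}(q^{r+1}-q^{i})$. Hence the number of bad subspaces is
$$b=\frac{\prod_{i=0}^{r}(q^{n}-q^{i})-R}{\prod_{i=0}^{r}(q^{r+1}-q^{i})}.$$

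Each $(r+1)$-dimensional subspace is contained in exactly $\left[{n-r-1\atop k-r-1}\right]_q$ subspaces of dimension $k$; this is where the hypothesis $k\geqslant r+1$ is used. The union bound then shows that at most $b\left[{n-r-1\atop k-r-1}\right]_q$ $k$-dimensional subspaces contain a bad $W$. A good $C$ therefore exists as soon as
$$b<\left[{n\atop k}\right]_q\Big/\left[{n-r-1\atop k-r-1}\right]_q .$$

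It remains to simplify this quotient of Gaussian binomials. Expanding both,
$$\left[{n\atop k}\right]_q\Big/\left[{n-r-1\atop k-r-1}\right]_q=\prod_{i=0}^{r}\frac{q^{n-i}-1}{q^{k-i}-1},$$
and reindexing by $i\mapsto k-i$ gives $\prod_{i=k-r}^{k}(q^{i+n-k}-1)/(q^{i}-1)$. Substituting $b$ and multiplying through by $\prod_{i=0}^{r}(q^{r+1}-q^{i})$ turns the sufficient condition into exactly (3.10). The only step needing care is this Gaussian-binomial bookkeeping, namely checking that the quotient and its index shift match the product in (3.10). The argument itself is an immediate combination of Theorem 3.1 and Proposition 3.6.
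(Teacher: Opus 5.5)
Your proposal is correct and is essentially the paper's proof. The paper also uses Theorem 3.1 ((1)$\Leftrightarrow$(3)) to reduce non-$r$-minimality of a $k$-dimensional code to containing a non-$r$-minimal $(r+1)$-dimensional subspace, and counts those via (3.9). It then bounds the number of pairs $(A,B)$, with $A$ bad, $\dim_{\mathbb{F}}(B)=k$ and $A\subseteq B$, by the number of bad $A$ times the Gaussian binomial for $(n-r-1,k-r-1)$, and shows via exactly your quotient $\prod_{i=k-r}^{k}(q^{i+n-k}-1)/(q^{i}-1)$ that this is below the total number of $k$-dimensional subspaces under (3.10).
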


\begin{proof}
For any $(b,a)\in\mathbb{N}^{2}$ with $b\geqslant a$, define
$$\bin_{q}(b,a)\triangleq\prod_{i=1}^{a}\frac{q^{i+b-a}-1}{q^{i}-1}.$$
It is well-known that $\bin_{q}(b,a)$ is equal to the number of all the $a$-dimensional $\mathbb{F}$-subspaces of $\mathbb{F}^{b}$. Now let $T$ denote the following binary relation:
$$\{(A,B)\mid B\leqslant_{\mathbb{F}}\mathbf{H},A\leqslant_{\mathbb{F}}B,\dim_{\mathbb{F}}(A)=r+1,\dim_{\mathbb{F}}(B)=k,\text{$A$ is not $r$-minimal with respect to $\mathbf{P}$}\}.$$
From Theorem 3.1, we deduce that
$$\dom(T)=\{A\leqslant_{\mathbb{F}}\mathbf{H}\mid \dim_{\mathbb{F}}(A)=r+1,\text{$A$ is not $r$-minimal with respect to $\mathbf{P}$}\},$$
$$\ran(T)=\{B\leqslant_{\mathbb{F}}\mathbf{H}\mid \dim_{\mathbb{F}}(B)=k,\text{$B$ is not $r$-minimal with respect to $\mathbf{P}$}\}.$$
For notational simplicity, let $\varepsilon$ denote the right hand side of (3.10). For any $A\in\dom(T)$, we have $|\{B\mid(A,B)\in T\}|=\bin_{q}(n-(r+1),k-(r+1))$. This, together with Proposition 3.6, further implies that
\begin{eqnarray*}
\begin{split}
|T|&=\bin_{q}(n-(r+1),k-(r+1))|\dom(T)|\\
&=\bin_{q}(n-r-1,k-r-1)\left(\bin_{q}(n,r+1)-\varepsilon\left(\prod_{i=0}^{r}(q^{r+1}-q^{i})\right)^{-1}\right)\\
&=\bin_{q}(n-r-1,k-r-1)\left(\left(\prod_{i=0}^{r}(q^{n}-q^{i})\right)-\varepsilon\right)\left(\prod_{i=0}^{r}(q^{r+1}-q^{i})\right)^{-1}\\
&<\bin_{q}(n-r-1,k-r-1)\left(\prod_{i=k-r}^{k}\frac{q^{i+n-k}-1}{q^{i}-1}\right)\left(\prod_{i=0}^{r}(q^{r+1}-q^{i})\right)\left(\prod_{i=0}^{r}(q^{r+1}-q^{i})\right)^{-1}\\
&=\bin_{q}(n,k).
\end{split}
\end{eqnarray*}
Therefore we have $|\ran(T)|\leqslant|T|<\bin_{q}(n,k)$, which further establishes the desired result.
\end{proof}

\setlength{\parindent}{2em}
An application of Theorem 3.4 to NRT poset yields the following corollary.

\setlength{\parindent}{0em}
\begin{corollary}
Let $(s,m)\in\mathbb{Z}^{+}\times\mathbb{Z}^{+}$, and suppose that $\mathbf{P}=(\{1,\dots,s\}\times\{1,\dots,m\},\preccurlyeq_{_{NRT}})$. Let $(k,r)\in\mathbb{N}^{2}$ with $r+1\leqslant k\leqslant n$. Suppose that
\begin{equation}s\geqslant\frac{(r+1)(k-r)-\log_{q}(q-1)-\log_{q}\left(\prod_{i=1}^{r}(1-q^{-i})\right)}{(r+1)m-\log_{q}\left(q^{(r+1)m}-\frac{(q-1)(q^{(r+1)m}-1)}{q^{r+1}-1}\right)}.\end{equation}
Then, there exists $C\leqslant_{\mathbb{F}}\mathbf{H}$ such that $\dim_{\mathbb{F}}(C)=k$, $C$ is $r$-minimal with respect to $\mathbf{P}$.
\end{corollary}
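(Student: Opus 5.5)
Our plan is to deduce the corollary from Theorem 3.4 by bounding both sides of (3.10) explicitly for the NRT poset. Here $n=sm$. Every ideal $I$ of $\mathbf{P}$ is determined by a tuple $(l_1,\dots,l_s)\in\{0,\dots,m\}^{s}$ with $I=\{(a,b)\mid b\leqslant l_a\}$. Then $|\max_{\mathbf{P}}(I)|$ counts the indices with $l_a\geqslant1$, and $|I-\max_{\mathbf{P}}(I)|=\sum_a \max(l_a-1,0)$. Consequently, for each $t$ the inner sum over $\mathcal{I}(\mathbf{P})$ in (3.10) factors as $g(t)^{s}$, where
$$g(t)=1+(q^{r+1}-1-t(q-1))\sum_{l=1}^{m}q^{(r+1)(l-1)}=1+(q^{r+1}-1-t(q-1))\frac{q^{(r+1)m}-1}{q^{r+1}-1}.$$

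First we simplify the left hand side of (3.10). The right hand side of (3.10) is $\sum_t(-1)^t\binom{\theta}{t}g(t)^s$, and its $t=0$ term is $g(0)^s=q^{(r+1)ms}=q^{(r+1)n}$. The left hand side is at most $\prod_{i=0}^{r}(q^n-q^i)<q^{(r+1)n}$. Rather than use this crude bound, we rearrange (3.10) as
$$\sum_{t\geqslant1}(-1)^{t+1}\binom{\theta}{t}g(t)^s<\left(q^{(r+1)n}-\prod_{i=0}^{r}(q^n-q^i)\right)+\bin_q(n,k)\Big/\bin_q(n-r-1,k-r-1)\cdot\prod_{i=0}^{r}(q^{r+1}-q^i).$$
Here we use that the middle product in (3.10) equals $\bin_q(n,k)/\bin_q(n-r-1,k-r-1)$, which is read off from the proof of Theorem 3.4. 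The right hand side is at least $\prod_{i=k-r}^{k}\frac{q^{i+n-k}-1}{q^i-1}\prod_{i=0}^{r}(q^{r+1}-q^i)$. We discard the first nonnegative bracket and lower bound each factor $\frac{q^{i+n-k}-1}{q^i-1}\geqslant q^{n-k}$. This leaves a lower bound $q^{(r+1)(n-k)}\cdot q^{(r+1)^2}\prod_{i=1}^{r+1}(1-q^{-i})$.

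The main obstacle is controlling the alternating sum on the left. We would apply a Bonferroni-type inequality: for the surjection count, the alternating tail is bounded by its first term, i.e.
$$q^{(r+1)n}-\sum_t(-1)^t\binom{\theta}{t}g(t)^s\leqslant\theta\, g(1)^s.$$
This holds because the complement counts non-surjective maps, which by the union bound is at most $\theta$ times the number of maps missing one fixed line; we would phrase this directly in terms of the counting objects from the proof of Proposition 3.6 to avoid sign issues. Here $g(1)=q^{(r+1)m}-\frac{(q-1)(q^{(r+1)m}-1)}{q^{r+1}-1}$, which is exactly the quantity in the denominator of (3.11). It therefore suffices to require
$$\theta\, g(1)^s\leqslant q^{(r+1)(n-k)+(r+1)^2}\prod_{i=1}^{r+1}(1-q^{-i}).$$

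Finally we take $\log_q$, substitute $n=sm$ and $\theta<q^{r+1}/(q-1)$, and solve for $s$. The $s$-terms combine to $s\big((r+1)m-\log_q g(1)\big)$, and the constants give $(r+1)k-(r+1)^2+(r+1)-\log_q(q-1)-\log_q\prod_{i=1}^{r}(1-q^{-i})$, up to the factor $(1-q^{-(r+1)})$, which absorbs into the $\theta$ estimate. This yields $(r+1)(k-r)$ in the numerator of (3.11). The bookkeeping of these constants (which estimate of $\theta$, and which product range) is where we would expect to adjust slightly to match (3.11) exactly. The strictness needed in (3.10) comes from the strict inequality $\theta<q^{r+1}/(q-1)$ or from the discarded positive bracket.
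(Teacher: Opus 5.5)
Your proposal is correct, and its skeleton is the paper's. You factor the ideal sum as $\sigma_t=\binom{\theta}{t}g(t)^s$, bound the left side of (3.10) by $q^{(r+1)n}-q^{(r+1)(n-k)}\prod_{i=0}^{r}(q^{r+1}-q^{i})$ (the paper's (3.12)), and reduce everything to $\sigma_1\leqslant q^{(r+1)(n-k)}\prod_{i=0}^{r}(q^{r+1}-q^{i})$ (the paper's (3.13)).

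The one step you do differently is the inequality $\sum_{t}(-1)^{t}\sigma_t\geqslant\sigma_0-\sigma_1$ (the paper's (3.14)). The paper proves it algebraically. It first uses the hypothesis to get $\sigma_1<\sigma_0$, then checks $\sigma_{t+1}/\sigma_t\leqslant\sigma_1/\sigma_0$, using that $g(t)$ is affine and decreasing in $t$. So $(\sigma_t)$ is positive and decreasing, and the alternating tail is nonnegative.

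You instead use the counting in the proof of Proposition 3.6 to read $\sum_t(-1)^t\binom{\theta}{t}g(t)^s$ as an inclusion--exclusion count. It counts maps $f:\Omega\to\mathbb{F}^{[r+1]}$ whose values on $\max_{\mathbf{P}}(\langle\{i\mid f(i)\neq0\}\rangle_{\mathbf{P}})$ meet every line. Here $g(t)^s$ is the number of maps whose values there avoid $t$ prescribed lines, and the first Bonferroni (union) bound gives the claim.

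This argument is valid and uses neither the hypothesis nor monotonicity. It also works verbatim for any poset, which yields a clean general sufficient condition for the conclusion of Theorem 3.4: $\theta\sum_{I\in\mathcal{I}(\mathbf{P})}(q^{r+1}-q)^{|\max_{\mathbf{P}}(I)|}q^{(r+1)|I-\max_{\mathbf{P}}(I)|}\leqslant q^{(r+1)(n-k)}\prod_{i=0}^{r}(q^{r+1}-q^{i})$. The paper's route avoids re-deriving the combinatorial meaning of the alternating sum, but it is tied to the explicit NRT form of $\sigma_t$.

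On the final bookkeeping, your hedging is unnecessary, and one of the options you mention is wrong. Do not replace $\theta$ by the bound $q^{r+1}/(q-1)$. That leaves an extra $-\log_q(1-q^{-(r+1)})>0$ in the numerator, so it only proves the corollary under a hypothesis strictly stronger than (3.11).

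Use the identity $\theta=\frac{q^{r+1}}{q-1}(1-q^{-(r+1)})$ instead. Its factor $1-q^{-(r+1)}$ cancels exactly the $i=r+1$ factor of $\prod_{i=0}^{r}(q^{r+1}-q^{i})=q^{(r+1)^2}\prod_{i=1}^{r+1}(1-q^{-i})$. The denominator of (3.11) is positive because $g(1)<q^{(r+1)m}$. Hence (3.11) is exactly equivalent to $\theta g(1)^s\leqslant q^{(r+1)(n-k)}\prod_{i=0}^{r}(q^{r+1}-q^{i})$.

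Consequently, the strict inequality required in (3.10) must come from the bracket $q^{(r+1)n}-\prod_{i=0}^{r}(q^{n}-q^{i})>0$ that you discarded, not from a strict estimate on $\theta$.
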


\begin{proof}
Set $\Gamma\triangleq\{0,1,\dots,m\}^{s}$. Then, $\varepsilon(a_1,\dots,a_s)\triangleq\bigcup_{i=1}^{s}\{i\}\times\{1,\dots,a_{i}\}$ induces a bijection between $\Gamma$ and $\mathcal{I}(\mathbf{P})$; moreover, we have $\max_{\mathbf{P}}(\varepsilon(a_1,\dots,a_s))=\{(i,a_i)\mid i\in\{1,\dots,s\},a_i\geqslant1\}$ for all $(a_1,\dots,a_s)\in\Gamma$. Write $\theta\triangleq(q^{r+1}-1)/(q-1)$; moreover, for any $t\in\{0,1,\dots,\theta\}$, set
\begin{eqnarray*}
\begin{split}
\sigma_t&\triangleq\binom{\theta}{t}\left(\sum_{I\in\mathcal{I}(\mathbf{P})}(q^{r+1}-1-t(q-1))^{|\max_{\mathbf{P}}(I)|} q^{(r+1)|I-\max_{\mathbf{P}}(I)|}\right)\\
&=\binom{\theta}{t}\left(\sum_{(a_1,\dots,a_s)\in\Gamma}(q^{r+1}-1-t(q-1))^{|\{i\in\{1,\dots,s\}\mid a_i\geqslant1\}|}q^{(r+1)(\sum_{(i\in\{1,\dots,s\},a_i\geqslant1)}(a_i-1))}\right)\\
&=\binom{\theta}{t}\left(\left(\sum_{h=1}^{m}(q^{r+1}-1-t(q-1))q^{(r+1)(h-1)}\right)+1\right)^{s}\\
&=\binom{\theta}{t}\left(q^{(r+1)m}-\frac{t(q-1)(q^{(r+1)m}-1)}{q^{r+1}-1}\right)^{s}.
\end{split}
\end{eqnarray*}
It is straightforward to verify that
\begin{equation}\left(\prod_{i=0}^{r}(q^{n}-q^{i})\right)-\left(\prod_{i=k-r}^{k}\frac{q^{i+n-k}-1}{q^{i}-1}\right)\left(\prod_{i=0}^{r}(q^{r+1}-q^{i})\right)<q^{n(r+1)}-q^{(n-k)(r+1)}\left(\prod_{i=0}^{r}(q^{r+1}-q^{i})\right).\end{equation}
From (3.12) together with some straightforward computation, we deduce that
\begin{equation}q^{n(r+1)}-q^{(n-k)(r+1)}\left(\prod_{i=0}^{r}(q^{r+1}-q^{i})\right)\leqslant\sigma_0-\sigma_1.\end{equation}
We infer that the left hand side of (3.13) is positive, which implies that $\sigma_0>\sigma_1$; moreover, for any $t\in\{0,1,\dots,\theta-1\}$, it is straightforward to check that $1>\frac{\sigma_1}{\sigma_0}\geqslant\frac{\sigma_{t+1}}{\sigma_t}$, which implies that $\sigma_t>\sigma_{t+1}$. Hence the sequence $(\sigma_0,\sigma_1,\dots,\sigma_\theta)$ is decreasing with $\sigma_\theta>0$, which implies that
\begin{equation}\sigma_0-\sigma_1\leqslant\sum_{t=0}^{\theta}(-1)^{t}\sigma_t.\end{equation}
Hence with Theorem 3.4, the desired result immediately follows from (3.12)--(3.14).
\end{proof}

\setlength{\parindent}{2em}
Proposition 3.6 and Corollary 3.8 can be simplified for Hamming support, as detailed in the following example.

\setlength{\parindent}{0em}
\begin{example}
For $r\in\{0,1,\dots,n-1\}$ and $\theta\triangleq(q^{r+1}-1)/(q-1)$, by Proposition 3.6, the number of all the $(r+1)$-dimensional $r$-minimal codes with respect to Hamming support is equal to
$$\sum_{t=0}^{\theta}(-1)^{t}\binom{\theta}{t}\left(q^{r+1}-t(q-1)\right)^{n}\left(\prod_{i=0}^{r}(q^{r+1}-q^{i})\right)^{-1}.$$
Moreover, let $(k,r)\in\mathbb{N}^{2}$ such that $r+1\leqslant k\leqslant n$ and
\begin{equation}n\geqslant\frac{(r+1)(k-r)-\log_{q}(q-1)-\log_{q}\left(\prod_{i=1}^{r}(1-q^{-i})\right)}{r+1-\log_{q}(q^{r+1}-q+1)}.\end{equation}
Then, by Corollary 3.8, there exists $C\leqslant_{\mathbb{F}}\mathbf{H}$ such that $\dim_{\mathbb{F}}(C)=k$ and $C$ is $r$-minimal with respect to Hamming support. When $q=2$, (3.15) seems to provide the best upper bound for the minimal length of $k$-dimensional $r$-minimal binary codes so far (c.f. \cite{3,8,14,23}); moreover, if $q=2$ and $r=1$, then (3.15) recovers [23, Corollary 33]. When $q\geqslant3$, better bounds which are both linear in $k$ and $q$ have been derived in \cite{3,23} for $r=1$ and in \cite{8} for general $r$, and we refer the reader to [3, Theorem 4.10], [8, Theorem 1.1] and [23, Theorems 5, 31] for more details.
\end{example}

\section{$r$-Minimal codes with respect to hierarchical posets}
\setlength{\parindent}{2em}
In this section, we characterize $r$-minimal codes with respect to hierarchical posets in terms of $r$-minimal Hamming metric codes. We begin with the following proposition which holds true for any poset.

\setlength{\parindent}{0em}
\begin{proposition}
Let $\mathbf{Q}=(\Omega,\preccurlyeq_{\mathbf{Q}})$ be a poset, $C\leqslant_{\mathbb{F}}\mathbf{H}$ such that $\chi(C)$ is an anti-chain of $\mathbf{Q}$, and let $r\in\mathbb{N}$ such that $C$ is $r$-minimal with respect to Hamming support. Then, for any $\mathbf{Q}$-weight isometry $\varphi\in\End_{\mathbb{F}}(\mathbf{H})$, $\varphi[C]$ is $r$-minimal with respect to $\mathbf{Q}$.
\end{proposition}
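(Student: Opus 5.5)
The plan is to combine Corollary 3.3 with Corollary 3.4. First, since $\chi(C)$ is an anti-chain of $\mathbf{Q}$ and $C$ is $r$-minimal with respect to Hamming support, the converse half of Corollary 3.3 (applied with $\mathbf{P}$ replaced by $\mathbf{Q}$) shows that $C$ is $r$-minimal with respect to $\mathbf{Q}$. Here we only need that the hypothesis of Corollary 3.2 holds on $\chi(C)$: for $i,j\in\chi(C)$, the anti-chain relation $i\preccurlyeq j$ iff $i=j$ is contained in $\preccurlyeq_{\mathbf{Q}}$. That corollary therefore transfers $r$-minimality from the anti-chain poset to $\mathbf{Q}$.

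Second, $\varphi$ is a $\mathbf{Q}$-weight isometry, i.e.\ a $(\mathbf{Q},\omega)$-weight isometry with $\omega\equiv1$. Corollary 3.4 (with $\mathbf{P}$ replaced by $\mathbf{Q}$) then shows directly that $\varphi[C]$ is $r$-minimal with respect to $\mathbf{Q}$. If one wishes to be explicit, the mechanism is Lemma 2.1: there is $\lambda\in\Aut(\mathbf{Q})$ with $\langle\chi(\varphi[A])\rangle_{\mathbf{Q}}=\lambda[\langle\chi(A)\rangle_{\mathbf{Q}}]$ for all $A\subseteq\mathbf{H}$. Since $\varphi$ is an $\mathbb{F}$-automorphism, it gives a dimension-preserving bijection between the subspaces of $C$ and those of $\varphi[C]$. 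Because $\lambda$ is a bijection, it preserves inclusions and equalities of $\mathbf{Q}$-supports, so the condition in Definition 2.1 carries over verbatim.

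There is no real obstacle. The only point to check is that Corollaries 3.3 and 3.4 were stated for the fixed poset $\mathbf{P}$ of Section 3, which is arbitrary, so they apply to $\mathbf{Q}$. One might worry that $\chi(\varphi[C])$ is no longer an anti-chain, but this does not matter: the argument transfers $\mathbf{Q}$-minimality of $C$ itself, and never needs any anti-chain property of $\varphi[C]$.
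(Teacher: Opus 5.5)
Your argument is correct and matches the paper's proof: the converse half of Corollary 3.3 makes $C$ itself $r$-minimal with respect to $\mathbf{Q}$, and Corollary 3.4 then transports this along the $\mathbf{Q}$-weight isometry $\varphi$. One slip in your side remark: to apply Corollary 3.2 with the anti-chain in the role of $\mathbf{P}$ and $\mathbf{Q}$ in the role of $\mathbf{Q}$, you need $\preccurlyeq_{\mathbf{Q}}$ restricted to $\chi(C)$ to be contained in the equality relation, which is exactly the anti-chain hypothesis, and not equality contained in $\preccurlyeq_{\mathbf{Q}}$. The containment you wrote holds for every poset and only yields the other half of Corollary 3.3.
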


\begin{proof}
It follows from Corollary 3.3 that $C$ is $r$-minimal with respect to $\mathbf{Q}$, which, together with Corollary 3.4, immediately implies the desired result.
\end{proof}

\setlength{\parindent}{2em}
Now we state and prove the main result of this section. The following theorem can be regarded as the converse of Proposition 4.1 for hierarchical posets.

\begin{theorem}
Let $\mathbf{P}=(\Omega,\preccurlyeq_{\mathbf{P}})$ be a hierarchical poset, $C\leqslant_{\mathbb{F}}\mathbf{H}$ with $\dim_{\mathbb{F}}(C)\geqslant2$, and let $r\in\{1,\dots,\dim_{\mathbb{F}}(C)-1\}$ such that $C$ is $r$-minimal with respect to $\mathbf{P}$. Then, there exist a $\mathbf{P}$-support isometry $\varphi\in\End_{\mathbb{F}}(\mathbf{H})$ and $U\leqslant_{\mathbb{F}}\mathbf{H}$ such that $\chi(U)$ is an anti-chain of $\mathbf{P}$, $U$ is $r$-minimal with respect to Hamming support and $C=\varphi[U]$.
\end{theorem}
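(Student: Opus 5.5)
The plan is to write $\mathbf{P}$ via its ordered partition $(A_1,\dots,A_s)$ from Remark 2.3(1), and to let $t$ be the largest index with $\chi(C)\cap A_t\neq\emptyset$. Put $L\triangleq A_1\cup\cdots\cup A_{t-1}$ and let $\pi:\mathbf{H}\longrightarrow\mathbf{H}$ be the projection that keeps the coordinates in $A_t$ and zeroes the others. The basic fact I will use repeatedly is this: for any $A\subseteq\mathbf{H}$ with $\chi(A)\subseteq L\cup A_t$ and $\chi(A)\cap A_t\neq\emptyset$, hierarchy gives
$$\langle\chi(A)\rangle_{\mathbf{P}}=L\cup(\chi(A)\cap A_t).$$
The proof then has two steps: show that $\pi$ is injective on $C$, and use this to straighten $C$ by a unipotent support isometry.

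\textbf{Step 1: $\pi|_C$ is injective.} Suppose $0\neq\alpha\in C$ with $\pi(\alpha)=0$, i.e.\ $\supp(\alpha)\subseteq L$.
\begin{itemize}
\item Choose $\gamma\in C$ with $\pi(\gamma)\neq0$.
\item Since $\dim_{\mathbb{F}}(C)\geqslant r+1\geqslant2$, choose $W\leqslant_{\mathbb{F}}C$ with $\dim_{\mathbb{F}}(W)=r+1$ and $\alpha,\gamma\in W$.
\item The kernel of $\pi|_W$ contains $\alpha\neq0$, so we can pick an $r$-dimensional $D\leqslant_{\mathbb{F}}W$ containing $\ker(\pi|_W)$. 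Then $\pi[D]=\pi[W]\neq\{0\}$.
\item Hence $\chi(D)\cap A_t=\chi(W)\cap A_t\neq\emptyset$. By the displayed formula, $\langle\chi(D)\rangle_{\mathbf{P}}=\langle\chi(W)\rangle_{\mathbf{P}}$.
\end{itemize}
This contradicts Theorem 3.1(4). This is the step where the hypothesis $r\geqslant1$ is used: it is what lets $W$ contain both $\alpha$ and $\gamma$.

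\textbf{Step 2: the straightening map.} By Step 1 there is a linear map $\psi:\pi[C]\longrightarrow\delta(L)$ with $c=\pi(c)+\psi(\pi(c))$ for all $c\in C$. Extend $\psi$ linearly to $\psi':\delta(A_t)\longrightarrow\delta(L)$, and define
$$\varphi(x)\triangleq x+\psi'(\pi(x)),\qquad x\in\mathbf{H}.$$
\begin{itemize}
\item $\varphi$ is an automorphism, with inverse $x\mapsto x-\psi'(\pi(x))$, because $\pi\circ\psi'=0$.
\item $\varphi$ is a $\mathbf{P}$-support isometry. If $\pi(x)=0$ then $\varphi(x)=x$. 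Otherwise $\varphi$ changes only coordinates in $L$. Both $\langle\supp(x)\rangle_{\mathbf{P}}$ and $\langle\supp(\varphi(x))\rangle_{\mathbf{P}}$ contain all of $L$, since $\supp$ meets $A_t$, and the two supports agree outside $L$. So the two ideals coincide.
\end{itemize}

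\textbf{Conclusion.} Set $U\triangleq\pi[C]$. Then $\varphi[U]=C$, and $\chi(U)\subseteq A_t$, which is an anti-chain of $\mathbf{P}$. The map $\varphi^{-1}$ is also a $\mathbf{P}$-support isometry, hence a $\mathbf{P}$-weight isometry. By Corollary 3.4, $U=\varphi^{-1}[C]$ is $r$-minimal with respect to $\mathbf{P}$, and Corollary 3.3 then shows that $U$ is $r$-minimal with respect to Hamming support.

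The main obstacle is Step 1: it needs the correct choice of $D$ containing the kernel and a careful use of the hierarchical ideal formula. Step 2 is routine once injectivity is established.
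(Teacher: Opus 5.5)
Your Step 1 has a genuine error, at exactly the point you single out as the crux. You take an $r$-dimensional $D\leqslant_{\mathbb{F}}W$ \emph{containing} $K\triangleq\ker(\pi|_W)$ and claim $\pi[D]=\pi[W]$. This is false. If $K\subseteq D$, then $\ker(\pi|_D)=K$, so $\dim_{\mathbb{F}}(\pi[D])=r-\dim_{\mathbb{F}}(K)=\dim_{\mathbb{F}}(\pi[W])-1$, i.e.\ $\pi[D]\subsetneqq\pi[W]$.

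For $r=1$ the argument collapses completely. Here $K=\langle\{\alpha\}\rangle_{\mathbb{F}}$ forces $D=\langle\{\alpha\}\rangle_{\mathbb{F}}$, so $\pi[D]=\{0\}$ and $\langle\chi(D)\rangle_{\mathbf{P}}\subseteq L\subsetneqq\langle\chi(W)\rangle_{\mathbf{P}}$. That strict inclusion is exactly what Theorem 3.1(4) demands, so you get no contradiction.

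The choice has to go the other way: take any $r$-dimensional $D\leqslant_{\mathbb{F}}W$ with $\alpha\notin D$.
Then $W=D+\langle\{\alpha\}\rangle_{\mathbb{F}}$ and $\pi(\alpha)=0$ give $\pi[D]=\pi[W]\ni\pi(\gamma)\neq0$.
Hence $\chi(D)\cap A_t=\chi(W)\cap A_t\neq\emptyset$, and your displayed formula yields $\langle\chi(D)\rangle_{\mathbf{P}}=\langle\chi(W)\rangle_{\mathbf{P}}$, contradicting Theorem 3.1(4).
Alternatively, reduce to $1$-minimality by Corollary 3.6. Then $\langle\supp(\alpha)\rangle_{\mathbf{P}}\subseteq L\subsetneqq\langle\supp(\gamma)\rangle_{\mathbf{P}}$ with $\langle\{\alpha\}\rangle_{\mathbb{F}}\neq\langle\{\gamma\}\rangle_{\mathbb{F}}$ violates Theorem 3.1(2); this is the same level comparison the paper uses.

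Once this is repaired, the rest of your proof is sound, and it is a genuinely different route from the paper's. The paper quotes the canonical decomposition for hierarchical posets [35, Theorem 6]. That gives a $\mathbf{P}$-support isometry carrying $C$ to a sum of pieces, each supported on a single level, and $1$-minimality then shows only one piece is nonzero.

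You instead look only at the highest level $A_t$ met by $\chi(C)$ and show the projection $\pi$ onto it is injective on $C$. You then write down the straightening isometry $x\mapsto x+\psi'(\pi(x))$ explicitly. Your check that it preserves $\mathbf{P}$-supports is correct; it implicitly uses that $L$ is an ideal, so the part of a support inside $L$ generates nothing beyond $L$. With $\varphi[\pi[C]]=C$, Corollaries 3.4 and 3.3 finish the proof. This makes the argument self-contained, with no appeal to the external decomposition theorem, at the cost of the modest bookkeeping in your Step 2.
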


\begin{proof}
Let $W_{t}=\{u\in\Omega\mid\len_{\mathbf{P}}(u)=t\}$ for all $t\in\mathbb{N}$, and let $m$ denote the largest cardinality of a chain in $\mathbf{P}$. Since $\mathbf{P}$ is hierarchical, by the canonical decomposition [35, Theorem 6], there exists a $\mathbf{P}$-support isometry $\psi\in\End_{\mathbb{F}}(\mathbf{H})$ and a tuple $(U_1,\dots,U_m)$ such that $U_t\leqslant_{\mathbb{F}}\delta(W_t)$ for all $t\in\{1,\dots,m\}$, and that $\psi[C]=U_1+\cdots+U_m$. By Corollary 3.4, $\psi[C]$ is $r$-minimal with respect to $\mathbf{P}$. Noticing that $1\leqslant r<\dim_{\mathbb{F}}(C)=\dim_{\mathbb{F}}(\psi[C])$, by Corollary 3.6, $\psi[C]$ is $1$-minimal with respect to $\mathbf{P}$. Now suppose by way of contradiction that there exists $s,l\in\{1,\dots,m\}$ such that $s+1\leqslant l$, $U_s\neq\{0\}$, $U_l\neq\{0\}$. Then, we can choose $I\leqslant_{\mathbb{F}}U_s$, $J\leqslant_{\mathbb{F}}U_l$ such that $\dim_{\mathbb{F}}(I)=\dim_{\mathbb{F}}(J)=1$. From $I\subseteq\delta(W_s)$, we deduce that $\langle\chi(I)\rangle_{\mathbf{P}}\subseteq\bigcup_{t=1}^{s}W_t\subseteq\bigcup_{t=1}^{l-1}W_t$. From $J\subseteq\delta(W_l)$ and $J\neq\{0\}$, we deduce that $\emptyset\neq\chi(J)\subseteq W_l$, which further implies that $\langle\chi(J)\rangle_{\mathbf{P}}=(\bigcup_{t=1}^{l-1}W_t)\cup\chi(J)$ as $\mathbf{P}$ is hierarchical. It follows that $\langle\chi(I)\rangle_{\mathbf{P}}\subsetneqq\langle\chi(J)\rangle_{\mathbf{P}}$, a contradiction to the $1$-minimality of $\psi[C]$, as desired. Hence there uniquely exists $l\in\{1,\dots,m\}$ with $U_l\neq\{0\}$. Therefore we have $\psi[C]=U_l$ and $C=\psi^{-1}[U_l]$. Since $\psi$ is a $\mathbf{P}$-support isometry of $\mathbf{H}$, $\psi^{-1}$ is a $\mathbf{P}$-support isometry of $\mathbf{H}$. Since $U_l\subseteq\delta(W_l)$, we have $\chi(U_l)\subseteq W_l$, and hence $\chi(U_l)$ is an anti-chain of $\mathbf{P}$. Finally, we have shown that $U_l=\psi[C]$ is $r$-minimal with respect to $\mathbf{P}$, which, together with Corollary 3.3, implies that $U_l$ is $r$-minimal with respect to Hamming support, as desired.
\end{proof}

\section{Cutting $r$-blocking sets induced by hierarchical posets}

\setlength{\parindent}{2em}
We begin with two general constructions that are frequently used in the study of minimal codes. More precisely, for any $M\subseteq\mathbf{H}$ and $g:\mathbf{H}\longrightarrow\mathbb{F}$, set
$$\mathcal{C}(M)=\{(\langle\tau,\theta\rangle\mid\theta\in M)\mid\tau\in\mathbf{H}\}\subseteq\mathbb{F}^{M},$$
$$\mathcal{C^{'}}(g)=\{(\langle\tau,\theta\rangle+bg(\theta)\mid\theta\in \mathbf{H})\mid\tau\in\mathbf{H},b\in\mathbb{F}\}\subseteq\mathbb{F}^{\mathbf{H}}.$$
Minimal codes induced by various $M$ and $g$ have been extensively explored (see, e.g., \cite{10,15,17,18,19,25,28,31,32,37,38,46,53,58}). In this section, we follow \cite{28,53} and study the minimality of $\mathcal{C}(M)$ and $\mathcal{C^{'}}(g)$, where $M$ and $g$ are induced by hierarchical posets with two levels. By Corollary 3.5, for $r\in\{0,1,\dots,n-1\}$, $\mathcal{C}(M)$ is an $n$-dimensional $r$-minimal code with respect to Hamming support if and only if $M$ is a cutting $r$-blocking set of $\mathbf{H}$; similarly, $\mathcal{C^{'}}(g)$ is an $(n+1)$-dimensional $1$-minimal code with respect to Hamming support if and only if $g=\{(\theta,g(\theta))\mid \theta\in\mathbf{H}\}$ is a cutting $1$-blocking set of $\mathbf{H}\times\mathbb{F}$. So, from now on, we study the minimality of $\mathcal{C}(M)$ and $\mathcal{C^{'}}(g)$ by considering whether $M$ and $g$ are cutting ($r$-)blocking sets.

Throughout the rest of this section, let $m\in\{1,\dots,n-1\}$, $E\subseteq \Omega$ with $|E|=m$, and let
\begin{equation}S=\{\alpha\in \mathbf{H}\mid E\nsubseteq\supp(\alpha),\supp(\alpha)\nsubseteq E\}.\end{equation}
$S$ is in fact induced by a hierarchical poset with two levels. More precisely, the poset $\mathbf{P}=(\Omega,\preccurlyeq_{\mathbf{P}})$ defined as
\begin{equation}\text{$i\preccurlyeq_{\mathbf{P}}j$ for all $(i,j)\in E\times(\Omega-E)$}\end{equation}
is hierarchical such that $\min_{\mathbf{P}}(I)=E$ and $\max_{\mathbf{P}}(I)=\Omega-E$; moreover, one can check that
\begin{equation}S=\{\alpha\in \mathbf{H}\mid \supp(\alpha)\not\in\mathcal{I}(\mathbf{P})\}.\end{equation}
When $\mathbb{F}$ is the binary field and $r=1$, necessary and sufficient conditions for the minimality of $\mathcal{C}(S)$ have been given in [53, Theorem 3.2], and we generalize the counterpart result to arbitrary $\mathbb{F}$ and $r$, as detailed in the following theorem.

\setlength{\parindent}{0em}
\begin{theorem}
Let $r\in\{0,1,\dots,n-1\}$. Then, $S$ is a cutting $r$-blocking set of $\mathbf{H}$ if and only if $m\geqslant r+2$ and $n-m\geqslant r+1$.
\end{theorem}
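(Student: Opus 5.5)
The plan is to argue directly from Definition 2.2 with $X=\mathbf{H}$, $k=n$. Thus $S$ is a cutting $r$-blocking set if and only if, for every $V\leqslant_{\mathbb{F}}\mathbf{H}$ with $\dim_{\mathbb{F}}(V)=n-r$, the set $S\cap V$ is not contained in any hyperplane $H'$ of $V$. Note that $V\neq\{0\}$ because $r\leqslant n-1$. The complement of $S$ is $\delta(E)\cup\{\alpha\in\mathbf{H}\mid\alpha_e\neq0\ \forall e\in E\}$. I will prove necessity by exhibiting explicit subspaces $V$ defined by $r$ independent linear conditions, and sufficiency by a union-of-subspaces argument.

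\emph{Necessity.} Write $E=\{e_1,\dots,e_m\}$.

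Suppose first that $m\leqslant r+1$. Let $V$ be cut out by the $m-1$ equations $\alpha_{e_1}=\alpha_{e_2}=\cdots=\alpha_{e_m}$ together with $\alpha_j=0$ for $r-m+1$ chosen coordinates $j\notin E$. This is possible since $r-m+1\leqslant n-m$, and the $r$ equations are independent, so $\dim_{\mathbb{F}}(V)=n-r$. Every $\alpha\in V$ with $\alpha_{e_1}\neq0$ has $E\subseteq\supp(\alpha)$ and so lies outside $S$. Hence $S\cap V\subseteq V\cap\{\alpha_{e_1}=0\}$. This is a proper subspace of $V$, because the indicator vector of $E$ lies in $V$.

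Suppose next that $n-m\leqslant r$. Let $V$ be cut out by $\alpha_j=0$ for all $j\notin E$, plus $r-(n-m)$ further independent equations involving only coordinates in $E$. There are $m-1\geqslant r-n+m$ such equations available, since $r\leqslant n-1$. Then $V\leqslant\delta(E)$, so $S\cap V=\emptyset$, while $V\neq\{0\}$.

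\emph{Sufficiency.} Assume $m\geqslant r+2$ and $n-m\geqslant r+1$. Suppose, for contradiction, that $S\cap V\subseteq H'$ for some $V$ of dimension $n-r$ and some hyperplane $H'$ of $V$. Fix $e\in E$ and put $V_e=V\cap\{\alpha_e=0\}$. No vector of $V_e$ has $E\subseteq\supp$, so $V_e\subseteq H'\cup(V\cap\delta(E))$. A vector space over a field is never a union of two proper subspaces, so $V_e\subseteq H'$ or $V_e\subseteq\delta(E)$.

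The second option is impossible. Indeed, $V\cap\delta(\Omega-E)\subseteq V_e$ has dimension at least $n-r-m\ldots$; more precisely, it has dimension at least $(n-r)-m\geqslant1$ by $n-m\geqslant r+1$. Its nonzero vectors are not in $\delta(E)$.

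Hence $V_e\subseteq H'$. Since $V_e$ is $V$ or a hyperplane of $V$, and $V\not\subseteq H'$, we get $V_e=H'$. This holds for every $e\in E$. So all the functionals $\alpha\mapsto\alpha_e$ with $e\in E$ have the same kernel on $V$, and are therefore proportional on $V$: there are scalars $c_e$ with $\alpha_e=c_e\alpha_{e_1}$ for all $\alpha\in V$. This gives $m-1$ linearly independent functionals $\alpha\mapsto\alpha_e-c_e\alpha_{e_1}$ ($e\neq e_1$) on $\mathbf{H}$ vanishing on $V$. Since $V$ has codimension $r$, we get $m-1\leqslant r$, contradicting $m\geqslant r+2$.

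The main point needing care is the sufficiency step. There one must see that the two hypotheses are used separately: $n-m\geqslant r+1$ rules out $V_e\subseteq\delta(E)$, and $m\geqslant r+2$ rules out all $V_e$ coinciding with a single hyperplane. The rest is routine dimension counting.
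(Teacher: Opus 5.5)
Your proof is correct, but the sufficiency half takes a different route from the paper.

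\textbf{Necessity.} This half is essentially the paper's. Your first subspace is $\langle\mathbf{1}_E\rangle_{\mathbb{F}}+\delta(\Omega-E-J)$ with $|J|=r-m+1$, where $\mathbf{1}_E$ is the indicator vector of $E$. This is exactly the $V=A+B$ (with $\chi(A)=E$ and $B\leqslant\delta(N_t)$ of dimension $n-r-1$) used in the proof of Theorem 5.2 to force $|N_t|\leqslant n-r-2$. Your second subspace is a nonzero subspace of $\delta(E)$, on which $S$ is empty.

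\textbf{Sufficiency: the paper's route.} The paper obtains Theorem 5.1 by specializing Theorem 5.2 to $\Lambda=\{t\}$, $N_t=\Omega-E$, so that $T=\emptyset$. That proof rests on Lemma 5.1. One decomposes $V=A+B+\{\lambda+\tau(\lambda)\mid\lambda\in I\}$ with $A=V\cap\delta(E)$ and $B=V\cap\delta(\Omega-E)$. A three-step case analysis then shows that $\langle S\cap V\rangle_{\mathbb{F}}\neq V$ forces (5.4), (5.5) or (5.6). In this specialization, $n-m\geqslant r+1$ gives $B\neq\{0\}$, which excludes (5.6). Cases (5.4) and (5.5) are then vacuous: they force $\dim_{\mathbb{F}}(B)=n-r-1$, while $B\leqslant\delta(\Omega-E)$ has dimension at most $n-m<n-r-1$.

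\textbf{Sufficiency: your route.} You use the hyperplanes $V_e=V\cap\{\alpha_e=0\}$ and the fact that no space is a union of two proper subspaces. This extracts the same dichotomy more cheaply. Either $V\cap\delta(\Omega-E)=\{0\}$, or the coordinate functionals indexed by $E$ are nonzero and pairwise proportional on $V$. The second alternative says the projection of $V$ onto $\delta(E)$ is spanned by a single vector of support $E$, which is precisely the common content of (5.4) and (5.5).

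\textbf{What each buys.} Your argument is shorter, avoids choosing complements and the map $\tau$, and shows clearly which hypothesis rules out which case. The paper's route packages the description of the failing subspaces as a reusable lemma. It feeds that lemma into Theorem 5.2, where $T$ must supply the missing direction, and into the proof of Theorem 5.4, where it isolates $\{0,\eta\}+\delta(\Omega-E)$. Both of those applications occur only with $V\cap\delta(\Omega-E)\neq\{0\}$. So your argument, restated as a structural claim rather than a contradiction, would serve there as well.
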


\setlength{\parindent}{2em}
In fact, we have the following more general result.

\setlength{\parindent}{0em}
\begin{theorem}
Let $(N_t\mid t\in\Lambda)$ be a tuple of subsets of $\Omega-E$ indexed by a set $\Lambda$, and let
$$T=\{\theta\in \mathbf{H}\mid E\subseteq\supp(\theta),\forall~t\in\Lambda:\supp(\theta)-E\nsubseteq N_t\}.$$
Then, for $r\in\{0,1,\dots,n-1\}$, $S\cup T$ is a cutting $r$-blocking set of $\mathbf{H}$ if and only if one of the following two conditions hold:

{\bf{(1)}}\,\,$n-m\geqslant r+1$, and $|N_t|\leqslant n-r-2$ for all $t\in\Lambda$;

{\bf{(2)}}\,\,$n-m=r$, $\Lambda=\emptyset$, and either $m=1$ or $q\neq2$ holds.
\end{theorem}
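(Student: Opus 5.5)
Throughout, I would use the standard reformulation of Definition 2.2. A set $M\subseteq\mathbf{H}$ is a cutting $r$-blocking set if and only if, for every $V\leqslant_{\mathbb{F}}\mathbf{H}$ with $\dim_{\mathbb{F}}(V)=n-r$ and every hyperplane $K$ of $V$, the set $M\cap(V-K)$ is nonempty. So I would work with the complement $\mathbf{H}-(S\cup T)$. By (5.1) and the definition of $T$, a vector $\alpha$ lies outside $S\cup T$ if and only if either $\supp(\alpha)\subseteq E$, or $E\subseteq\supp(\alpha)$ and $\supp(\alpha)-E\subseteq N_t$ for some $t\in\Lambda$. When $\Lambda=\emptyset$ the second alternative is impossible, so the complement is exactly $\delta(E)$.

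\emph{Necessity.} I would exhibit a bad pair $(V,K)$ whenever neither (1) nor (2) holds.

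\textbf{Case $n-m\leqslant r-1$.} Then $\dim\delta(E)=m\geqslant n-r+1$. Take any $V\leqslant\delta(E)$ of dimension $n-r$. Every vector of $V$ lies in the complement, so $(S\cup T)\cap V=\emptyset$.

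\textbf{Case $n-m=r$.} Take $V=\delta(E)$. Then $V\cap S=\emptyset$, and $V\cap T$ consists of vectors with support exactly $E$, provided $\Lambda=\emptyset$. If $\Lambda\neq\emptyset$, then $V\cap T=\emptyset$, because $\supp-E=\emptyset\subseteq N_t$. If $\Lambda=\emptyset$, the full-support vectors of $\mathbb{F}^{E}$ span $\mathbb{F}^E$ if and only if $m=1$ or $q\neq2$; when $q=2$ and $m\geqslant2$ the only such vector is the all-one vector. This case therefore yields exactly condition (2), and it also gives the sufficiency part of (2).

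\textbf{Case $n-m\geqslant r+1$ but $|N_t|\geqslant n-r-1$ for some $t$.} Consider the subspace $Y$ of $\delta(E\cup N_t)$ consisting of the vectors that are constant on $E$; it has dimension $1+|N_t|\geqslant n-r$. Choose $V\leqslant Y$ of dimension $n-r$ containing $\mathbf{1}_E$, fix $e\in E$, and put $K=\{\alpha\in V\mid\alpha_e=0\}$. Every $\alpha\in V-K$ has $E\subseteq\supp(\alpha)$ and $\supp(\alpha)-E\subseteq N_t$, so $\alpha\notin S\cup T$.

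\emph{Sufficiency under (1).} Fix $V$ of dimension $n-r$ and a hyperplane $K$ of $V$.

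First set $W=V\cap\delta(\Omega-E)$. Then $\dim W\geqslant n-r-m$; this is not yet useful by itself, so the first step is to look at $W$ directly. Any nonzero element of $W$ lies in $S$, since its support is contained in neither $E$ nor is it a superset of $E$. So if $W\nsubseteq K$ we are done. Assume from now on $W\subseteq K$, and pick $v\in V-K$; the whole coset $v+K$ is available.

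For $\alpha\in v+K$, let $\alpha|_E$ denote its restriction to $E$ and $\alpha|_{\Omega-E}$ its restriction to $\Omega-E$. The argument splits on these restrictions.

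\textbf{Not full on $E$.} Suppose some $\alpha\in v+K$ has $\alpha|_E$ not of full support. If $\alpha|_{\Omega-E}\neq0$ we get $\alpha\in S$. Otherwise $\supp(\alpha)\subseteq E$, and I would perturb $\alpha$ by an element of $K$ that keeps the zero coordinate in $E$ but creates nonzero coordinates off $E$. Such an element is available when $K$ is not contained in $\delta(E)$ together with the relevant coordinate constraints; this reduces to a dimension count using $n-m\geqslant r+1$.

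\textbf{Full on $E$.} If every element of $v+K$ has full support on $E$, then the projection of $K$ to $\mathbb{F}^{E}$ must be small. I would then show that the projections $\{\alpha|_{\Omega-E}\mid\alpha\in v+K\}$ form an affine subspace $A$ of $\mathbb{F}^{\Omega-E}$ of dimension at least $\dim K-(m-1)\geqslant n-r-m$. It then suffices to find $a\in A$ whose support is not contained in any $N_t$. The plan is that $A$ together with its direction contains enough freedom: $n-m\geqslant r+1$ coordinates and dimension large relative to $n-r-2\geqslant|N_t|$. With this, the support of a suitable $a$ cannot lie inside any set of size at most $n-r-2$. In particular, one could pick $a$ with nonzero coordinates at $n-r-1$ positions by an affine hitting argument.

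\emph{Main obstacle.} The hard step is this final affine-subspace claim, and in particular making the counting work uniformly over all $t$ and for $q=2$. I would first try to prove that an affine subspace of $\mathbb{F}^{\Omega-E}$ whose direction space has dimension $d$ contains a vector of Hamming weight at least $d+1$, or else one with weight at least $d$ not inside any given $N_t$. I expect this to need a careful choice of $V$-coordinates, through an information set of $V$ of size $n-r$ meeting $\Omega-E$ in at least $n-r-m$ positions. The $q=2$ edge cases will need checking by hand.
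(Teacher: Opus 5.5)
You have a genuine gap in the sufficiency of (1), at the subcase you yourself flag as the main obstacle. When every element of $v+K$ has full support on $E$, you only use $\dim A\geqslant n-r-m$. That is too weak. To escape every $N_t$ you need a vector whose support meets $\Omega-E$ in at least $n-r-1$ coordinates, since $|N_t|$ may equal $n-r-2$. For $m\geqslant2$, an affine space of dimension $n-r-m$ can lie entirely inside $\delta(I)$ with $|I|=n-r-m$ and $I\subseteq N_t$. The auxiliary lemma you propose fails for the same reason: take $\delta(I)$ with $|I|=d$ and $I\subseteq N_t$. So no information-set or $q=2$ bookkeeping will rescue this route.

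The missing observation is that full support on $E$ throughout the coset forces far more than you use. For $e\in E$, the map $\alpha\mapsto\alpha_e$ is affine on $v+K$, so it is either constant or takes every value of $\mathbb{F}$, including $0$. Since it never vanishes, it is constant, so $\beta_e=0$ for all $\beta\in K$. Hence $K\leqslant\delta(\Omega-E)\cap V=W\subseteq K$, giving $K=W$ and $\dim W=n-r-1$.

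The restrictions $\alpha|_{\Omega-E}$ for $\alpha\in v+K$ then form a coset of an $(n-r-1)$-dimensional subspace. Any coset of a $d$-dimensional subspace contains a vector of weight at least $d$: take the representative vanishing on the pivot positions of a reduced echelon basis and add all basis vectors. The corresponding $\alpha\in v+K$ has $\alpha|_E=v|_E$ of full support and $|\supp(\alpha)-E|\geqslant n-r-1>|N_t|$, so $\alpha\in T$, for every $q$. Your ``not full on $E$'' subcase is also simpler than you suggest: any $w\in W-\{0\}\subseteq K$ is the required perturbation.

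With this repair, your hyperplane/coset argument is a leaner route than the paper's. The paper first proves Lemma 5.1, writing $V=A+B+\{\lambda+\tau(\lambda)\mid\lambda\in I\}$ and showing that when $B\neq\{0\}$, failure of $\langle S\cap V\rangle_{\mathbb{F}}=V$ forces (5.4) or (5.5). That is, $\dim B=n-r-1$ plus a line whose $E$-part has support exactly $E$, after which it applies the same high-weight coset fact. Your affine-function observation reaches that configuration in one line.

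There are also two smaller slips.
\begin{itemize}
\item When $\Lambda=\emptyset$, the complement of $S\cup T$ is $\{\alpha\mid\supp(\alpha)\subsetneqq E\}$, not $\delta(E)$, because vectors with support exactly $E$ lie in $T$ (you use this yourself in the case $n-m=r$). So in the case $n-m\leqslant r-1$ with $\Lambda=\emptyset$ you cannot take an arbitrary $V\leqslant\delta(E)$: for $q=3$ and $|E|=3$, a plane in $\delta(E)$ can be spanned by vectors of support $E$. Instead take $V\leqslant\delta(E')$ with $E'\subsetneqq E$ and $|E'|=m-1\geqslant n-r$, as the paper does.
\item The sufficiency of (2) also needs the case $V\nsubseteq\delta(E)$. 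This follows because $V-\delta(E)\subseteq S\cup T$ spans $V$.
\end{itemize}
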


\setlength{\parindent}{0em}
\begin{remark}
In Theorem 5.2, the set $T$ is also induced by the hierarchical poset $\mathbf{P}$ defined in (5.2). More precisely, we have $\{E\cup N_t\mid t\in\Lambda\}\subseteq\mathcal{I}(\mathbf{P})$ and
\begin{eqnarray*}
\hspace*{-2mm}T=\begin{cases}
\{\theta\in \mathbf{H}\mid \supp(\theta)\in\mathcal{I}(\mathbf{P}),\forall~t\in\Lambda:\supp(\theta)\nsubseteq E\cup N_t\},&\Lambda\neq\emptyset;\\
\{\theta\in \mathbf{H}\mid E\subseteq\supp(\theta)\},&\Lambda=\emptyset.
\end{cases}
\end{eqnarray*}
When $q=|\mathbb{F}|=2$ and $r=1$, the minimality of $\mathcal{C}(S\cup T)$ has been explored in part of [28, Theorems 6.1 and 6.2] for $|\Lambda|\in\{1,2\}$, and in [53, Theorem 3.3] for an arbitrary $\Lambda\neq\emptyset$. Theorem 5.2 further generalizes all the counterpart results.
\end{remark}

\setlength{\parindent}{2em}
Next, we consider $\mathcal{C^{'}}(g)$ where $g$ is the Boolean function induced by $S\cup T$ in Theorem 5.2. The minimality and optimality of $\mathcal{C^{'}}(g)$ have been explored in [28, Theorems 6.3 and 6.4] when $|\Lambda|\in\{1,2\}$, and it has been noted in [28, Section 7] that ``it also should be interesting to investigate the
cases of more than two order ideals in hierarchical posets with two levels or many levels''. The following result enables us to settle the minimality of $\mathcal{C^{'}}(g)$ for an arbitrary family of ideals in hierarchical posets with two levels.

\setlength{\parindent}{0em}
\begin{theorem}
Suppose that $q=2$. Define $T$ as in Theorem 5.2, and let $g:\mathbf{H}\longrightarrow\mathbb{F}$ such that $g^{-1}[\{0\}]=S\cup T\cup\{0\}$. Then, $g$ is a cutting $1$-blocking set of $\mathbf{H}\times\mathbb{F}$ if and only if $2\leqslant m\leqslant n-2$, $\bigcup_{t\in\Lambda}N_t=\Omega-E$, and $m=2\Longrightarrow(\forall~t\in\Lambda:N_t\neq\Omega-E)$.
\end{theorem}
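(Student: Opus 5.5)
The plan is to translate ``$g$ is a cutting $1$-blocking set of $\mathbf{H}\times\mathbb{F}$'' into two concrete conditions on $g$, and then check these conditions against the explicit support of $g$. Since $q=2$, the hyperplanes of $\mathbf{H}\times\mathbb{F}$ are the kernels $V_{\tau,b}=\{(\theta,c)\mid\langle\tau,\theta\rangle+bc=0\}$ with $(\tau,b)\neq(0,0)$, and the graph points lying in $V_{\tau,b}$ are the pairs $(\theta,g(\theta))$ with $\langle\tau,\theta\rangle+bg(\theta)=0$. We split on $b$:

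\textbf{(A)} If $b=0$ (so $\tau\neq0$), the graph points in $V_{\tau,0}$ are exactly $\{(\theta,g(\theta))\mid\theta\in\tau^{\bot}\}$, where $\tau^{\bot}=\{\theta\in\mathbf{H}\mid\langle\tau,\theta\rangle=0\}$. Since $g(0)=0$, these span $V_{\tau,0}=\tau^{\bot}\times\mathbb{F}$ if and only if $g|_{\tau^{\bot}}$ is not $\mathbb{F}$-linear.

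\textbf{(B)} If $b=1$, then $V_{\tau,1}$ is the graph of $\theta\mapsto\langle\tau,\theta\rangle$. Hence the graph points in it span $V_{\tau,1}$ if and only if $Z_\tau\triangleq\{\theta\in\mathbf{H}\mid g(\theta)=\langle\tau,\theta\rangle\}$ spans $\mathbf{H}$; this must hold for every $\tau\in\mathbf{H}$, including $\tau=0$.

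Next we describe $g$ explicitly. From (5.1) and the definition of $T$, $g(\theta)=1$ if and only if $\theta\neq0$ and either $\supp(\theta)\subseteq E$, or $E\subseteq\supp(\theta)$ and $\supp(\theta)-E\subseteq N_t$ for some $t\in\Lambda$. Thus $g$ is determined by the pair $(\supp(\theta)\cap E,\supp(\theta)-E)$. The main tool will be local moves: adding a unit vector $e_i$ to $\theta$ changes only one of the two parts of its support.

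For necessity, we produce an explicit failing hyperplane when each condition is violated.
\begin{itemize}
\item If some $j\in(\Omega-E)-\bigcup_{t}N_t$ exists, then $g(\theta)=0$ whenever $j\in\supp(\theta)$. Choosing $\tau=e_j$ in (B), we expect $Z_{e_j}$ to be contained in the hyperplane $\{\theta\mid\theta_j=0\}$, so $Z_{e_j}$ does not span $\mathbf{H}$.
\item If $m=1$, or if $m=2$ and some $N_t=\Omega-E$, we expect $g|_{\tau^{\bot}}$ to be linear for a suitable $\tau$ supported on $E$, or (B) to fail with such a $\tau$. For $m=1$, the restriction of $g$ to the coordinate on $E$ is a coordinate projection. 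For $m=2$ with $N_t=\Omega-E$, on $\{\theta\mid\theta_{E}\in\{00,11\}\}$ the function $g$ equals $\theta_{i_1}$ (or the sum $\theta_{i_1}+\theta_{i_2}$ on another coset), where $E=\{i_1,i_2\}$.
\item If $m\geqslant n-1$, then $\Omega-E$ has at most one element, and again $g$ becomes linear on a hyperplane.
\end{itemize}
We will verify each of these cases by a short computation.

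For sufficiency, assume $2\leqslant m\leqslant n-2$, $\bigcup_tN_t=\Omega-E$, and the extra condition when $m=2$. For (A) we exhibit a non-linearity inside $\tau^{\bot}$: a triple $\theta,\theta',\theta+\theta'\in\tau^{\bot}$ with $g(\theta)+g(\theta')\neq g(\theta+\theta')$. The natural candidates are two disjoint nonempty subsets of $E$ whose union is a proper subset of $E$, where each part has $g=1$ but the union also has $g=1$. We adjust the choice according to which part of $\tau$ meets $E$. For (B), we show that $Z_\tau$ contains, for every $i\in\Omega$, a pair $\theta,\theta+e_i$; this gives $e_i\in\langle Z_\tau\rangle_{\mathbb{F}}$. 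For $i\in E$ we work with $\theta$ having $\supp(\theta)\subsetneq E$ (where $g\equiv1$ off $0$) or with $\theta$ whose support mixes $E$ and $\Omega-E$ (where $g\equiv0$, since such $\theta$ lie in $S$). For $i\in\Omega-E$, we use $j=i\in N_t$ for some $t$ and compare $\theta$ with $E\cup\{i\}\subseteq\supp(\theta)$ (value $1$) against vectors in $S$ (value $0$), choosing the remaining coordinates so that $\langle\tau,\cdot\rangle$ matches.

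The main obstacle is (B) for arbitrary $\tau$: the matching $g(\theta)=\langle\tau,\theta\rangle$ has to be arranged simultaneously for $\theta$ and $\theta+e_i$. We plan to handle this by splitting according to whether $\tau|_E$ is $0$, nonzero of even weight, or of odd weight, and using the spare coordinates guaranteed by $m\geqslant2$ and $n-m\geqslant2$ to fix the parity of $\langle\tau,\theta\rangle$. The boundary case $m=2$ is where the hypothesis $N_t\neq\Omega-E$ should be consumed.
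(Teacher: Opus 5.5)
\paragraph{The sufficiency plan for (B) fails.} Your reduction to (A) and (B) is correct; it is the paper's Proposition 5.1 with $r=1$. Your first two necessity bullets are also fine. The problem is the plan to find, for every $i$, a pair $\theta,\theta+e_i$ in $Z_\tau$. For $\tau=e_i$ with $i\in\Omega-E$, the set $Z_{e_i}$ \emph{never} contains such a pair, for any admissible $E$ and $(N_t)$.

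To see this, take $\theta_i=0$. A pair would need $g(\theta)=0$ and $g(\theta+e_i)=1$. The second condition forces $E\subseteq\supp(\theta+e_i)$ and $\supp(\theta+e_i)-E\subseteq N_t$ for some $t$. Then $\theta\neq0$ also satisfies $E\subseteq\supp(\theta)$ and $\supp(\theta)-E\subseteq N_t$, so $g(\theta)=1$.

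The underlying reason is that the condition ``$\supp(\cdot)-E\subseteq N_t$ for some $t$'' is closed under deleting coordinates. Your step ``compare $\theta$ with $E\cup\{i\}\subseteq\supp(\theta)$ against vectors in $S$'' overlooks that $\theta+e_i$ still has $E$ in its support, so it is not in $S$. Concretely, take $n=4$, $E=\{1,2\}$, $N_1=\{3\}$, $N_2=\{4\}$; all hypotheses hold. Then $Z_{e_3}=\{0,e_4,e_1+e_4,e_2+e_4,e_1+e_2+e_3\}$ spans $\mathbf{H}$, but no two of its vectors differ only in coordinate $3$.

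\paragraph{How to repair (B).} What works is a ``hyperplane plus one point'' argument. Let $L=S\cup T\cup\{0\}$ and $\tau\neq0$. Then $Z_\tau=(L\cap\tau^{\bot})\cup(\mathbf{H}-(\tau^{\bot}\cup L))$. So it suffices to show two things:
\begin{itemize}
\item $\langle L\cap\tau^{\bot}\rangle_{\mathbb{F}}=\tau^{\bot}$, i.e.\ $L$ is a cutting $1$-blocking set of $\mathbf{H}$. Theorem 5.2 gives this, since your hypotheses yield $n-m\geqslant2$ and $|N_t|\leqslant n-3$ for all $t$.
\item There is a single $\theta\notin\tau^{\bot}$ with $g(\theta)=1$. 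If $\tau|_E\neq0$, take $e_j$ with $j\in E$ and $\tau_j=1$. Otherwise take $\eta+e_j$ with $\supp(\eta)=E$, $j\in\Omega-E$ and $\tau_j=1$; it has $g=1$ because $j\in\bigcup_tN_t$.
\end{itemize}
This is exactly the paper's route through Corollary 5.1 and Theorem 5.4.

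\paragraph{The case $m=n-1\geqslant3$.} Your third necessity bullet is wrong here. Suppose $\Omega-E=\{j\}$ and $\bigcup_tN_t=\{j\}$; otherwise your first bullet already applies. Then $g\equiv1$ on $\delta(E)-\{0\}$. Every hyperplane of $\mathbf{H}$ meets $\delta(E)$ in dimension at least $m-1\geqslant2$, so $g$ is nonlinear on every hyperplane and (A) never fails. The failure is in (B): $Z_{e_j}=\{0,\theta_0\}$ with $\supp(\theta_0)=\Omega$, which does not span $\mathbf{H}$. This is the paper's use of condition (3) of Proposition 5.1 with $A=\delta(E)$.

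\paragraph{A note on (A).} The same fact, $g\equiv1$ on $\delta(E)-\{0\}$ (valid once $\Lambda\neq\emptyset$), is the clean way to handle (A) for $m\geqslant3$. Any two independent vectors of $\tau^{\bot}\cap\delta(E)$ give the non-linearity. Your ``disjoint subsets of $E$'' candidates need not exist inside $\tau^{\bot}$; for example, they do not when $m=3$ and $\tau|_E=(1,1,1)$. For $m=2$, (A) needs a separate case check on $\tau|_E$. Only $\tau=\sum_{i\in E}e_i$ uses the hypothesis $N_t\neq\Omega-E$.
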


\setlength{\parindent}{2em}
In fact, we have the following more general result.

\setlength{\parindent}{0em}
\begin{theorem}
Suppose that $q=2$. Let $W\subseteq\mathbb{F}^{\Omega-E}$, and let
$$T=\{\alpha\in \mathbf{H}\mid E\subseteq\supp(\alpha),(\alpha_i\mid i\in\Omega-E)\in W\}.$$
Moreover, let $g:\mathbf{H}\longrightarrow\mathbb{F}$ such that $g^{-1}[\{0\}]=S\cup T\cup\{0\}$. Then, $g$ is a cutting $1$-blocking set of $\mathbf{H}\times\mathbb{F}$ if and only if one of the following two conditions holds:

{\bf{(1)}}\,\,$3\leqslant m\leqslant n-2$. Moreover, for any $B\leqslant \mathbb{F}^{\Omega-E}$ with $\dim_{\mathbb{F}}(B)=n-m-1$ and any $x\in \mathbb{F}^{\Omega-E}-B$, it holds that $\{x\}+B\nsubseteq W$;

{\bf{(2)}}\,\,$m=2$, $n\geqslant 4$, $W\neq\emptyset$. Moreover, for any $B\leqslant \mathbb{F}^{\Omega-E}$ with $\dim_{\mathbb{F}}(B)=n-3$ and any $x\in \mathbb{F}^{\Omega-E}$, it holds that $\{x\}+B\nsubseteq W$.
\end{theorem}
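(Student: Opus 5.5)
The first step is to turn the cutting $1$-blocking condition for the graph $g=\{(\theta,g(\theta))\mid\theta\in\mathbf{H}\}\subseteq\mathbf{H}\times\mathbb{F}$ into two conditions on $g$ alone. Every hyperplane of $\mathbf{H}\times\mathbb{F}$ is the kernel of a nonzero pair $(\tau,b)$, and since $q=2$ there are two types.

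\textbf{Type $b=0$.} The hyperplane is $V=\tau^{\bot}\times\mathbb{F}$ with $\tau\neq0$. All points $(\theta,g(\theta))$ with $\theta\in\tau^{\bot}$ lie in $V$, and their first coordinates already span $\tau^{\bot}$. So $\langle g\cap V\rangle_{\mathbb{F}}=V$ holds iff the span also contains $(0,1)$. Because $g(0)=0$, this happens iff $g|_{\tau^{\bot}}$ is not the restriction of a linear functional.

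\textbf{Type $b=1$.} The hyperplane is $V_\tau=\{(\theta,\langle\tau,\theta\rangle)\}$ for an arbitrary $\tau$, including $\tau=0$. Projection onto $\mathbf{H}$ is an isomorphism $V_\tau\to\mathbf{H}$. Hence $\langle g\cap V_\tau\rangle_{\mathbb{F}}=V_\tau$ iff the agreement set $A_\tau=\{\theta\mid g(\theta)=\langle\tau,\theta\rangle\}$ is not contained in any linear hyperplane of $\mathbf{H}$.

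So $g$ is cutting iff (A) $g$ is non-linear on every linear hyperplane of $\mathbf{H}$, and (B) every agreement set $A_\tau$ spans $\mathbf{H}$.

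Next I make $g$ explicit. Write $\theta=(y,x)$ with $y\in\mathbb{F}^{E}$ and $x\in\mathbb{F}^{\Omega-E}$, and let $\mathbf{1}$ denote the all-ones vector of $\mathbb{F}^{E}$. Over $\mathbb{F}_2$ the condition $E\subseteq\supp(\theta)$ means $y=\mathbf{1}$. From (5.1) one gets $g(y,x)=1$ iff either $x=0$ and $y\notin\{0,\mathbf{1}\}$, or $y=\mathbf{1}$ and $x\notin W$. Thus $g$ is supported on the two coordinate slices $\{x=0\}$ and $\{y=\mathbf{1}\}$. I then write $\tau=(\sigma,\rho)$ with $\sigma\in\mathbb{F}^{E}$ and $\rho\in\mathbb{F}^{\Omega-E}$, and split cases according to whether $\sigma$ and $\rho$ vanish and whether $\langle\sigma,\mathbf{1}\rangle=0$.

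\textbf{Condition (A).} On a hyperplane $\tau^{\bot}$, I restrict to the subspace $\{x=0\}\cap\tau^{\bot}$. There $g$ is the indicator of $y\notin\{0,\mathbf{1}\}$. For $m\geqslant3$, or $m=2$ in suitable cases, this is visibly non-linear, since a linear map vanishing at $0$ cannot equal $1$ on three points and $0$ on one coset. The failures are the boundary cases: $m=1$, $m=2$ with $\sigma$ or $\rho$ degenerate, and $m=n-1$. These give the necessity of $m\geqslant2$ and of $m\leqslant n-2$, in the latter case together with (B).

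\textbf{Condition (B).} For a given $\tau$ the agreement set contains the easy points. These are the $(y,x)$ with $x\neq0$, $y\neq\mathbf{1}$ and $\langle\tau,\theta\rangle=0$, together with the points with $x=0$ where $g$ matches. The span of the easy points can be computed directly, and it fails to be all of $\mathbf{H}$ only in a few configurations of $\tau$. In those configurations the missing direction must be supplied by points with $y=\mathbf{1}$, namely those with $[x\in W]$ equal to the complement of an affine function of $x$.

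This is exactly where the statements ``$\{x\}+B\nsubseteq W$ for $B$ of dimension $n-m-1$, $x\notin B$'' (case $m\geqslant3$), and ``for $B$ of dimension $n-3$ and any $x$'' (case $m=2$), come from. The hyperplane containing the whole agreement set corresponds to an affine hyperplane $\{x\}+B$ of $\mathbb{F}^{\Omega-E}$ lying entirely in $W$. When $m=2$ the slice $x=0$ contributes less, so linear hyperplanes ($x\in B$) also become obstructions, and $W\neq\emptyset$ is needed.

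Proving both directions then amounts to exhibiting, for each violated condition, the explicit $\tau$ or hyperplane from this case analysis, and conversely checking that the listed conditions kill every bad case. I expect the main obstacle to be the bookkeeping in (B). The hard part is pinning down exactly which $\tau$ leave the easy points in a hyperplane, and showing that the residual requirement is precisely the affine-hyperplane condition on $W$, with the right distinction between $m=2$ and $m\geqslant3$. I would first treat $\rho=0$ and $\rho\neq0$ separately, then subdivide by $\langle\sigma,\mathbf{1}\rangle$.
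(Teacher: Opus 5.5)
Your reduction is correct. Splitting hyperplanes of $\mathbf{H}\times\mathbb{F}$ into kernels of $(\tau,0)$ and $(\tau,1)$ gives exactly (A) and (B), which is the paper's Proposition 5.1 at $r=1$, and your formula for $g$ is right. However, the actual proof, namely deciding exactly which $\tau$ violate (A) or (B), is left as ``bookkeeping''. Your forecast of how it comes out is also wrong at the point where the $m=2$ case is decided.

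\textbf{What (B) actually yields.} Take $m\geqslant2$ and $m\leqslant n-2$. For $\tau=0$, and for every $\tau=(\sigma,\rho)$ with $\sigma\neq0$, (B) is automatic:
\begin{itemize}
\item The points $(y,0)$ with $y\notin\{0,\mathbf{1}\}$ and $\langle\sigma,y\rangle=1$ lie in $A_\tau$ but outside $\tau^{\bot}$.
\item The points of $\tau^{\bot}$ with $x\neq0$, $y\neq\mathbf{1}$ lie in $A_\tau$ and span $\tau^{\bot}$. The single exception is $m=2$, $\sigma=\mathbf{1}$, $\rho=0$, which you check by hand.
\end{itemize}
So (B) only produces, from $\tau=(0,\rho)$ with $\rho\neq0$, the condition $\{x\mid\langle\rho,x\rangle=1\}\nsubseteq W$. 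This is the same for $m=2$ and $m\geqslant3$.

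\textbf{Where the $m=2$ conditions come from.} The extra requirements $W\neq\emptyset$ and ``no linear hyperplane $\rho^{\bot}$ lies in $W$'' cannot be found in (B), contrary to your claim. They come from (A) with $m=2$ and $\sigma=\mathbf{1}$. There $\tau^{\bot}\cap(\mathbb{F}^{E}\times\{0\})=\{0,\mathbf{1}\}\times\{0\}$, and $g|_{\tau^{\bot}}$ is the indicator of $\{(\mathbf{1},x)\mid x\in\rho^{\bot}\setminus W\}$. This is identically $0$ iff $\rho^{\bot}\subseteq W$. For $\rho=0$ and $W=\emptyset$ it equals a coordinate functional $\theta\mapsto\theta_e$ with $e\in E$.

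\textbf{Why your tool for (A) fails.} Restricting $g$ to $\{x=0\}\cap\tau^{\bot}$ cannot detect any of this. On that slice $g$ depends on $W$ only through $g(\mathbf{1},0)=[0\notin W]$, a value your description of the slice omits. The tool also fails in the ``if'' direction at $m=3$. If $\sigma\neq0$, $\langle\sigma,\mathbf{1}\rangle=0$ and $0\in W$, then $\sigma^{\bot}=\{0,\mathbf{1},a,a+\mathbf{1}\}$, and $[y\notin\{0,\mathbf{1}\}]$ is a linear functional on it. So (A) must be handled on all of $\tau^{\bot}$.

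\textbf{How to repair it.} One option uses the fact that $g|_{\tau^{\bot}}$ is linear iff $\tau^{\bot}\setminus L$ is empty or an affine hyperplane of $\tau^{\bot}$. Since $\mathbf{H}\setminus L\subseteq\{x=0\}\cup\{y=\mathbf{1}\}$, an affine subspace of dimension $n-2$ inside this union must lie in one piece. Two parallel halves would need a common direction space inside $(\mathbb{F}^{E}\times\{0\})\cap(\{0\}\times\mathbb{F}^{\Omega-E})=\{0\}$, forcing $n=3$. Lying in $\{x=0\}$ forces it to be $\{x=0\}\ni0$, which is impossible. Lying in $\{y=\mathbf{1}\}$ forces $m\leqslant2$. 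This shows (A) is automatic for $m\geqslant3$ and leaves exactly the $\sigma=\mathbf{1}$ cases at $m=2$. The other option is to follow the paper: show $\langle L\cap V\rangle_{\mathbb{F}}=V$ and $V\nsubseteq L$ for every hyperplane $V$ (Corollary 5.1).

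Until this case analysis is carried out, with (A) treated on the full hyperplane, neither direction of the theorem is established.
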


\setlength{\parindent}{0em}
\begin{remark}
Our approach towards the minimality of $\mathcal{C^{'}}(g)$ is different from those of \cite{28} in the sense that in \cite{28}, the minimality of $\mathcal{C^{'}}(g)$ is explored based on (1.3) by analyzing the Hamming weights of codewords, and our approach is based on Corollary 3.5 and uses cutting blocking sets .
\end{remark}

\setlength{\parindent}{2em}
We first prove Theorem 5.2, and we begin with the following lemma.

\setlength{\parindent}{0em}
\begin{lemma}
Let $V\leqslant_{\mathbb{F}}\mathbf{H}$. Then, there exists $A\leqslant_{\mathbb{F}}\delta(E)$, $B\leqslant_{\mathbb{F}}\delta(\Omega-E)$, $I\leqslant_{\mathbb{F}}\delta(E)$ and $\tau\in\Hom_{\mathbb{F}}(I,\delta(\Omega-E))$ such that $A\cap I=\{0\}$, $\tau$ is injective, $\tau[I]\cap B=\{0\}$ and
$$V=A+B+\{\lambda+\tau(\lambda)\mid\lambda\in I\}.$$
Moreover, if $\langle S\cap V\rangle_{\mathbb{F}}\neq V$, then one of the following three conditions holds:
\begin{equation}B\neq\{0\},~I=\{0\},~\dim_{\mathbb{F}}(A)=1,~\chi(A)=E,\end{equation}
\begin{equation}B\neq\{0\},~\dim_{\mathbb{F}}(I)=1,~A=\{0\},~\chi(I)=E,\end{equation}
\begin{equation}B=\{0\},~\dim_{\mathbb{F}}(I)\leqslant1.\end{equation}
\end{lemma}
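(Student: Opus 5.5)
The proof has two parts: an explicit construction of the decomposition, then a contrapositive argument reduced to a statement about hyperplanes.

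\emph{Decomposition.} I will build it from $V$ directly. Put $A\triangleq V\cap\delta(E)$ and $B\triangleq V\cap\delta(\Omega-E)$, so $A\cap B=\{0\}$. Choose a complement $Z$ of $A+B$ in $V$, and write each $z\in Z$ as $z=z_E+z_{\Omega-E}$ with $z_E\in\delta(E)$ and $z_{\Omega-E}\in\delta(\Omega-E)$. Then set $I\triangleq\{z_E\mid z\in Z\}$ and $\tau(z_E)\triangleq z_{\Omega-E}$.
\begin{itemize}
\item $\tau$ is well defined: if $z_E=0$ then $z\in B\cap Z=\{0\}$.
\item $\tau$ is injective: if $z_{\Omega-E}=0$ then $z\in A\cap Z=\{0\}$.
\item $A\cap I=\{0\}$: if $z_E\in A\subseteq V$, then $z-z_E\in V\cap\delta(\Omega-E)=B$, so $z\in A+B$ and hence $z=0$.
\item $\tau[I]\cap B=\{0\}$: this follows symmetrically.
\end{itemize}
Finally, $V=A+B+Z$, and $Z=\{\lambda+\tau(\lambda)\mid\lambda\in I\}$.

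\emph{Reduction for the ``moreover'' part.} I argue the contrapositive. If $\langle S\cap V\rangle_{\mathbb{F}}\neq V$, there is a hyperplane $H$ of $V$ with $S\cap V\subseteq H$. Every element of $V-H$ is then outside $S$, so by (5.1) its support either lies in $E$ or contains $E$. An element of $B-\{0\}$ does neither, since its support is nonempty, lies in $\Omega-E$, and $E\neq\emptyset$. Hence $B\subseteq H$.

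Next, consider the map $\pi: V\to\delta(E)$ given by $v\mapsto v_E$. It is injective on $A+Z$, its kernel is $B$, and its image is $Y\triangleq A\oplus I$. Since $B\subseteq H$, $H'\triangleq\pi[H]$ is a hyperplane of $Y$. For $y=a+\lambda$ with $a\in A$ and $\lambda\in I$, the coset $\pi^{-1}(y)$ consists of the vectors $a+\lambda+\tau(\lambda)+b$ with $b\in B$. The condition becomes the following: for every $y\in Y-H'$ and every $b\in B$, the vector $a+\lambda+\tau(\lambda)+b$ lies in $A$ or satisfies $E\subseteq\supp$. The latter is equivalent to $\supp(y)=E$.

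\emph{Key tool and case $B\neq\{0\}$.} I will use a ``coordinate‑killing'' trick. If $x\notin H'$ and $0\neq h\in H'$, pick $e\in E$ with $h_e\neq0$ and set $c=-x_e/h_e\in\mathbb{F}$. Then $x+ch\notin H'$, but its $e$‑coordinate vanishes, so $\supp(x+ch)\neq E$. This works over every field, including $q=2$.

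Suppose $B\neq\{0\}$. The option ``lies in $A$'' is then unavailable for elements with $b\neq0$, because $a+b\in A$ would force $b=0$. So every $y\in Y-H'$ must have $\supp(y)=E$. The trick applied whenever $H'\neq\{0\}$ gives a contradiction, so $\dim Y=1$. Thus either $I=\{0\}$, $\dim A=1$ with $\chi(A)=E$, which is (5.4); or $A=\{0\}$, $\dim I=1$ with $\chi(I)=E$, which is (5.5).

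\emph{Case $B=\{0\}$ (main obstacle).} Here elements of $A$ are harmless, and the constraint is only that every $y=a+\lambda\in Y-H'$ with $\lambda\neq0$ has $\supp(y)=E$. I must show $\dim I\leqslant1$, giving (5.6). Suppose $\dim I\geqslant2$. The difficulty is that the perturbed vector $x+ch$ must still have nonzero $I$‑component. I plan to handle this as follows.
\begin{itemize}
\item First I will take $x\in Y-H'$ with nonzero $I$‑part. Such an $x$ exists: if $A\not\subseteq H'$, add to any $x\notin H'$ an element of $I\cap H'$; if $A\subseteq H'$, then $I\not\subseteq H'$.
\item Next I will take $h\in H'\cap I$ nonzero, which exists because $\dim(H'\cap I)\geqslant\dim I-1\geqslant1$, and choose $h$ not proportional to the $I$‑part of $x$ where possible.
\end{itemize}
Then $x+ch$ keeps a nonzero $I$‑component, and killing a coordinate $e$ with $h_e\neq0$ gives the contradiction. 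The delicate subcase, where all of $H'\cap I$ is proportional to the $I$‑part of $x$, will need a check that one can swap in another $x'\notin H'$, for instance $x+i$ with $i\in I$ chosen suitably.
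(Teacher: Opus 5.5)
\paragraph{The gap: the case $B=\{0\}$.}
Your decomposition is correct. So is the reduction: since $B-\{0\}\subseteq S$, any hyperplane $H\supseteq S\cap V$ contains $B$, it descends to a hyperplane $H'$ of $Y=A\oplus I\cong V/B$, and the hypothesis becomes a condition on $Y-H'$. The case $B\neq\{0\}$ is then complete.

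The case $B=\{0\}$ is not finished. You leave open the subcase where $H'\cap I$ is spanned by the $I$-part $x_I$ of your chosen $x$, and the patch you propose fails for $q=2$. In that subcase $\dim_{\mathbb{F}}(H'\cap I)=1$, so $\dim_{\mathbb{F}}(I)=2$ and $I\nsubseteq H'$. Take any $i\in I$. If $i\in H'\cap I=\langle x_I\rangle_{\mathbb{F}}$, then $x+i$ still has $I$-part in $\langle x_I\rangle_{\mathbb{F}}$. If $i\notin H'$ and $q=2$, then $x+i\in H'$, because $Y/H'\cong\mathbb{F}_2$ and $x,i$ both represent the nonzero class. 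So for $q=2$ no replacement of the form $x+i$ with $i\in I$ exists.

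\paragraph{The fix.}
Choose $x$ according to whether $I\subseteq H'$.

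\emph{If $I\nsubseteq H'$:} take $x\in I-H'$ and $0\neq h\in H'\cap I$. These are linearly independent, so $y=x+ch$ with $c=-x_e/h_e$ (where $h_e\neq0$) is a nonzero element of $I-H'$ with $y_e=0$. Then $y+\tau(y)\in(S\cap V)-H$, a contradiction.

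\emph{If $I\subseteq H'$:} then $A\nsubseteq H'$. Take $x=a+\lambda$ with $a\in A-H'$ and $0\neq\lambda\in I$. Take $h\in I$ linearly independent of $\lambda$; this is possible since $\dim_{\mathbb{F}}(I)\geqslant2$, and $h\in H'$ automatically. Then $x+ch\notin H'$, its $I$-part $\lambda+ch$ is nonzero, and its $e$-coordinate vanishes. This again contradicts the constraint.

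Hence $\dim_{\mathbb{F}}(I)\leqslant1$, which is (5.6).

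\paragraph{Comparison with the paper.}
With this repair your argument is correct and genuinely different from the paper's. The paper builds the decomposition from the isomorphism $\pi_1[V]/A\to\pi_2[V]/B$, which is equivalent to your complement $Z$. It then proves the ``moreover'' part constructively in three steps, each time exhibiting elements of $S\cap V$ that would span $V$. Its last step, excluding $A\neq\{0\}$ together with $\chi(I)=E$, needs an ad hoc argument with two vectors $\eta_1,\eta_2$ of support strictly inside $E$ spanning $A+I$.

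Your dual formulation collapses this into a single condition on $Y-H'$, handled by one coordinate-killing trick. In particular, your $B\neq\{0\}$ case replaces the paper's second and third steps in a few lines. Because your reduction is an equivalence, it also shows that (5.4) and (5.5) are sufficient for $\langle S\cap V\rangle_{\mathbb{F}}\neq V$. The paper's constructive style, in turn, mirrors how the lemma is used in the proof of Theorem 5.2, where explicit elements of $(S\cup T)\cap V$ are produced.
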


\begin{proof}
Let $\pi_1\in\Hom_{\mathbb{F}}(\mathbf{H},\delta(E))$, $\pi_2\in\Hom_{\mathbb{F}}(\mathbf{H},\delta(\Omega-E))$ such that $\alpha=\pi_1(\alpha)+\pi_2(\alpha)$ for all $\alpha\in\mathbf{H}$. Let $A=V\cap\delta(E)$, $B=V\cap\delta(\Omega-E)$, $C=\pi_1[V]$, $D=\pi_2[V]$. Then, there exists an $\mathbb{F}$-isomorphism $\varphi:C/A\longrightarrow D/B$ such that $V=\{\gamma+\theta\mid\gamma\in C,\theta\in D,\varphi(\gamma+A)=\theta+B\}$. Now let $I\leqslant_{\mathbb{F}}C$, $J\leqslant_{\mathbb{F}}D$ such that $C=A+I$, $D=B+J$, $A\cap I=B\cap J=\{0\}$, and define $\tau:I\longrightarrow J$ as $\varphi(\lambda+A)=\tau(\lambda)+B$ for all $\lambda\in I$. One can then check that $\tau:I\longrightarrow J$ is an $\mathbb{F}$-isomorphism satisfying $V=A+B+\{\lambda+\tau(\lambda)\mid\lambda\in I\}$, as desired.

\hspace*{4mm}\,\,From now on, we assume that $\langle S\cap V\rangle_{\mathbb{F}}\neq V$. Noting that $\delta(\Omega-E)-\{0\}\subseteq S$, we have $B-\{0\}\subseteq S\cap V$ and $B\subseteq\langle S\cap V\rangle_{\mathbb{F}}$. Now we prove the statement in the following three steps.

\hspace*{4mm}\,\,First, we show that $\dim_{\mathbb{F}}(I)\leqslant1$. Suppose by way of contradiction that $\dim_{\mathbb{F}}(I)\geqslant2$. Then, we have $I=\langle\{\lambda\in I-\{0\}\mid\supp(\lambda)\subsetneqq E\}\rangle_{\mathbb{F}}$. For any $\lambda\in I-\{0\}$ with $\supp(\lambda)\subsetneqq E$, noticing that $\tau(\lambda)\in\delta(\Omega-E)-\{0\}$, we have $\lambda+\tau(\lambda)\in S\cap V$. It then follows that $\{\lambda+\tau(\lambda)\mid\lambda\in I\}\subseteq\langle S\cap V\rangle_{\mathbb{F}}$. Next, consider $\alpha\in A$. Since $\dim_{\mathbb{F}}(I)\geqslant2$, we can choose $\lambda\in I-\{0\}$ such that $\supp(\alpha+\lambda)\subsetneqq E$. Noticing that $\tau(\lambda)\in\delta(\Omega-E)-\{0\}$, we have $\alpha+\lambda+\tau(\lambda)\in S\cap V$, which further implies that $\alpha\in\langle S\cap V\rangle_{\mathbb{F}}$. The above discussion implies that $\langle S\cap V\rangle_{\mathbb{F}}=V$, a contradiction, as desired. Therefore if $B=\{0\}$, then (5.6) holds. Hence in what follows, we assume that $B\neq\{0\}$.

\hspace*{4mm}\,\,Second, we show that if $A\neq\{0\}$, then it holds that $\dim_{\mathbb{F}}(A)=1$, $\chi(A)=E$ and $A\nsubseteq\langle S\cap V\rangle_{\mathbb{F}}$. Suppose by way of contradiction that $A\subseteq\langle S\cap V\rangle_{\mathbb{F}}$. Consider an arbitrary $\lambda\in I-\{0\}$. Since $A\neq\{0\}$, we can choose $\alpha\in A$ such that $\supp(\alpha+\lambda)\subsetneqq E$. Noticing that $\tau(\lambda)\in\delta(\Omega-E)-\{0\}$, we have $\alpha+\lambda+\tau(\lambda)\in S\cap V$, which, together with $\alpha\in\langle S\cap V\rangle_{\mathbb{F}}$, implies that $\lambda+\tau(\lambda)\in\langle S\cap V\rangle_{\mathbb{F}}$. It then follows that $\langle S\cap V\rangle_{\mathbb{F}}=V$, a contradiction. Therefore we have $A\nsubseteq\langle S\cap V\rangle_{\mathbb{F}}$, as desired. Now suppose by way of contradiction that either $\dim_{\mathbb{F}}(A)\geqslant2$ or $\chi(A)\neq E$ holds. Then, we have $A=\langle\{\alpha\in A\mid\supp(\alpha)\subsetneqq E\}\rangle_{\mathbb{F}}$. Since $B\neq\{0\}$, we can choose $\beta\in B-\{0\}$. For any $\alpha\in A$ with $\supp(\alpha)\subsetneqq E$, it follows from $\alpha+\beta\in S\cap V$ and $\beta\in S\cap V$ that $\alpha\in\langle S\cap V\rangle_{\mathbb{F}}$. It then follows that $A\subseteq\langle S\cap V\rangle_{\mathbb{F}}$, a contradiction, as desired. We then conclude that if $I=\{0\}$, then (5.4) holds.

\hspace*{4mm}\,\,Third, we assume that $I\neq\{0\}$ and establish (5.5). By $\dim_{\mathbb{F}}(I)\leqslant1$, we have $\dim_{\mathbb{F}}(I)=1$. Suppose by way of contradiction that $\chi(I)\subsetneqq E$. For any $\lambda\in I-\{0\}$, from $\supp(\lambda)\subsetneqq E$ and $\tau(\lambda)\in\delta(\Omega-E)-\{0\}$, we deduce that $\lambda+\tau(\lambda)\in S\cap V$. This, together with $B\subseteq\langle S\cap V\rangle_{\mathbb{F}}$ and $\langle S\cap V\rangle_{\mathbb{F}}\neq V$, implies that $A\nsubseteq\langle S\cap V\rangle_{\mathbb{F}}$. Therefore we have $\dim_{\mathbb{F}}(A)=1$, $\chi(A)=E$ and $A\cap\langle S\cap V\rangle_{\mathbb{F}}=\{0\}$. Let $\alpha\in A-\{0\}$. Then, we have $\supp(\alpha)=E$. Since $I\neq\{0\}$, we can choose $\lambda\in I-\{0\}$ such that $\supp(\alpha+\lambda)\subsetneqq E$. Noticing that $\tau(\lambda)\in\delta(\Omega-E)-\{0\}$, we have $\alpha+\lambda+\tau(\lambda)\in S\cap V$, which, together with $\lambda+\tau(\lambda)\in S\cap V$, implies that $\alpha\in\langle S\cap V\rangle_{\mathbb{F}}$, a contradiction, as desired. Hence we have $\chi(I)=E$, and it remains to show that $A=\{0\}$. Suppose by way of contradiction that $A\neq\{0\}$. Then, we have $\dim_{\mathbb{F}}(A)=1$, $\chi(A)=E$ and $A\cap\langle S\cap V\rangle_{\mathbb{F}}=\{0\}$. Consider the $2$-dimensional $\mathbb{F}$-subspace $A+I$. We can choose $\eta_1,\eta_2\in A+I$ such that $\supp(\eta_1)\subsetneqq E$, $\supp(\eta_2)\subsetneqq E$ and $A+I=\langle\{\eta_1,\eta_2\}\rangle_{\mathbb{F}}$. We note that $\eta_1\not\in A$, $\eta_2\not\in A$. Let $\lambda\in I-\{0\}$. Then, there exists $a,b\in\mathbb{F}$ such that $a\eta_1-\lambda\in A$, $b\eta_2-\lambda\in A$. Therefore we have $a\eta_1-b\eta_2\in A$, $a\eta_1+\tau(\lambda)\in V$, $b\eta_2+\tau(\lambda)\in V$. Since $\supp(a\eta_1)\subsetneqq E$, $\supp(b\eta_2)\subsetneqq E$, $\tau(\lambda)\in\delta(\Omega-E)-\{0\}$, we have $a\eta_1+\tau(\lambda)\in S$, $b\eta_2+\tau(\lambda)\in S$, which yields that $a\eta_1-b\eta_2\in\langle S\cap V\rangle_{\mathbb{F}}$. Noticing that $a,b\neq0$ as $\lambda\not\in A$, we have $0\neq a\eta_1-b\eta_2\in A\cap\langle S\cap V\rangle_{\mathbb{F}}$, a contradiction, as desired.
\end{proof}

\setlength{\parindent}{2em}
Now we are ready to prove Theorem 5.2.

\begin{proof}[Proof of Theorem 5.2]First, suppose that (1) holds. Let $V\leqslant_{\mathbb{F}}\mathbf{H}$ such that $\dim_{\mathbb{F}}(V)=n-r$, and write $V=A+B+\{\lambda+\tau(\lambda)\mid\lambda\in I\}$, where $A,B,I,\tau$ are defined as in Lemma 5.1. By $r+1\leqslant n-m$, we have $\dim_{\mathbb{F}}(B)\geqslant n-r-m\geqslant1$, which implies that $B\neq\{0\}$. By Lemma 5.1, we can assume that either (5.4) or (5.5) holds. Therefore we have $\dim_{\mathbb{F}}(B)=n-r-1$. Now we discuss in two cases. First, suppose that (5.4) holds. Let $\alpha\in A-\{0\}$. Then, we have $\supp(\alpha)=E$. Since $\dim_{\mathbb{F}}(B)=n-r-1$, there exists $\beta\in B$ such that $|\supp(\beta)|\geqslant n-r-1$. We note that $E\subseteq\supp(\alpha+\beta)$; moreover, for any $t\in\Lambda$, it follows from $|N_t|\leqslant n-r-2$ that $\supp(\alpha+\beta)-E=\supp(\beta)\nsubseteq N_t$. It follows that $\alpha+\beta\in T\cap V$, which, together with $\beta\in\langle S\cap V\rangle_{\mathbb{F}}$, implies that $\alpha\in\langle (S\cup T)\cap V\rangle_{\mathbb{F}}$. This, together with $B\subseteq\langle S\cap V\rangle_{\mathbb{F}}$ and $I=\{0\}$, implies that $V\subseteq\langle (S\cup T)\cap V\rangle_{\mathbb{F}}$, as desired. Second, suppose that (5.5) holds. Let $\lambda\in I-\{0\}$. Then, we have $\supp(\lambda)=E$. Since $\dim_{\mathbb{F}}(B)=n-r-1$, there exists $\beta\in B$ such that $|\supp(\tau(\lambda)+\beta)|\geqslant n-r-1$. We note that $E\subseteq\supp(\lambda+\tau(\lambda)+\beta)$; moreover, for any $t\in\Lambda$, it follows from $|N_t|\leqslant n-r-2$ that $\supp(\lambda+\tau(\lambda)+\beta)-E=\supp(\tau(\lambda)+\beta)\nsubseteq N_t$. It follows that $\lambda+\tau(\lambda)+\beta\in T\cap V$, which, together with $\beta\in\langle S\cap V\rangle_{\mathbb{F}}$, implies that $\lambda+\tau(\lambda)\in\langle (S\cup T)\cap V\rangle_{\mathbb{F}}$. It then follows from $B\subseteq\langle S\cap V\rangle_{\mathbb{F}}$ and $A=\{0\}$ that $V\subseteq\langle (S\cup T)\cap V\rangle_{\mathbb{F}}$, as desired.

Next, suppose that (2) holds. By $\Lambda=\emptyset$, we have $T=\{\alpha\in\mathbf{H}\mid E\subseteq\supp(\alpha)\}$ and $\mathbf{H}-\delta(E)\subseteq S\cup T$. Let $V\leqslant_{\mathbb{F}}\mathbf{H}$ with $\dim_{\mathbb{F}}(V)=n-r$. If $V\nsubseteq\delta(E)$, then we have $V=\langle V-\delta(E)\rangle_{\mathbb{F}}\subseteq\langle (S\cup T)\cap V\rangle_{\mathbb{F}}$, as desired. If $V\subseteq\delta(E)$, then it follows from $n-r=m$ that $V=\delta(E)$, which implies that $(S\cup T)\cap V=\{\alpha\in\mathbf{H}\mid \supp(\alpha)=E\}$. It then follows from either $|E|=1$ or $|\mathbb{F}|\neq2$ that $V=\langle (S\cup T)\cap V\rangle_{\mathbb{F}}$, as desired.

Now we prove the ``only if'' part. We begin by noting that $n-m\geqslant r$. Indeed, if $n-m\leqslant r-1$, then there exists $A\leqslant_{\mathbb{F}}\delta(E)$ such that $\dim_{\mathbb{F}}(A)=n-r$, $\chi(A)\subsetneqq E$; moreover, it is straightforward to verify that $A\cap(S\cup T)=\emptyset$, and hence $\langle A\cap(S\cup T)\rangle_{\mathbb{F}}=\{0\}\neq A$, a contradiction, as desired.

Next, we show that $|N_t|\leqslant n-r-2$ for all $t\in\Lambda$. Suppose by way of contradiction that $|N_j|\geqslant n-r-1$ for some $j\in\Lambda$. Then, we can choose $B\leqslant_{\mathbb{F}}\delta(N_j)$ such that $\dim_{\mathbb{F}}(B)=n-r-1$; moreover, we can choose $A\leqslant_{\mathbb{F}}\delta(E)$ such that $\dim_{\mathbb{F}}(A)=1$, $\chi(A)=E$. Then, $V\triangleq A+B$ is an $(n-r)$-dimensional $\mathbb{F}$-subspace of $\mathbf{H}$. For $\alpha\in A-\{0\}$ and $\beta\in B$, it follows from $\supp(\alpha)=E$ that $E\subseteq\supp(\alpha+\beta)$ and $\supp(\alpha+\beta)-E=\supp(\beta)\subseteq N_j$, which implies that $\alpha+\beta\not\in S\cup T$. Therefore we have $V\cap(S\cup T)\subseteq B$ and hence $\langle V\cap(S\cup T)\rangle_{\mathbb{F}}\subseteq B\subsetneqq V$, a contradiction, as desired.

Finally, suppose that $n-m=r$. Then, by $\dim_{\mathbb{F}}(\delta(E))=m=n-r$, we have $\delta(E)=\langle (S\cup T)\cap\delta(E)\rangle_{\mathbb{F}}=\langle T\cap\delta(E)\rangle_{\mathbb{F}}$, which implies that $T\cap\delta(E)\neq\emptyset$, which further implies that $\Lambda=\emptyset$, as desired. Now one can check that $T\cap\delta(E)=\{\alpha\in\mathbf{H}\mid\supp(\alpha)=E\}$, which, together with $\delta(E)=\langle T\cap\delta(E)\rangle_{\mathbb{F}}$, further implies that either $m=|E|=1$ or $q=|\mathbb{F}|\neq2$, as desired.
\end{proof}

\setlength{\parindent}{2em}
Next, we derive Theorem 5.1 from Theorem 5.2.

\begin{proof}[Proof of Theorem 5.1]
In Theorem 5.2, set $\Lambda=\{t\}$ and $N_t=\Omega-E$. Then, we have $T=\emptyset$, and hence the desired result follows from the fact that $|N_t|\leqslant n-r-2\Longleftrightarrow m\geqslant r+2$.
\end{proof}

\setlength{\parindent}{2em}
Now we proceed to prove Theorem 5.4. We begin with the following general necessary and sufficient condition for any Boolean function $g:\mathbf{H}\longrightarrow \mathbb{F}$ to be a cutting $r$-blocking set of $\mathbf{H}\times \mathbb{F}$.

\setlength{\parindent}{0em}
\begin{proposition}
Suppose that $q=2$. Let $g:\mathbf{H}\longrightarrow \mathbb{F}$ with $g^{-1}[\{0\}]=L$, and let $r\in\{0,1,\dots,n\}$. Then, $g$ is a cutting $r$-blocking set of $\mathbf{H}\times \mathbb{F}$ if and only if the following three conditions hold:

{\bf{(1)}}\,\,For any $A\leqslant_{\mathbb{F}}\mathbf{H}$ with $\dim_{\mathbb{F}}(A)=n-r$, it holds that $A\nsubseteq L$, and that $L\cap A$ is not an $(n-r-1)$-dimensional $\mathbb{F}$-subspace of $\mathbf{H}$;

{\bf{(2)}}\,\,For any $B\leqslant_{\mathbb{F}}\mathbf{H}$ with $\dim_{\mathbb{F}}(B)=n-r+1$, it holds that $\langle L\cap B\rangle_{\mathbb{F}}=B$;

{\bf{(3)}}\,\,For any $A,B\leqslant_{\mathbb{F}}\mathbf{H}$ such that $\dim_{\mathbb{F}}(A)=n-r$, $\dim_{\mathbb{F}}(B)=n-r+1$, $A\subseteq B$, it holds that $\langle L\cap A\rangle_{\mathbb{F}}+\langle B-(L\cup A)\rangle_{\mathbb{F}}=B$.
\end{proposition}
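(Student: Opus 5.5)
We identify $g$ with its graph $\{(\theta,g(\theta))\mid\theta\in\mathbf{H}\}\subseteq\mathbf{H}\times\mathbb{F}$, which has dimension $n+1$. By Definition 2.2 we must test $\langle g\cap V\rangle_{\mathbb{F}}=V$ for every $V\leqslant_{\mathbb{F}}\mathbf{H}\times\mathbb{F}$ with $\dim_{\mathbb{F}}(V)=n+1-r$. The plan is to classify such $V$ by the projection $\pi:\mathbf{H}\times\mathbb{F}\longrightarrow\mathbf{H}$. Either $V\supseteq\{0\}\times\mathbb{F}$, or $\pi|_V$ is injective. Note that both kinds of $V$ project onto a space of dimension $n-r$ or $n-r+1$.

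\textbf{Case 1: $V\supseteq\{0\}\times\mathbb{F}$.} Then $V=A\times\mathbb{F}$ with $A=\pi[V]$ and $\dim_{\mathbb{F}}(A)=n-r$. Since $q=2$, we have $g\cap V=\{(\theta,0)\mid\theta\in A\cap L\}\cup\{(\theta,1)\mid\theta\in A-L\}$. Its span is $V$ exactly when two things hold: $A-L\neq\emptyset$, so that the last coordinate is reached; and the projections of the pairwise differences together with $A\cap L$ span $A$. Equivalently, the span of $\langle A\cap L\rangle$ and of all sums $\theta+\theta'$ with $\theta,\theta'\in A-L$ equals $A$.

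\textbf{Case 2: $\pi|_V$ injective.} Then $V$ is the graph of a linear functional $h$ on $B=\pi[V]$ with $\dim_{\mathbb{F}}(B)=n-r+1$. Put $A=\ker h$, so either $A=B$ ($h=0$) or $A$ is a hyperplane of $B$. In this case $g\cap V$ corresponds, via $\pi$, to $\{\theta\in B\mid g(\theta)=h(\theta)\}=(L\cap A)\cup(B-(L\cup A))$. When $h=0$ the condition reads $\langle L\cap B\rangle_{\mathbb{F}}=B$, which is (2). When $h\neq0$ it reads $\langle L\cap A\rangle_{\mathbb{F}}+\langle B-(L\cup A)\rangle_{\mathbb{F}}=B$, which is (3). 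Every pair $A\subseteq B$ of the stated dimensions arises from a unique such $h$, so Case 2 is equivalent to (2) together with (3).

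\textbf{The main obstacle} is to show that Case 1 is equivalent to (1). The ``only if'' direction is direct. If $A\subseteq L$, the last coordinate is never $1$, so the span misses $(0,1)$. If $L\cap A$ is an $(n-r-1)$-dimensional subspace $A_0$, then $A-L$ is a coset of $A_0$. Sums of two elements of this coset lie in $A_0$, so the span projects into $A_0\subsetneqq A$.

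For the ``if'' direction we assume $A-L\neq\emptyset$ and that the span fails. Fix $\theta_0\in A-L$. The span then contains $(\theta_0,1)$, the vectors $(\theta,0)$ for $\theta\in A\cap L$, and the vectors $(\theta+\theta_0,0)$ for $\theta\in A-L$. Its projection $P$ is a proper subspace of $A$ containing $A\cap L$ and the coset $\theta_0+(A-L)$. Enlarge $P$ to a hyperplane $H$ of $A$. Then $A\cap L\subseteq H$ and $A-L\subseteq\theta_0+H$. Since $A\cap L$ and $A-L$ together exhaust $A$, the set $A\cap L$ is contained in $H$ and contains every element of $H$ not in $\theta_0+H$. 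If $\theta_0\notin H$, then $H$ and $\theta_0+H$ are disjoint, so $A\cap L=H$, an $(n-r-1)$-dimensional subspace, contradicting (1). If $\theta_0\in H$, then $A\cap L\subseteq H$ and $A-L\subseteq H$ give $A\subseteq H$, which is impossible.

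Combining the two cases gives the equivalence. One should also check the boundary values $r=0$ and $r=n$ (where $\dim B=n+1$ is impossible) so that conditions (2) and (3) are read as vacuous when no such $B$ exists.
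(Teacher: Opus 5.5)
Your proof is correct and takes essentially the paper's route: the paper splits the $(n+1-r)$-dimensional subspaces of $\mathbf{H}\times\mathbb{F}$ into the same three families, namely $A\times\mathbb{F}$, $B\times\{0\}$, and graphs of nonzero functionals on $B$ with kernel $A$, matches them to (1), (2), (3), and for the first family simply notes that the span of $\{(x,g(x))\mid x\in A\}$ is $A\times\mathbb{F}$ if and only if $g|_{A}$ is not linear, which your hyperplane argument verifies explicitly. One small slip: $\dim_{\mathbb{F}}(B)=n+1$ is impossible when $r=0$ (not $r=n$), and that is the case in which (2) and (3) are vacuous.
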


\begin{proof}
We begin by noting that $\{W\leqslant_{\mathbb{F}}\mathbf{H}\times \mathbb{F}\mid\dim_{\mathbb{F}}(W)=n+1-r\}=\Delta_1\cup\Delta_2\cup\Delta_3$, where $\Delta_1=\{A\times\mathbb{F}\mid A\leqslant_{\mathbb{F}}\mathbf{H},\dim_{\mathbb{F}}(A)=n-r\}$, $\Delta_2=\{B\times\{0\}\mid B\leqslant_{\mathbb{F}}\mathbf{H},\dim_{\mathbb{F}}(B)=n+1-r\}$ and
$$\Delta_3=\{A\times\{0\}\cup(B-A)\times\{1_{\mathbb{F}}\}\mid A,B\leqslant_{\mathbb{F}}\mathbf{H},\dim_{\mathbb{F}}(A)=n-r,\dim_{\mathbb{F}}(B)=n-r+1,A\subseteq B\}.$$
First, for $A\leqslant_{\mathbb{F}}\mathbf{H}$ with $\dim_{\mathbb{F}}(A)=n-r$, one can check that $g\cap(A\times\mathbb{F})=\{(x,g(x))\mid x\in A\}$, which implies that $\langle g\cap(A\times\mathbb{F})\rangle_{\mathbb{F}}=A\times\mathbb{F}$ if and only if $g\mid_{A}\not\in\Hom_{\mathbb{F}}(A,\mathbb{F})$, if and only if $A\nsubseteq L$, and $L\cap A$ is not an $(n-r-1)$-dimensional $\mathbb{F}$-subspace of $\mathbf{H}$, as desired. Second, for $B\leqslant_{\mathbb{F}}\mathbf{H}$ with $\dim_{\mathbb{F}}(B)=n+1-r$, one can check that $\langle g\cap (B\times\{0\})\rangle_{\mathbb{F}}=\langle B\cap L\rangle_{\mathbb{F}}\times\{0\}$, which implies that $\langle g\cap(B\times\{0\})\rangle_{\mathbb{F}}=B\times\{0\}$ if and only if $\langle B\cap L\rangle_{\mathbb{F}}=B$, as desired. Third, for $A,B\leqslant_{\mathbb{F}}\mathbf{H}$ such that $\dim_{\mathbb{F}}(A)=n-r$, $\dim_{\mathbb{F}}(B)=n-r+1$, $A\subseteq B$, and $W\triangleq A\times\{0\}\cup(B-A)\times\{1_{\mathbb{F}}\}$, one can check that $\langle g\cap W\rangle_{\mathbb{F}}=\{(x,W(x))\mid x\in \langle L\cap A \rangle_{\mathbb{F}}+\langle B-(A\cup L)\rangle_{\mathbb{F}}\}$, which implies that $\langle g\cap W\rangle_{\mathbb{F}}=W$ if and only if $\langle L\cap A \rangle_{\mathbb{F}}+\langle B-(A\cup L)\rangle_{\mathbb{F}}=B$, as desired.
\end{proof}

\setlength{\parindent}{2em}
\begin{corollary}
Suppose that $q=2$. Let $g:\mathbf{H}\longrightarrow \mathbb{F}$ with $g^{-1}[\{0\}]=L$, and let $r\in\{0,1,\dots,n-1\}$. If $g$ is a cutting $r$-blocking set of $\mathbf{H}\times \mathbb{F}$, then for any $A\leqslant_{\mathbb{F}}\mathbf{H}$ with $\dim_{\mathbb{F}}(A)=n-r$ and any $x\in \mathbf{H}$, it holds that $\{x\}+A\nsubseteq L$. Conversely, if $L$ is a cutting $r$-blocking set of $\mathbf{H}$ and $\{x\}+A\nsubseteq L$ for any $A\leqslant_{\mathbb{F}}\mathbf{H}$ with $\dim_{\mathbb{F}}(A)=n-r$ and any $x\in \mathbf{H}$, then $g$ is a cutting $r$-blocking set of $\mathbf{H}\times \mathbb{F}$.
\end{corollary}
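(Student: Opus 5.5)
The plan is to derive both directions from the three conditions of Proposition 5.1, working over $\mathbb{F}=\mathbb{F}_2$ throughout.

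\textbf{Necessity.} Assume $g$ is a cutting $r$-blocking set. Suppose, for contradiction, that $\{x\}+A\subseteq L$ for some $A$ with $\dim_{\mathbb{F}}(A)=n-r$ and some $x\in\mathbf{H}$. There are two cases.
\begin{itemize}
\item If $x\in A$, then $\{x\}+A=A\subseteq L$. This contradicts the first half of condition (1) of Proposition 5.1.
\item If $x\notin A$, set $B=A+\langle\{x\}\rangle_{\mathbb{F}}$, so $\dim_{\mathbb{F}}(B)=n-r+1$ and $A\subseteq B$. Since $q=2$, we have $B-A=\{x\}+A\subseteq L$, hence $B-(L\cup A)=\emptyset$. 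Condition (3) then reduces to $\langle L\cap A\rangle_{\mathbb{F}}=B$. This is impossible, because $\langle L\cap A\rangle_{\mathbb{F}}\subseteq A\subsetneqq B$.
\end{itemize}
The restriction $r\leqslant n-1$ ensures $n-r\geqslant1$, so these subspaces exist and $B$ is a genuine subspace of $\mathbf{H}$.

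\textbf{Sufficiency.} Assume $L$ is a cutting $r$-blocking set of $\mathbf{H}$ and that no coset $\{x\}+A$ with $\dim_{\mathbb{F}}(A)=n-r$ lies in $L$. I verify the three conditions of Proposition 5.1.

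\emph{Condition (2).} Let $\dim_{\mathbb{F}}(B)=n-r+1$. Then $B$ is a union of its $(n-r)$-dimensional subspaces, and for each such subspace $U$ we have $\langle L\cap U\rangle_{\mathbb{F}}=U$ because $L$ is a cutting $r$-blocking set. This is the same argument as in Proposition 2.1. Hence $\langle L\cap B\rangle_{\mathbb{F}}=B$.

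\emph{Condition (1).} Let $\dim_{\mathbb{F}}(A)=n-r$. Taking $x=0$ in the hypothesis gives $A\nsubseteq L$. Also $\langle L\cap A\rangle_{\mathbb{F}}=A$ by the blocking property. So $L\cap A$ cannot be an $(n-r-1)$-dimensional subspace, since its span would then be a proper subspace of $A$.

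\emph{Condition (3).} Let $A\subseteq B$ with dimensions $n-r$ and $n-r+1$. The blocking property gives $\langle L\cap A\rangle_{\mathbb{F}}=A$. The coset $B-A=\{x\}+A$ is not contained in $L$, so $B-(L\cup A)$ contains some $y\notin A$. Therefore $A+\langle\{y\}\rangle_{\mathbb{F}}=B$, and condition (3) holds.

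The main obstacle is only bookkeeping: checking in the necessity case $x\notin A$ that $B-A$ equals the coset $\{x\}+A$, which is exactly where $q=2$ is used. Everything else is a direct translation through Proposition 5.1.
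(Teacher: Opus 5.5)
Your proposal is correct and follows essentially the same route as the paper. Necessity goes through condition (1) of Proposition 5.1 when $x\in A$ and through condition (3) with $B=A+\langle\{x\}\rangle_{\mathbb{F}}$ when $x\notin A$; sufficiency verifies conditions (1)--(3) in the same way, with condition (2) obtained by the Proposition 2.1 argument, which is vacuous when $r=0$.
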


\begin{proof}
First, suppose that $g$ is a cutting $r$-blocking set of $\mathbf{H}\times \mathbb{F}$. Let $A\leqslant_{\mathbb{F}}\mathbf{H}$ with $\dim_{\mathbb{F}}(A)=n-r$, and let $x\in \mathbf{H}$. If $x\in A$, then it follows from Proposition 5.1 that $\{x\}+A=A\nsubseteq L$. If $x\not\in A$, then $B\triangleq A\cup(\{x\}+A)$ is an $(n-r+1)$-dimensional $\mathbb{F}$-subspace of $\mathbf{H}$ with $A\subseteq B$, which, together with Proposition 5.1, implies that $\langle L\cap A\rangle_{\mathbb{F}}+\langle B-(L\cup A)\rangle_{\mathbb{F}}=B$, which further implies that $B-(L\cup A)\neq\emptyset$, and hence $\{x\}+A=B-A\nsubseteq L$, as desired.

Second, we prove the ``conversely'' part. Indeed, Condition (1) of Proposition 5.1 follows from the assumption; moreover, if $r\geqslant1$, then it follows from Proposition 2.1 that $L$ is a cutting $(r-1)$-blocking set of $\mathbf{H}$, which establishes Condition (2) of Proposition 5.1. Next, let $A,B\leqslant_{\mathbb{F}}\mathbf{H}$ such that $\dim_{\mathbb{F}}(A)=n-r$, $\dim_{\mathbb{F}}(B)=n-r+1$, $A\subseteq B$. Then, we have $\langle L\cap A\rangle_{\mathbb{F}}=A$. Let $x\in B-A$. Then, we have $B-A=\{x\}+A\nsubseteq L$, which implies that $B\nsubseteq L\cup A$. It then follows that $\langle B-(L\cup A)\rangle_{\mathbb{F}}\nsubseteq A$, which, together with $\dim_{\mathbb{F}}(A)=\dim_{\mathbb{F}}(B)-1$, implies that $\langle L\cap A\rangle_{\mathbb{F}}+\langle B-(L\cup A)\rangle_{\mathbb{F}}=B$, which further establishes Condition (3) of Proposition 5.1, as desired.
\end{proof}

\setlength{\parindent}{2em}
Now we are ready to prove Theorem 5.4.

\begin{proof}[Proof of Theorem 5.4]
Throughout the proof, we let $L=S\cup T\cup\{0\}$, $\eta\in\mathbf{H}$ with $\supp(\eta)=E$, and $M=\{\alpha\in\delta(\Omega-E)\mid(\alpha_i\mid i\in\Omega-E)\in W\}$. It can be readily verified that $T=\{\eta\}+M$.

First of all, let $V\leqslant_{\mathbb{F}}\mathbf{H}$ such that $\delta(E)\subseteq V$, and let $J=\delta(\Omega-E)\cap V$; moreover, let $x\in\delta(\Omega-E)-J$. \textbf{We claim that $\{x\}+V\subseteq L\Longleftrightarrow\{x\}+J\subseteq M$}. Indeed, for any $\alpha\in\delta(E)-\{\eta\}$ and $\beta\in J$, by $x+\beta\in\delta(\Omega-E)-\{0\}$ and $\supp(\alpha)\subsetneqq E$, we have $\alpha+x+\beta\in S$; moreover, for any $\theta\in J$, by $E\subseteq \supp(\eta+x+\theta)$, we have $\eta+x+\theta\not\in S\cup\{0\}$. It then follows from $V=\delta(E)+J$ that $\{x\}+V\subseteq L$ if and only if $\{\eta+x\}+J\subseteq T$, if and only if $\{x\}+J\subseteq M$, as desired.

Next, we show that if either (1) or (2) holds, then for any $V\leqslant_{\mathbb{F}}\mathbf{H}$ with $\dim_{\mathbb{F}}(V)=n-1$ and any $x\in \mathbf{H}-V$, it holds that $\{x\}+V\nsubseteq L$. We discuss in two cases. First, assume that $\delta(E)\nsubseteq V$. Then, by $\delta(E)+V=\mathbf{H}$, there exists $y\in\delta(E)-V$ such that $\{x\}+V=\{y\}+V$. Since $\dim_{\mathbb{F}}(\delta(E)\cap V)=m-1$, we can choose $\gamma\in\delta(E)\cap V$ such that $|\supp(y-\gamma)|\leqslant1$. From $m\geqslant2$, we deduce that $\supp(y-\gamma)\subsetneqq E$, which implies that $y-\gamma\not\in L$ and $y-\gamma\in\{x\}+V$, as desired. Second, assume that $\delta(E)\subseteq V$. Then, $J\triangleq\delta(\Omega-E)\cap V$ is an $(n-m-1)$-dimensional $\mathbb{F}$-subspace of $\delta(\Omega-E)$. Since $x\not\in V$, we can choose $y\in\delta(\Omega-E)-J$ such that $\{y\}+V=\{x\}+V$. By either (1) or (2), we have $\{y\}+J\nsubseteq M$, which implies that $\{x\}+V=\{y\}+V\nsubseteq L$, as desired.

Now assume that (1) holds. By Theorem 5.2, $L$ is a cutting $1$-blocking set of $\mathbf{H}$. Let $V\leqslant_{\mathbb{F}}\mathbf{H}$ with $\dim_{\mathbb{F}}(V)=n-1$. Since $\dim_{\mathbb{F}}(\delta(E)\cap V)\geqslant m-1$, there exists $\gamma\in(\delta(E)\cap V)-\{0\}$ such that $|\supp(\gamma)|\leqslant2$. From $m\geqslant3$, we deduce that $\supp(\gamma)\subsetneqq E$, which implies that $\gamma\not\in L$, and hence $V\nsubseteq L$. It then follows from Corollary 5.1 that $g$ is a cutting $1$-blocking set of $\mathbf{H}\times\mathbb{F}$, as desired.

Next, assume that (2) holds. We claim that $L$ is a cutting $1$-blocking set of $\mathbf{H}$. Indeed, let $V\leqslant_{\mathbb{F}}\mathbf{H}$ with $\dim_{\mathbb{F}}(V)=n-1$. If $V\neq\{0,\eta\}+\delta(\Omega-E)$, then it follows from Lemma 5.1 that $\langle S\cap V\rangle_{\mathbb{F}}=V$, as desired. If $V=\{0,\eta\}+\delta(\Omega-E)$, then by $W\neq\emptyset$, we can choose $\theta\in M$; moreover, we infer that $\eta+\theta\in T\cap V$, which, together with $\delta(\Omega-E)\subseteq (S\cup\{0\})\cap V$, implies that $\langle L\cap V\rangle_{\mathbb{F}}=V$, as desired. Now we discuss in two cases to show that $V\nsubseteq L$. First, suppose that $\eta\not\in V$. Noticing that $\delta(E)\cap V\neq\{0\}$, we infer that $\emptyset\neq(\delta(E)\cap V)-\{0\}\subseteq V-L$, as desired. Second, suppose that $\eta\in V$. Noticing that $\dim_{\mathbb{F}}(\delta(\Omega-E)\cap V)\geqslant n-3$, by (2), we have $\delta(\Omega-E)\cap V\nsubseteq M$. Hence we can choose $\beta\in (\delta(\Omega-E)\cap V)-M$. We note that $\eta+\beta\not\in T$; moreover, it follows from $E\subseteq\supp(\eta+\beta)$ that $\eta+\beta\not\in S\cup\{0\}$, which, implies that $\eta+\beta\in V-L$, as desired. It then follows from Corollary 5.1 that $g$ is a cutting $1$-blocking set of $\mathbf{H}\times\mathbb{F}$.

By now, we have established the ``if'' part, and we next proceed to establish the ``only if'' part.

First, we show that $2\leqslant m\leqslant n-2$. Indeed, if $m=1$, then we have $S\cup\{0\}=\delta(\Omega-E)$ and $\dim_{\mathbb{F}}(\delta(\Omega-E))=n-1$, a contradiction to (1) of Proposition 5.1, as desired. Now suppose by way of contradiction that $2\leqslant m=n-1$. Let $\theta\in\mathbf{H}$ with $\supp(\theta)=\Omega$. Then, we have $\dim_{\mathbb{F}}(\delta(E))=n-1$, $L\cap\delta(E)\subseteq\{0,\eta\}$ and $\mathbf{H}-(L\cup\delta(E))\subseteq\{\eta,\theta\}$, which further implies that $\langle L\cap\delta(E)\rangle_{\mathbb{F}}+\langle\mathbf{H}-(L\cup\delta(E))\rangle_{\mathbb{F}}\subsetneqq\mathbf{H}$, a contradiction to (3) of Proposition 5.1, as desired.

Next, let $J\leqslant\delta(\Omega-E)$ with $\dim_{\mathbb{F}}(J)=n-m-1$. For $x\in \delta(\Omega-E)-J$ and $V\triangleq\delta(E)+J$, since $\dim_{\mathbb{F}}(V)=n-1$, $\delta(E)\subseteq V$ and $J=\delta(\Omega-E)\cap V$, by Corollary 5.1, we have $\{x\}+V\nsubseteq L$, which implies that $\{x\}+J\nsubseteq M$. Therefore if $3\leqslant m\leqslant n-2$, then (1) holds true. Hence from now on, we suppose that $m=2$. It suffices to show that $W\neq\emptyset$ and $J\nsubseteq M$. Indeed, $A\triangleq\{0,\eta\}+\delta(\Omega-E)$ satisfies that $\dim_{\mathbb{F}}(A)=n-1$ and $A\cap(S\cup\{0\})=\delta(\Omega-E)$; moreover, by $\dim_{\mathbb{F}}(\delta(\Omega-E))=n-2$ and Proposition 5.1, we have $A\cap L\neq\delta(\Omega-E)$. It then follows that $T\neq\emptyset$, and hence $W\neq\emptyset$, as desired. Now choose $\gamma\in\delta(E)-\{0,\eta\}$, $\theta\in\delta(\Omega-E)-J$, and let $X=\{0,\eta\}+J+\{0,\gamma+\theta\}$. For $\beta\in J$, we have $\beta\in S\cup\{0\}$; moreover, it follows from $\supp(\gamma)\subsetneqq E$, $\supp(\eta+\gamma)\subsetneqq E$ and $\theta+\beta\in\delta(\Omega-E)-J$ that $\gamma+\theta+\beta\in S$, $\eta+\gamma+\theta+\beta\in S$. However, by $\dim_{\mathbb{F}}(X)=n-1$ and Corollary 5.1, we have $X\nsubseteq L$, which implies that $\{\eta\}+J\nsubseteq T$, and hence $J\nsubseteq M$, as desired.
\end{proof}

\setlength{\parindent}{2em}
Finally, we derive Theorem 5.3 from Theorem 5.4.

\begin{proof}[Proof of Theorem 5.3]
We begin by noting that $T=\{\alpha\in \mathbf{H}\mid E\subseteq\supp(\alpha),(\alpha_i\mid i\in\Omega-E)\in W\}$, where $W=\{\gamma\in\mathbb{F}^{\Omega-E}\mid\forall~t\in\Lambda:\supp(\gamma)\nsubseteq N_t\}$, and $\supp(\gamma)=\{i\in\Omega-E\mid\gamma_i\neq0\}$
for all $\gamma\in\mathbb{F}^{\Omega-E}$.

Next, we prove the ``only if'' part. By Theorem 5.4, we have $2\leqslant m\leqslant n-2$; moreover, if $m=2$, then we have $W\neq\emptyset$, which implies that $(\forall~t\in\Lambda:N_t\neq\Omega-E)$, as desired. Now consider $i\in\Omega-E$. Let $\gamma\in\mathbb{F}^{\Omega-E}$ with $\supp(\gamma)=\{i\}$, and let $B=\{\rho\in\mathbb{F}^{\Omega-E}\mid\rho_i=0\}$. By Theorem 5.4, we have $\{\gamma\}+B\nsubseteq W$. Hence we can choose $\rho\in B$ such that $\gamma+\rho\not\in W$. Therefore $\supp(\gamma+\rho)\subseteq N_t$ for some $t\in\Lambda$. From $i\in\supp(\gamma+\rho)$, we deduce that $i\in \bigcup_{t\in\Lambda}N_t$, as desired.

Now we prove the ``if'' part. Let $B\leqslant\mathbb{F}^{\Omega-E}$ with $\dim_{\mathbb{F}}(B)=n-m-1$, and let $\gamma\in\mathbb{F}^{\Omega-E}$. Then, we can choose $\beta\in B$ such that $|\supp(\gamma-\beta)|\leqslant|\Omega-E|-\dim_{\mathbb{F}}(B)=1$. Since $\Omega-E=\bigcup_{t\in\Lambda}N_t$, there exists $t\in\Lambda$ such that $\supp(\gamma-\beta)\subseteq N_t$, which implies that $\gamma-\beta\not\in W$, and hence $\{\gamma\}+B\nsubseteq W$. Moreover, if $m=2$, then we have $(\forall~t\in\Lambda:N_t\neq\Omega-E)$, which implies that $W\neq\emptyset$. It then follows from Theorem 5.4 that $g$ is a cutting $1$-blocking set of $\mathbf{H}\times\mathbb{F}$, as desired.
\end{proof}

\end{document}